\newcolumntype{H}{>{\setbox0=\hbox\bgroup}c<{\egroup}@{}}
\newtheorem{observation}[theorem]{Observation}
\newcommand{\REM}[1]{}
\newcommand {\ffullyupdate}{{\sc ff-update } }
\newcommand {\ffullyupdateend}{{\sc ff-update}}
\newcommand {\ffullydynamic}{{\sc ffd} }
\newcommand {\ffullydynamicend}{{\sc ffd}}
\newcommand {\ffullycleanup}{{\sc ff-cleanup} }
\newcommand {\ffullycleanupend}{{\sc ff-cleanup}}
\newcommand {\ffullyfixup}{{\sc ff-fixup} }
\newcommand {\ffullyfixupend}{{\sc ff-fixup}}
\newcommand {\ffullypopulate}{{\sc ff-populate-heap}}
\newcommand {\ffullygetnew}{{\sc ff-new-paths}}
\newcommand {\ffullyccenters}{{\sc ff-cleanup-centers }}
\newcommand {\ffullyccentersend}{{\sc ff-cleanup-centers}}
\newcommand {\ffullyfcentersend}{{\sc ff-fixup-centers}}
\newcommand {\LHT}{LHT }
\newcommand {\LHTe}{LHT}
\newcommand {\HT}{HT }
\newcommand {\LHPe}{LHP}
\newcommand {\NODES}{\mathcal{N}}
\newcommand {\CA}{C}
\newcommand {\DMs}{DL}
\newcommand {\levelgraph}{\Gamma}
\newcommand {\LN}{LN}
\newcommand {\RN}{RN}
\newcommand {\system}{tuple-system }
\newcommand {\systemend}{tuple-system}
\newcommand {\weight} {{\bf {w}}}
\newcommand {\dict} {dict}
\newcommand {\MT} {Marked-Tuples }
\newcommand {\MTend} {Marked-Tuples}
\newcommand {\vstar} {{\nu^*}}
\newcommand{\vone}{\vspace{.1in}}
\newcommand{\vhalf}{\vspace{.05in}}
\newcommand{\DI}{\texttt{DI }}
\newcommand{\PR}{\texttt{PR }}
\newcommand{\NPR}{\texttt{NPRdec }}
\newcommand{\Tho}{\texttt{Thorup }}
\newcommand{\FFD}{{\sc ffd }}
\newcommand{\DIe}{\texttt{DI}}
\newcommand{\PRe}{\texttt{PR}}
\newcommand{\NPRe}{\texttt{NPRdec}}
\newcommand{\Thoe}{\texttt{Thorup}}
\newcommand{\FFDe}{{\sc ffd}}
\newcommand{\BDAGe}{{\sc build-dag}}
\newcommand{\hide}[1]{}
\newcommand{\Xomit}[1]{}
\begin{document}
\title {A Faster Algorithm for Fully Dynamic Betweenness Centrality\thanks{This work was supported in part by NSF grant CCF-1320675. Authors' affiliation: Dept. of Computer Science, University of Texas, Austin, TX 78712.}}
\author {Matteo Pontecorvi and Vijaya Ramachandran}
\institute{
University of Texas at Austin, USA \\ \mailid
}

 \maketitle

\pagenumbering{arabic}

\begin{abstract}
We present a new fully dynamic algorithm for maintaining betweenness centrality (BC) of vertices in a directed
graph $G=(V,E)$ with positive edge weights. BC is a widely used parameter
in the analysis of large complex networks.
We achieve an amortized $O(\vstar^2 \cdot \log^2 n)$ time per update, where  $n = |V| $ and $\vstar$ bounds  the number of distinct edges that lie on shortest paths through any single vertex. This result improves on  the amortized bound for fully dynamic BC in \cite{PR14,PR15i} by a logarithmic factor. Our algorithm  uses  new data structures and techniques that are extensions of the method in the fully dynamic 
algorithm in Thorup \cite{Thorup04} for APSP in graphs with unique shortest paths.
For graphs with $\nu^* = O(n)$, our algorithm matches the fully dynamic APSP
bound in Thorup~\cite{Thorup04}, which holds for graphs with $\nu^* = n-1$, since it assumes unique shortest paths.
\end{abstract} 

\section{Introduction}

Betweenness centrality (BC) is a widely-used measure
in the analysis of large complex networks, and is defined as follows.
Given a directed graph $G=(V,E)$ with $|V|=n$, $|E|=m$ and positive edge weights, let 
$\sigma_{xy}$ denote the number of shortest paths (SPs)
from $x$ to $y$ in $G$, and
$\sigma_{xy}(v)$ the number of SPs from $x$ to $y$ in $G$ that pass through $v$, for each pair $x,y\in V$. 
Then, $BC(v) = \sum_{s \neq v, t \neq v} \frac{\sigma_{st}(v)}{\sigma_{st}}$.
As in \cite{Brandes01}, we assume positive edge weights to avoid the case when cycles of 0 weight are present in the graph. 

The measure $BC(v)$  is often used as an index that  determines the relative importance of
$v$ in $G$, and is computed for all $v\in V$.
Some applications of
BC include analyzing social interaction networks \cite{KA12},
identifying lethality in biological networks \cite{PCW05},
and identifying key actors in terrorist networks \cite{Coffman,Krebs02}.
In the static case,
the widely used algorithm by Brandes~\cite{Brandes01} runs in 
$O(mn + n^2 \log n)$ on weighted graphs.
Several approximation algorithms are available: \cite{BaderKMM07,Riondato} for static computation and, recently, \cite{Bergamini1,Bergamini2} for dynamic computation.
Heuristics for dynamic betweenness centrality with good experimental
performance are given in~\cite{GreenMB12,Lee12,SinghGIS13},
but none provably improve on Brandes.
The only earlier exact  dynamic BC algorithms that provably improve on
Brandes on
some classes of graphs are the recent separate incremental and decremental\footnote{Incremental/decremental 
	refer to the insertion/deletion of a
	vertex or edge; the corresponding weight changes that apply are 
	weight decreases/increases, respectively.}
algorithms in~\cite{NPR14,NPR14b}. Recently, we give in \cite{PR14} (see also \cite{PR15i}) the first fully dynamic algorithm for BC (the \PR method) that is provably faster than  
Brandes for the class of dense graphs 
(where $m$ is close to $n^2$)
with succinct single-source SP dags.
Table~\ref{table1} contains a summary of these results.

In this paper, we present an improved algorithm for computing fully dynamic exact betweenness centrality: 
our algorithm \FFD 
improves over \PR by a logarithmic factor
using data structures and technique that are considerably more complicated.
Our method is a generalization of
Thorup~\cite{Thorup04} (the \Tho method) which computes fully dynamic APSP
 for graphs with a unique SP for every vertex pair; however, a key step in BC algorithms is computig  \emph{all} SPs for each pair of vertices (\emph{all pairs 
 \underline{all} shortest paths --- AP\underline{A}SP}).    We develop a faster fully dynamic algorithm for APASP, which in turn 
  leads to a faster fully dynamic BC algorithm than \PRe.

\vone
\noindent
{\bf Our Results.} 
Let $\vstar$ be the maximum
number of distinct edges that lie on shortest paths through any given vertex in $G$; 
for convenience we assume $\nu^* = \Omega (n)$.

\begin{theorem}\label{th:main}
Let $\Sigma$ be a sequence of $\Omega (n)$ 
fully dynamic vertex updates on a directed $n$-node graph $G=(V,E)$ with positive
edge weights. Let $\nu^*$ bound the number of distinct edges that lie on shortest paths 
through any single vertex in any of the updated graphs or their vertex induced subgraphs. 
Then, all BC scores (and APASP) can be
maintained in amortized time $O( \vstar^2 \cdot \log^2 {n})$ per update with algorithm \ffullydynamicend.
\end{theorem}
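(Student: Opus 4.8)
The plan is to split the theorem into two parts: (a)~a fully dynamic algorithm \ffullydynamicend\ that maintains, for every ordered pair of vertices, a compact encoding of the set of \emph{all} shortest paths between them (APASP) in amortized $O(\vstar^2 \log^2 n)$ time per vertex update; and (b)~a routine that, given the incremental change to this encoding produced by one update, recomputes all $BC$ scores within the same bound. Part~(b) is a Brandes-style dependency accumulation run only over the portions of the single-source shortest-path dags that actually changed, together with bookkeeping of the path counts $\sigma_{st}$ and $\sigma_{st}(v)$; this is essentially the reduction already used in \PRe, so I expect part~(a) to carry almost all of the difficulty --- this is where one must go beyond \Thoe, which maintains only a \emph{unique} shortest path per pair. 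Throughout, a vertex update is treated as a simultaneous change to all edges incident to the updated vertex (vertex/edge insertions and deletions being the special cases with zero/infinite weights).

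For part~(a) I would not store paths individually but would maintain the \systemend: a collection of \emph{tuples} $\tau=(x, xa, by, y)$, each standing for the set of all shortest $x$-$y$ paths whose first edge is $(x,a)$ and whose last edge is $(b,y)$, annotated with the common weight $\ell(\tau)$ and the count $\numpaths(\tau)$ of such paths. In the style of Demetrescu--Italiano's locally shortest paths, call $\tau$ a \emph{locally shortest tuple} if deleting its first edge gives shortest $a$-$y$ paths and deleting its last edge gives shortest $x$-$b$ paths; every such tuple is then ``combined'' from two strictly shorter locally shortest tuples that share a shortest subpath, its count being the product of the corresponding counts. The combinatorial engine of the analysis is the bound implied by $\vstar$: every locally shortest tuple touched by an update at a vertex $v$ can be traced to a pair of edges lying on shortest paths through $v$, so a single update affects only $O(\vstar^2)$ locally shortest tuples. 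I would group the tuples by ``center'' (matching the sub-routines \ffullyccentersend\ and \ffullyfcentersend), keep for each center a balanced search tree of its tuples keyed by weight, and maintain \emph{one} global priority queue of candidate tuples keyed by weight; the logarithmic improvement over \PRe\ is intended to come from collapsing a two-level heap arrangement into this single level, saving a $\log n$ factor per candidate operation.

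The update itself is \ffullydynamicend, which on a vertex update runs \ffullycleanupend\ followed by \ffullyfixupend. \ffullycleanupend\ deletes, bottom-up, every tuple whose weight or count could be affected --- those touching the updated vertex and those built transitively on such tuples --- ``un-combining'' them from the \systemend. \ffullyfixupend\ then rebuilds the locally shortest tuples by a Dijkstra-like sweep: repeatedly extract the minimum-weight candidate from the global heap, test it against the current distance estimates for being (still) locally shortest, and if so insert it and enqueue the new candidates it spawns by one-edge extensions at either end. To get the amortized guarantee I would, exactly as in \Thoe, relax ``locally shortest'' to \emph{locally historical} --- a tuple all of whose minimal sub-structures were shortest at some point since they were last reset --- which the algorithm can maintain lazily, and charge the cost of cleanup and fixup to the creation of locally historical tuples. \textbf{The main obstacle is this amortized analysis:} one must prove that over a sequence of $t=\Omega(n)$ updates the total number of locally historical tuples ever created is $O(t\cdot\vstar^2\cdot\log n)$, so that, with $O(\log n)$ per tuple / heap / search-tree operation, the whole sequence costs $O(t\cdot\vstar^2\cdot\log^2 n)$. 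This amounts to lifting Thorup's potential argument over the timeline of updates from single paths to tuples, the subtlety being that one locally shortest tuple can be built from many distinct historical sub-tuples; one also has to check that the global heap never holds more than $O(\vstar^2)$ relevant candidates per affected center and that stale candidates are discarded in $O(\log n)$ time.

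Finally, once \ffullyfixupend\ terminates, the set of tuples that were inserted or deleted identifies exactly the pairs $(x,y)$ whose shortest-path dag changed, and updates $\sigma_{xy}$ and the local branching counts only there. I would feed these changes into the accumulation step of part~(b): recompute the dependency contributions along the $O(\vstar^2)$ (amortized) affected dag edges while reusing the unchanged ones, which costs $O(\vstar^2\log n)$ and is dominated by part~(a). Combining, each update is handled in amortized $O(\vstar^2\log^2 n)$ time, proving the theorem.
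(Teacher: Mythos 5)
Your high-level skeleton (maintain a tuple-system for APASP via cleanup/fixup, bound the tuples through a vertex by $O(\vstar^2)$, charge work to the creation of locally historical tuples, and let the BC accumulation be dominated by the APASP cost) matches the paper's framing, but it is essentially the framework already present in \NPR and \PRe, and the proposal stops short of the ideas that actually produce the $O(\vstar^2\log^2 n)$ bound. First, you attribute the $\log n$ saving over \PR to ``collapsing a two-level heap arrangement into this single level.'' That is not where the saving comes from. The paper's gain is combinatorial, not a heap reorganization: it maintains an explicit level system of decremental graphs (generalizing \Thoe's level graphs, with a per-triple array $C_\gamma$ distributing a triple's paths across levels), and proves (Observation~\ref{centerlht}) that every LHT generated by \ffullyfixup is an LST \emph{centered in a unique level graph}, hence persists as an LST there until removed. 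This is what reduces the count of triples ever created and touched relative to \PRe, which can combine historical triples not centered in any level graph. Second, you omit the dummy-update mechanism entirely. Without re-centering the $2^k-1$ most recently updated vertices into the newly activated level at each step (and paying $O(\vstar^2\log n)$ per dummy update, $O(r\log n)$ dummy updates in total), the number of locally historical tuples alive in the system is not controlled, and the bound you flag as ``the main obstacle'' --- that only $O(t\cdot\vstar^2\cdot\log n)$ such tuples are ever created --- cannot be established. You name this obstacle but do not resolve it; the paper resolves it precisely through the level system, dummy updates, and the triple-pair counting of Lemmas~\ref{lemma:count2} and~\ref{lem:amor1}.

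There is also a correctness/efficiency gap specific to multiple shortest paths that your Dijkstra-like fixup sweep does not address: when an increase-only update restores historical triples to shortest status, some of them need new one-edge extensions (to vertices inserted while they were non-shortest) and some do not, and different triples of the same weight between the same endpoints can differ in this respect (the paper's ``partial extension problem''). Testing every restored triple is too expensive, and skipping the test misses LSTs. The paper introduces the distance-history lists $\DMs$ and the extension sets $\RN$, $\LN$ (together with the $\beta$ bit) exactly to identify, in time chargeable to newly created triple pairs, which restored triples have a pending extension and in which levels to look for it. This problem does not arise in \Thoe because of unique shortest paths, so it cannot be inherited from Thorup's potential argument; your proposal would need some equivalent device before the fixup sweep is both correct and within budget.
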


Similar to the \PR algorithm, our new algorithm \FFD is provably faster than 
Brandes on dense graphs with succinct single-source SP dags. It also matches the \Tho bound for APSP when $\nu^* = O(n)$.

Our techniques rely on recomputing BC scores using certain data structures related to shortest paths extensions (see Section \ref{sec:fdbc}). These are generalizations of structures introduced by Demetrescu and Italiano  \cite{DI04} for fully dynamic APSP (the \DI method)
and \Thoe, where only one SP is maintained for each pair of vertices. Our generalizations build on the
 \system introduced in~\cite{NPR14b} (the \NPR method) for decremental APASP (see Section \ref{sec:bkg}), which is a method to
 succinctly  represent all of the multiple SPs for every pair of vertices.  Our algorithm also
 builds on the fully dynamic APASP (and BC) algorithm in  \PRe, which runs in amortized $O(\vstar^2 \cdot \log^3n)$ cost per update (see Section \ref{sec:bkg}). Finally, one of the main challenges we address in our current result is to  generalize the `level graphs' of \Tho to the case when different SPs for a given vertex pair can be distributed across multiple levels.

\begin{table}
	\begin{center}
		{\scriptsize
			\begin{tabular}{c | c | H | c | c | c | c | c  }
				\hline
				\textbf{Paper} & \textbf{Year} & \textbf{Space} & \textbf{Time} & \textbf{Weights} & \textbf{Update Type} & \textbf{DR/UN} & \textbf{Result} \\ 
				\hline
				\hline
				Brandes~\cite{Brandes01} & 2001 & $O(m+n)$ & $O(mn)$ & NO & Static Alg. & Both & Exact \\
				Brandes~\cite{Brandes01} & 2001 & $O(m+n)$ & $O(mn+n^2\log n)$ & YES & Static Alg. & Both & Exact \\
				Geisberger~et~al.~\cite{Geisb} & 2007 & -- & Heuristic & YES & Static Alg. & Both & Approx. \\
				Riondato~et~al.~\cite{Riondato} & 2014 & $O(n^2)$ & depends on $\epsilon$ & YES & Static Alg. & Both & $\epsilon$-Approx. \\
				\hline
				\multicolumn{8}{c}{\textbf{Semi Dynamic}} \\
				\hline
				Green~et~al.~\cite{GreenMB12} & 2012 & $O(n^2+mn)$ & $O(mn)$ & NO & Edge Inc. & Both & Exact \\ 
				Kas~et~al.~\cite{KasWCC13} & 2013 & $O(n^2+mn)$ & Heuristic & YES & Edge Inc. & Both & Exact \\ 
				NPR~\cite{NPR14} & 2014 & $O(\vstar \cdot n)$ & $O(\vstar \cdot n)$ & YES & Vertex Inc. & Both & Exact \\ 
				NPRdec~\cite{NPR14b} & 2014 & $O(m^* \cdot \vstar)$ & $O(\vstar^2 \cdot \log n)$ & YES & Vertex Dec. & Both & Exact \\ 
				Bergamini~et~al.~\cite{Bergamini1} & 2015 & depends on $\epsilon$ & depends on $\epsilon$ & YES & Batch (edges) Inc. & Both & $\epsilon$-Approx. \\ 
				\hline 
				\multicolumn{8}{c}{\textbf{Fully Dynamic}} \\
				\hline
				Lee~et~al.~\cite{Lee12} & 2012 & $O(n^2+m)$ & Heuristic & NO & Edge Update & UN & Exact \\
				Singh~et~al.~\cite{SinghGIS13} & 2013 & -- & Heuristic & NO & Vertex Update & UN & Exact \\ 
				Kourtellis+~\cite{Kourtellis} & 2014 & $O(n^2)$ & $O(mn)$ & NO & Edge Update & Both & Exact \\
				Bergamini~et~al.~\cite{Bergamini2} & 2015 & depends on $\epsilon$ & depends on $\epsilon$ & YES & Batch (edges) & UN & $\epsilon$-Approx. \\
				PR~\cite{PR14,PR15i}  & 2015 & $\tilde{O}(m \cdot \vstar)$ & $O(\vstar^2 \cdot \log^3 n)$ & YES & Vertex Update & Both & Exact \\ 
				\textbf{This paper (\FFDe)} & 2015 & $\tilde{O}(m \cdot \vstar)$ & $O(\vstar^2 \cdot \log^2 n)$ & YES & Vertex Update & Both & Exact \\ 
				\hline 
			\end{tabular} 
		}
		\caption{Related results (DR stands for Directed and UN for Undirected)}
	\end{center}
	\label{table1}
\end{table}

\noindent
{\bf Discussion of the parameters $m^*$ and $\vstar$.}
Let $m^*$ be the number of distinct edges in $G$ that lie on shortest paths; $\nu^*$, defined above, is the maximum number of distinct
edges on any single source SP dag. Clearly, $\nu^* \leq m^* \leq m$.\\
- \underline{$m^* \textrm{ vs } m$}: In many cases,  $m^* \ll m$:
	as noted in \cite{KKP93}, in a complete graph ($m=\Theta(n^2$)) where edge weights are chosen from
	a large class of probability distributions, $m^* = O(n \log{n})$
	with high probability. 
	
	\vhalf
	\noindent
- \underline{$\vstar \textrm{ vs } m^*$}: Clearly, $\vstar =O(n)$ in any graph with only a constant number of SPs between every
	pair of vertices. These graphs are called $k$-geodetic \cite{kgeo} (when at most $k$ SPs exists between two nodes), and are well studied in graph theory \cite{bigeo,bandelt,mulder}.
In fact $\vstar = O(n)$ even in some graphs that have an exponential number of SPs between some pairs of vertices.

    In contrast, $m^*$ can be $\Theta (n^2)$  even in some graphs with unique SPs, for example the complete unweighted
    graph $K_n$.
    
    Another type of graph with  $\vstar \ll m^*$ is one with large clusters of nodes
    (e.g., as described by the \emph{planted $\ell$-partition model} \cite{Condon,clusters}).
    Consider a graph $H$ with $k$ clusters of size  $n/k$ (for some constant $k \geq 1$) with
    $\delta < w(e) \leq 2\delta$, for some constant $\delta>0$,  for each edge $e$ in a cluster;
    between the clusters is a sparse
    interconnect. Then
    $m^* = \Omega(n^2)$ but $\vstar= O(n)$. (The  connections between clusters can be arbitrary, thus  BC scores are non-trivial to compute.)
    For all of the above classes of graphs, our BC algorithm
    will run in amortized $\tilde O(n^2)$ 
    time per update ($\tilde O$ hides polylog factors). More generally we have:
    
    \begin{theorem}\label{th:main2}
    	Let $\Sigma$ be a sequence of $\Omega (n)$  updates on 
    	graphs with
    	$O(n)$ distinct edges on shortest paths through any single vertex
    	in any vertex-induced subgraph.
    	Then, all BC scores (and APASP) can be
    	maintained in amortized time $O( n^2 \cdot \log^2 {n})$ per update.
    \end{theorem}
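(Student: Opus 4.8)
The plan is to obtain Theorem~\ref{th:main2} directly from Theorem~\ref{th:main}, with essentially no work beyond lining up the two statements. The observation that drives the argument is that the hypothesis of Theorem~\ref{th:main2} --- that every vertex-induced subgraph of every graph arising in the update sequence $\Sigma$ has only $O(n)$ distinct edges lying on shortest paths through any single vertex --- is exactly the assertion that the parameter $\vstar$ appearing in Theorem~\ref{th:main} satisfies $\vstar = O(n)$ for this sequence.

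First I would record that the standing assumption $\vstar = \Omega(n)$ is without loss of generality in the regime of interest: whenever the relevant part of $G$ is connected, each single-source shortest-path dag spans its component and hence carries at least $n-1$ edges, so $\vstar \geq n-1$; together with the hypothesis $\vstar = O(n)$ this gives $\vstar = \Theta(n)$. (If some graph in the sequence is disconnected one restricts attention to its nontrivial components, which does not affect the asymptotics.)

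Next I would simply invoke Theorem~\ref{th:main} for the sequence $\Sigma$, which by hypothesis consists of $\Omega(n)$ fully dynamic vertex updates, using the bound $\vstar = O(n)$ just established. Substituting $\vstar = O(n)$ into the amortized per-update cost $O(\vstar^2 \cdot \log^2 n)$ of algorithm \ffullydynamicend immediately yields the claimed amortized bound $O(n^2 \cdot \log^2 n)$ per update, simultaneously for the BC scores and for the APASP data.

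There is no genuine obstacle here: all of the technical content --- the new data structures, the generalized level graphs, and the amortized analysis --- resides in Theorem~\ref{th:main}. The only points that need a moment's care are (i) checking that the quantifier ``in any vertex-induced subgraph'' in Theorem~\ref{th:main2} lines up with ``in any of the updated graphs or their vertex induced subgraphs'' in Theorem~\ref{th:main}, which it does since every updated graph is itself among the graphs over which $\vstar$ is measured; and (ii) confirming, for the concrete families cited above ($k$-geodetic graphs, complete graphs with random edge weights, and the planted $\ell$-partition clustered graphs), that each indeed meets the $O(n)$-edges hypothesis --- but these verifications are routine and are needed only to justify the examples, not the theorem itself.
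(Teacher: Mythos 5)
Your proposal is correct and matches the paper's (implicit) treatment: the paper states Theorem~\ref{th:main2} as an immediate consequence of Theorem~\ref{th:main}, obtained by noting that the hypothesis is precisely $\vstar = O(n)$ and substituting into the $O(\vstar^2\cdot\log^2 n)$ bound. The extra remarks you add about the $\vstar=\Omega(n)$ convention and the alignment of the quantifiers are sound but not needed beyond what the paper already assumes.
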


Our algorithm uses $\widetilde{O}(m \cdot \vstar)$ space, extending the $\widetilde{O}(mn)$ 
result
in \DI for APSP. Brandes uses only linear space, but all
known
dynamic algorithms require at least 
$\Omega (n^2)$ space.
	
\vhalf
\noindent
{\bf Overview of the Paper.} 
In Section \ref{sec:fdbc} we describe how to obtain a fully dynamic BC algorithm using a fully dynamic  APASP algorithm. The remaining
sections in the paper are devoted to developing our improved fully dynamic APASP algorithm.
In Section \ref{sec:bkg} we review the \NPRe, \PR and \Tho methods. 
In Section \ref{subsec:fully-impl} we introduce the level \system framework for APASP, with particular reference to  the new data structures specifically developed for our result.
In Section~\ref{sec:algo} we  present our algorithm \FFDe, and we describe its main components in detail.
Section \ref{sec:ffeatures} describes two important challenges that arise when generalizing \Tho to APASP setting; addressing these two challenges is crucial to the correctness and efficiency of our algorithm. 
Finally, in Section \ref{sec:proof} we establish the  amortized time bound of $O({\nu^*}^2 \cdot \log^2 n)$ for \FFD and its correctness. 
\section{Fully Dynamic Betweenness Centrality} \label{sec:fdbc}

The static Brandes algorithm~\cite{Brandes01} computes BC scores in a two phase process. The first phase computes the SP out-dag
for every source through $n$ applications of Dijkstra's algorithm.  The second phase
uses an `accumulation' technique that computes all BC scores  using these SP dags in  $O(n \cdot \nu^*)$ time.

In our fully dynamic algorithm, we will leave the second phase unchanged. For the first phase, we will use the approach 
in the incremental BC algorithm in~\cite{NPR14}, which maintains the SP dags using a very simple and efficient 
incremental algorithm. For decremental and fully dynamic updates the method is more involved, and dynamic APASP is at the heart of
maintaining the SP dags.  Neither the decremental nor our new fully dynamic APASP algorithms  maintain the SP dags explicitly, instead they maintain data structures to update a  collection of {\it tuples} (see Section~\ref{sec:bkg}). We now
describe a very simple method to construct the SP dags from these data structures; this step is not addressed in the decremental
\NPR algorithm and it is only sketched in the fully dynamic \PR algorithm.

For every vertex pair $x,y$, the following sets $R^*(x,y)$, $L^*(x,y)$ (introduced in \DIe) are maintained 
in \NPRe, and in both of our fully dynamic algorithms:

\vhalf
\noindent
- $R^*(x,y)$ contains all nodes $y'$ such that every shortest path $x\rightsquigarrow y$ in $G$ can be extended with the edge $(y,y')$ to generate another shortest path $x\rightsquigarrow y \rightarrow y'$. \\
- $L^*(x,y)$ contains all nodes $x'$ such that every shortest path $x\rightsquigarrow y$ in $G$ can be extended with the edge $(x',x)$ to generate another shortest path $x' \rightarrow x\rightsquigarrow y$.\\
These sets allow us to construct the SP dag for each source $s$ using the following simple algorithm \BDAGe.

\begin{algorithm}
	\begin{algorithmic}[1]
		\FOR {each $t \in V$}
		\FOR {each $u \in R^*(s,t)$}
		\STATE {\bf if} {$D(s,t)+\weight(t,u) = D(s,u)$} {\bf then} add the edge $(t,u)$ to dag($s$) \label{buildDAG:check}
		\ENDFOR
		\ENDFOR
	\end{algorithmic}
	\caption{\BDAGe($G, s, \weight, D)$ \scriptsize ($\weight$ is the weight function; $D$ is the distance matrix)}
	\label{algo:buildDAG}
\end{algorithm}

In our fully dynamic algorithm $R^*$ and $L^*$ will be supersets of the exact collections of nodes defined above, but the check in 
Step~\ref{buildDAG:check} will ensure that only the correct SP dag edges are included. The combined sizes of these $R^*$ and
$L^*$ sets is $O(n \cdot \nu^* \cdot \log n)$ in our \FFD algorithm, hence the
amortized time bound for the overall fully dynamic BC algorithm is dominated by the time bound for fully dynamic APASP.
In the rest of this paper, we will present our fully dynamic APASP algorithm \FFDe.
\section{Background} \label{sec:bkg}
In this section we review prior work upon which we build our results.
For each, we highlight the inherited notation we use and the main ideas we extend. 

\subsection{The \NPR Decremental APASP Algorithm \cite{NPR14b}}
\NPR generalizes the decremental APSP algorithm in \DI \cite{DI04} to obtain a decremental algorithm for APASP and BC.
For the decremental APSP algorithm
\DI develops a novel method to maintain (unique) shortest paths \cite{DI04}. \DI defines an \emph{LSP} as a path where every proper subpath is a shortest path in the graph. By efficiently maintaining all LSPs  after each update, \DI presents an efficient
 decremental APSP, which is then extended to a fully dynamic APSP algorithm with additional tools.
This provides a fully dynamic algorithm for APSP that runs in $O(n^2\log n)$ amortized time per update.  
 \NPR extends this result to APASP by introducing the \emph{\systemend} to replace the need to maintain every SP and LSP in \DIe.
 We now briefly review this system, referring the reader to~\cite{NPR14b} for more details. Let 
$d(x,y)$ denote the shortest path length from $x$ to $y$.

A {\it tuple}, $\tau = (xa, by)$, represents a set of paths in $G$, all with the same weight, and all
of which  use
the same first edge $(x,a)$ and the same last edge $(b,y)$. 
If the paths in $\tau$ are LSPs, then $\tau$ is
an LST (locally shortest tuple), and the
weight of every path
in  $\tau$ is $\weight(x, a)$ + $d(a, b) + \weight(b,y)$. 
In addition, if $d(x, y) = \weight(x, a) + d(a, b) + \weight(b, y)$, then 
$\tau$ is a {\it shortest path tuple (ST)}. 

A {\it triple} $\gamma=(\tau, wt, count)$, represents the tuple $\tau=(xa,by)$ that contains
$count$ number of paths from $x$ to $y$, each with weight
$wt$. 
\NPR  uses triples to succinctly store 
all LSPs and SPs
for each vertex pair
 in $G$.
 
 \noindent
{\it Left Tuple and Right Tuple.}
A left tuple (or $\ell$-tuple), 
$\tau_{\ell} = (xa, y)$,
represents the
set of LSPs from $x$ to $y$, all of which use 
the same first edge $(x,a)$.
A right tuple ($r$-tuple) $\tau_r = (x, by)$ is defined analogously.
For a shortest path $r$-tuple $\tau_r = (x, by)$,  $L({\tau_r})$ is
the set of vertices which can be used as pre-extensions to create  LSTs
in~$G$, and
 for a shortest path $\ell$-tuple $\tau_{\ell} = (xa, y)$,
$R(\tau_{\ell})$ is the set of vertices which can be used as post-extensions to
create LSTs in $G$.

\NPR maintains several other sets such as  $P^*$ and $P$ for each vertex pair. Since our algorithm also maintains generalizations of
these sets, we will discuss them when we present the data structures used by our algorithm in Section~\ref{subsec:fully-impl}.

Similar to \DIe, the \NPR algorithm first deletes all the paths containing the updated node using a procedure {\sc cleanup}, and then updates the \system to maintain all  shortest paths in the graph using procedure {\sc fixup}. 
The main difference, in terms of data structures, is the use of tuples to collect paths that share the same first edge, last edge and weight.

\subsection{The \PR Fully Dynamic APASP Algorithm \cite{PR14}}

In our recent fully dynamic \PR algorithm for APASP \cite{PR14}, 
we build on the \system introduced in \NPRe. \PR also incorporates several elements in the \DI method of extending their decremental 
APSP algorithm to fully dynamic (though some key elements in \PR are significantly different from \DIe). 

One difference between \NPR and \PR is the introduction of HTs and LHTs; these are extension of historical and locally historical paths in \DI to tuples and triples defined as follows (THTs and TLHTs are not specifically used in this paper, however we include their definitions for completeness):
\begin{paragraph}{\bf HT, THT, LHT, and TLHT.}
	Let $\tau$ be a tuple in the \system at time $t$. Let $ t' \le t$ denote the most recent step at which
	a vertex on a path in $\tau$ was updated. Then $\tau$  is  a {\it  historical tuple (HT) at time $t$} if
	$\tau$ was an ST-tuple at least once in the interval $[t', t]$; $\tau$ is a
	{\it true HT (THT) at time $t$} if it is not an ST 
	in the current graph.
	A tuple $\tau$ is a {\it  locally historical tuple (LHT) at time $t$}
	if either it only
	contains a single vertex or every proper sub-path in it is an HT
	at time $t$; a tuple $\tau$ is a
	{\it true LHT (TLHT) at time $t$} if it is not an LST in the current graph.
\end{paragraph}

\vone 
\noindent
Similar to \DI for unique SPs, \PR forms LHTs and TLHTs in its {\sc fully-fixup} procedure (which adapts the {\sc fixup} procedure in \NPR to the fully dynamic case)
by combining compatible pairs of HTs.
An important method introduced in \DI for efficiency in fully dynamic APSP is  the notion of a `dummy update' sequence.
Extending this method to  an efficient algorithm for APASP does not appear to be feasible, so in \PRe, a new dummy update sequence (that uses elements in a different method  by \Tho \cite{Thorup04}) is introduced. \PR then defines the \emph{Prior Deletion Graphs} (PDGs) (that are related to the {\it level graphs} maintained in \Tho -- see below) to study the complexity of the \PR algorithm. In our algorithm \FFD we will use  graphs similar to the PDGs in \PRe; these are described in
 Section~\ref{subsec:fully-impl}.

\subsection{The \Tho Fully Dynamic APSP Algorithm \cite{Thorup04}}
In \cite{Thorup04}, Thorup improves by a logarithmic factor over \DI (for unique shortest paths) by using a \emph{level system} of decremental only graphs. 
The shortest paths and locally shortest paths are generated level by level leading to a different complexity analysis from \DIe. When a node is removed from the current graph, it is also removed from every older level graph that contains it. 
The implementation of the \Tho APSP algorithm is not fully specified in \cite{Thorup04}. For our \FFD algorithm, we present generalizations of the data structures sketched in \Tho (see Section \ref{subsec:fully-impl} for a summary of these data structures).     

\section{Data Structures for Algorithm  \FFDe} \label{subsec:fully-impl}
Our  algorithm  \FFD requires several data structures. Some of these are already present in \NPRe, \PR and \Thoe, while others are newly defined or generalized from earlier ones. We will use components from \PR such as the abstract representation of the level system using PDGs (see Section \ref{subsec:fully-impl}) and the flag bit $\beta$ for a triple, the 
 \MT scheme introduced in \NPR (see \cite{NPR14b,PR14} for more details), and the maintenance of level  graphs from \Thoe.

In this section, we describe all data structures used by our algorithm. 
In Table \ref{table:dsA} we summarize the structures we use, including those inherited from \cite{NPR14b,PR14,Thorup04}.
The new components we introduce in this paper to achieve efficiency for fully dynamic APASP, are described in section \ref{sec:newfeat} and listed in Table \ref{table:dsA}, Part D.

\subsection{A Level System for Centered Tuples} \label{sec:tuple-system}
Algorithm \FFD  uses the PDGs defined in \PR as real data structures similar to \Tho for APSP. This is done in order to generate a smaller superset of LSTs than \PRe, and this is the key to achieving the improved efficiency. Here we describe the level system and the data structures we use in \FFDe, with special attention to
the new elements we introduce.

As in~\cite{DI04,PR14,Thorup04} we build up the \system for the initial $n$-node graph $G=(V,E)$ with $n$ insert updates (starting
with the empty graph), and we then
perform $n$ updates according to the update sequence $\Sigma$. After $2n$ updates, we
reset all data structures and start afresh.

Our level system is a generalization of \Tho to fully dynamic APASP.
For an update at step $t$, let $k$ be the position of the least significant bit with value 1 in the binary representation of $t$. Then
the $t$-th update activates the level $k$, and deactivates all levels $j<k$ by folding these levels into level $k$.
These levels are considered implicitly in \PRe, and using the same notation, we will say that $time(k)=t$, and $level(t)=k$; moreover $G_t$ indicates the graph after the $t$-th update.
Note that the largest level created before we start afresh is $r= \log 2n$.

\paragraph{\bf Centering vertices and tuples/triples.} As in \Thoe, each \emph{vertex $v$ is centered} in level $k=level(t)$, where $t$ is the most recent step in which $v$ was updated.
A path $p$ in a tuple is centered in level $k'=level(t')$, where $t'$ is the most recent step in which $p$ entered the tuple system (within some tuple) or was modified by a vertex update. 
Hence, in contrast to \Thoe, a triple can represent paths centered in different levels. Thus,
for a triple $\gamma = ((xa, by), wt, count)$ we maintain an array $\CA_{\gamma}$ where 
$$\CA_{\gamma}[i] = \textrm{number of paths represented by $\gamma$ that are all centered in level $i$}$$

\noindent
It follows that $\sum_i{\CA_{\gamma}[i]}=count$. 
The \emph{level center of the triple $\gamma$} is the smallest (i.e., most recent level) $i$ such that $\CA_{\gamma}[i] \neq 0$.

\paragraph{\bf Level graphs (PDGs).}
The \PR algorithm defines PDGs as follow: Let $t$ be the current update step,  let $t' <t$, and let $W$ be the set of vertices that are updated in the
interval of steps $[t'+1, t]$. The \emph{prior deletion graph (PDG)} $\Gamma_{t'}$ is the induced subgraph of $G_{t'}$ on the vertex set 
$V(G_{t'}) -W$.

In \PRe, the PDGs are used only in the analysis, and are not maintained by the algorithm. Here, in \FFDe, we will maintain a set of local data
structures for each PDG that is relevant to the current graph; also, in a small change of notation, we will denote a level graph for
time $t'\leq t$ as $\Gamma_{k'}$, where $k' = level(t')$ rather than the \PR notation of $\Gamma_{t'}$. These graphs are similar to the level graphs in \Thoe. 
As in \Thoe, 
  only certain information for $\levelgraph_k$ is explicitly maintained in its local data structures: the STs centered in level $k$ plus all the extensions that can generate STs in $\levelgraph_k$. The data structures used by our algorithm to maintain triples are Global and Local, which we now describe.

\begin{table}[ht]
	\begin{center}
		\begin{tabular}{|c|c|c|c|}
			\hline 
			\textbf{Notation} & \textbf{Data Structure} &  \multicolumn{2}{|c|} {\textbf{Appears}} \\ 
			\hline
			\hline 
			\multicolumn{4}{|c|}{\textbf{Part A :: Global Data Structures} {\scriptsize (for each pair of nodes $(x, y)$)}} \\ 
			\hline 
			\hline 
			$P(x,y)$ & all (centered) \LHT s from $x$ to $y$ with weight as key &  \multicolumn{2}{|c|}{ \multirow{2}{*}{\parbox[c]{3cm}{(\cite{DI04} for paths, \cite{NPR14b} for LSTs)}}} \\ 
			\cline{1-2} 
			$P^{*}(x,y)$ & all (centered) \HT s from $x$ to $y$ with weight as key & \multicolumn{2}{|c|}{ } \\ 
			\hline 
			$L(x,by)$ & $\{x' : (x'x, by)$ denotes a (centered) \LHTe $\}$ & \multicolumn{2}{|c|}{ \cite{PR14}, (\cite{NPR14b} for LSTs) } \\ 
			\hline 
			$R(xa, y)$ & $\{y' : (xa, yy')$ denotes a (centered) \LHTe $\}$ & \multicolumn{2}{|c|}{ \cite{PR14}, (\cite{NPR14b} for LSTs) }  \\ 
			\hline 
			\MT & global dictionary for Marking scheme &  \multicolumn{2}{|c|}{ \cite{PR14,NPR14b}}\\ 
			\hline 
			\hline
			\multicolumn{4}{|c|}{\textbf{Part B :: Local Data Structures} {\scriptsize (for each active level $i$, for each pair of nodes $(x,y)$)}} \\  
			\hline 
			\hline
			$P^{*}_i(x,y)$ & STs from $x$ to $y$ centered in level $i$ & \multicolumn{2}{|c|}{  \multirow{5}{*}{\parbox[c]{2.5cm}{(sketched in \cite{Thorup04} for paths)}}}\\ 
			\cline{1-2} 
			$L_i^{*}(x,y)$ & $\{x' : (x'x, y)$ denotes an $\ell$-tuple for SPs centered in level $i \}$ & \multicolumn{2}{|c|}{ } \\ 
			\cline{1-2} 
			$R_i^{*}(x,y)$ &  $\{y' : (x, yy')$ denotes an $r$-tuple for SPs centered in level $i \}$ & \multicolumn{2}{|c|}{ } \\ 
			\cline{1-2} 
			$LC_i^{*}(x,y)$ & the subset $\{ x' \in L_i^{*}(x,y) : x' \textrm{ is centered in level } i \}$  & \multicolumn{2}{|c|}{ }\\ 
			\cline{1-2} 
			$RC_i^{*}(x,y)$ &  the subset $\{ y' \in R_i^{*}(x,y) : y' \textrm{ is centered in level } i \}$ & \multicolumn{2}{|c|}{ }\\ 
			\hline
			$dict_i$ & dictionary of pointers from local STs to global $P$ and $P^*$ &  \multicolumn{2}{|c|}{new} \\
			\hline 
			\hline
			\multicolumn{4}{|c|}{\textbf{Part C :: Inherited Data Structures}} \\  
			\hline 
			\hline
			$\beta(\gamma)$ & flag bit for the (centered) triple $\gamma$ & \multicolumn{2}{|c|}{ \cite{PR14}}\\ 
			\hline
			$level(t)$ & level activated during $t$-th update &  \multicolumn{2}{|c|}{ \cite{PR14}}\\ 
			\hline 
			$time(k)$ & most recent update in which level $k$ is activated & \multicolumn{2}{|c|}{ \cite{PR14}}\\ 
			\hline 
			$\mathcal{N}$ & nodes (centered in levels) deactivated in the current step & \multicolumn{2}{|c|}{ \cite{PR14}}\\ 
			\hline 
			$\Gamma_k$ & level graph (PDG) created during $time(k)$-th update &  \multicolumn{2}{|c|}{ (\cite{PR14} as PDG, \cite{Thorup04} for paths)}\\ 
			\hline
			\hline
			\multicolumn{4}{|c|}{\textbf{Part D :: New Data Structures}} \\  
			\hline 
			\hline
			$C_\gamma$ & distribution of all paths in triple $\gamma$ among active levels & \multicolumn{2}{|c|}{new} \\ 
			\hline
			$\DMs(x,y)$ & linked-list containing the history of distances from $x$ to $y$ & \multicolumn{2}{|c|}{new}\\
			\hline
			$\LN(x,y,wt)$ & the set $\{b : \exists \, (xa,by)$ of weight $wt$ in $P(x,y)\}$ & \multicolumn{2}{|c|}{new}\\
			\hline
			$\RN(x,y,wt)$ & the set $\{a : \exists \, (xa,by)$ of weight $wt$ in $P(x,y)\}$ & \multicolumn{2}{|c|}{new}\\
			\hline  
		\end{tabular} 
	\end{center}
	\caption{Notation summary}
	\label{table:dsA}
\end{table}

\paragraph{\bf Global Structures.}
The global data structures are $P^*$, $P$, $L$ and $R$ (see Table \ref{table:dsA}, Part~A).
\begin{itemize}
	\item The structures $P^*(x,y)$ and $P(x,y)$ will contain HTs (including all STs) and LHTs (including all LSTs), respectively,  from $x$ to $y$. They are priority queues with the weights of the triple and a flag bit $\beta$ as key. For a triple $\gamma$ in $P$, the flag bit $\beta(\gamma)=0$ if the triple $\gamma$ is in $P$ but not in $P^*$, and $\beta(\gamma)=1$ if the triple $\gamma$ is in $P$ and $P^*$.
	\item The structure $L(x,by)$ ($R(xa,y)$) is the set of all left (right) extension vertices that generate a centered LHT in the \systemend.
\end{itemize}

\paragraph{\bf Local Structures.}
The local data structures we introduce in this paper are $L^*_i, R^*_i, LC^*_i$ and $RC^*_i$ (see Table \ref{table:dsA}, Part~B). These are generalization of the data structures sketched in \Tho for unique SPs in the graph. For every pair of nodes $(x,y)$:

\begin{itemize}
	\item The structure $P^*_i(x,y)$ contains the set of STs from $x$ to $y$ centered in $\levelgraph_i$. It is implemented as a set. 
	\item The structure $L^*_i(x,y)$ ($R^*_i(x,y)$) contains all left (right) extensions that generate a shortest $\ell$-tuple ($r$-tuple) centered in level $i$. It is implemented as a balanced search tree.
	\item The structure $LC^*_i(x,y)$ ($RC^*_i(x,y)$) contains left (right) extensions centered in level $i$ that generate a shortest $\ell$-tuple ($r$-tuple) centered in level $i$. It is implemented as a balanced search tree.
	\item  A dictionary $\dict_i$, contains STs in $P^*_i$ using the key $[x, y, a, b]$ and two pointers stored along with each ST. The two pointers refer to the location in $P(x,y)$ and $P^*(x,y)$
	of the triple of the form $(x,a,b,y)$ contained in $P^*_i(x,y)$.
\end{itemize}

\noindent
In order to recompute BC scores (see Section \ref{sec:fdbc}) we will consider $R^*(x,y)=\bigcup_i{R_i^*(x,y)}$ and similarly $L^*(x,y)=\bigcup_i{L_i^*(x,y)}$. 

\subsection{New Structures} \label{sec:newfeat}
We introduce two completely new data structures which are essential to achieve efficiency for our \FFD algorithm. Both are needed to address the Partial Extension Problem (PEP) which does not appear in \Tho (see section \ref{sec:ffeatures}).

\paragraph{\bf Distance History Matrix.}
 The \emph{distance history matrix} is a matrix $\DMs$ where each entry is a pointer to a linked list: for each $x, y \in V$, the linked list $\DMs(x,y)$ contains the sequence of different pairs $(wt,k)$, where each one represents an SP weight $wt$ from $x$ to $y$, along with the most recent level $k$ in which the weight $wt$ was the shortest distance from $x$ to $y$ in the graph $\levelgraph_k$. 
 The pair with weight $wt$ in $\DMs(x,y)$ is double-linked to every triple from $x$ to $y$ with weight $wt$ in the system. Precisely, when a new triple $\gamma$ from $x$ to $y$ of weight $wt$ is inserted in the algorithm, a link is formed between $\gamma$ and the pair $(wt,k)$ in $\DMs(x,y)$. With this structure, the \FFD algorithm can quickly check if there are still triples of a specific weight in the \systemend, especially for example when we need to remove a given weight from $\DMs$ (e.g. Step \ref{ffcleanup:remDM} - Alg. \ref{algo:ffcleanup-process}).
 Note that the size of each linked list is $O(\log n)$.

\paragraph{\bf Historical Extension (HE) Sets RN and LN.}
Another important type of structure we introduce are the sets $RN$ and $LN$. These structures are crucial to select efficiently the set of restored historical tuples that need to be extended (see Section \ref{ffixup-desc}). $RN(x,y,wt)$ ($LN(x,y,wt)$ works symmetrically) contains all nodes $b$ such that there exists at least one tuple of the form $(x \times,by)$ and weight $wt$ in $P(x,y)$. Similarly to $\DMs$, every time a new triple $\gamma$ of this form is inserted in the \systemend, a double link is created between $\gamma$ and the occurrence of $b$ in $RN(x,y,wt)$ in order to quickly access the triple when needed (e.g. Steps \ref{ffixup:RNstart} to \ref{ffixup:RNend} - Alg. \ref{algo:fget-n-paths}).

\vone
\noindent
The total space used by $\DMs$, $\RN$ and $\LN$ is $O(n^2 \log n)$. This is dominated by the overall space used by the algorithm to maintain all the triples in the \system across all levels (see Lemma \ref{lemma:count2}, Section \ref{sec:proof}).
\section{The \ffullydynamic Algorithm} \label{sec:algo}

Algorithm \ffullydynamic is similar to the fully dynamic algorithm in \PR and its overall description is given in Algorithm \ref{algo:ffully-main}.
The main difference is the introduction of the notion of levels as described in Section \ref{subsec:fully-impl}, and their activation/deactivation as in \Thoe.
At the beginning of the $t$-th update (with $k=level(t)$), we first activate the new level $k$ and we perform \ffullyupdate (Alg. \ref{algo:ffupdate}) on the updated node $v$. As in \PR and shown in Table \ref{table:dsA} - Part C, the set $\NODES$ consists of all vertices centered at these lower deactivated levels. All vertices in $\NODES$  are re-centered at level $k$ during the $t$-th update (Alg. \ref{algo:ffully-main}, Step \ref{fully-main:new2}), 
and `dummy' update operations are performed on each of these vertices. Note that 
$\NODES$ contains the $2^{k}-1$ most recently updated vertices in reverse order of update time (from the most recent to the oldest).
Procedure \ffullyupdate is invoked with the parameter $k$ representing the newly activated level.
Finally, all levels $j < k $  are deactivated (Alg. \ref{algo:ffully-main}, Step \ref{fully-main:new1}).

\begin{algorithm}
	\scriptsize
	\begin{algorithmic}[1]
		\STATE activate the new level $k$
		\STATE \ffullyupdateend($v,\weight',k$)
		\STATE generate the set $\NODES$ 
		\FOR {each $u \in \NODES$ in decreasing order of update time}
		\STATE \ffullyupdateend$(u,\weight',k)$ \COMMENT{dummy updates} \label{fully-main:new2}
		\ENDFOR
		\STATE deactivate all levels lower than $k$ \label{fully-main:new1}
	\end{algorithmic}
	\caption{\ffullydynamicend($G, v, \weight', k$)}
	\label{algo:ffully-main}
\end{algorithm}

\noindent
\paragraph{\bf \ffullyupdateend.}
As in \cite{DI04,NPR14b,PR14}, the update of a node occurs in a sequence of two steps: a \emph{cleanup phase} and a \emph{fixup phase}. Here, we call this update \ffullyupdate and it is a sequence of 2 calls: \ffullycleanup and \ffullyfixupend. Briefly, \ffullycleanup removes all \LHPe s in the \system containing the updated node (see Section \ref{ffcleanup-desc}), while \ffullyfixup identifies and adds the STs and LSTs in the updated graph that are not yet in the \system (see Section \ref{ffixup-desc}). Both \ffullycleanup and \ffullyfixup are more involved algorithms than their counterparts in \PR and the resulting algorithm will save a $O(\log n)$ factor over the amortized cost in \PRe.

\begin{algorithm}
	\scriptsize
	\begin{algorithmic}[1]
		\STATE \ffullycleanupend$(v)$
		\STATE \ffullyfixupend$(v,\weight',k)$
	\end{algorithmic}
	\caption{\ffullyupdateend$(v,\weight',k)$}
	\label{algo:ffupdate}
\end{algorithm}

\noindent
{\bf Dummy Updates.} 
Calling \ffullyupdate only on the updated nodes gives us a correct fully dynamic APASP algorithm. However, the number of \LHTe s generated could be very large making this strategy not efficient in general. Thus,
as in \PRe, the \ffullydynamic algorithm performs a sequence of dummy updates  
as follows. Consider an update on  $v$ at time $t$. Let $k= level(t)$. As shown in algorithm \ref{algo:ffully-main}, $\NODES{}$ is the set of $2^k -1$ most recently updated nodes $v_{t-1}, v_{t-2} , \cdots , v_j$, where $j= 2^k-1$. These are the nodes centered at levels smaller (more recent) than $k$ before the $t$-th update is applied. 
For each vertex $u$ in $\NODES$, starting from the most recent to the oldest, \ffullydynamic calls \ffullyupdate on the node $u$. This procedure removes all \LHTe s containing $u$ in every active level using \ffullycleanupend, and immediately reinserts them in the newly activated level $k$ by performing \ffullyfixup and using the local data structures. These are the
dummy updates. Dummy updates have the effect of removing from the \system any path $\pi$ containing a vertex in $\NODES$ that is no longer an LSP in the current graph, because $\pi$ is removed by the corresponding \ffullycleanup step and will not be restored during \ffullyfixupend.

We will establish in Section \ref{sec:proof} that \ffullydynamic
correctly updates the data structures with the amortized bound given in Theorem \ref{th:main}.

\subsection{Description of \ffullycleanup} \label{ffcleanup-desc}
\ffullycleanup removes all the LHPs going through the updated vertex $v$ from all the global structures $P$, $P^{*}$, $L$ and $R$, and from all local structures in any active level graph $\levelgraph_j$ that contains these triples. This involves decrementing the count of some triples or removing them completely (when all the paths in the triple go through $v$). The algorithm also updates local dictionaries and the $\DMs$, $\RN$ and $\LN$ structures.
Algorithm \ffullycleanup is a natural extension of the \NPR cleanup. An extension of the \NPR cleanup is used also in \PR but in a different way.

\begin{algorithm}
	\scriptsize
	\begin{algorithmic}[1]
		\STATE $H_c \leftarrow \emptyset$; \MT$\leftarrow \emptyset$ 
		\STATE $\gamma \leftarrow [(v,v), 0, 1]$; $\CA_{\gamma}[center(v)]=1$; add $[\gamma, \CA_{\gamma}]$ to $H_c$ \label{ffcleanup:init}
		\WHILE {$H_c \neq \emptyset$ } \label{ffcleanup:while}
		\STATE extract in $S$ all the triples with the same min-key $[wt,x,y]$ from $H_c$ \label{ffcleanup:extract}
		\STATE \ffullycleanupend-$\ell$-extend($S$,$[wt,x,y]$) (see Algorithm \ref{algo:ffcleanup-process})
		\STATE \ffullycleanupend-$r$-extend($S$,$[wt,x,y]$)
		\ENDWHILE
	\end{algorithmic}
	\caption{\ffullycleanupend$(v)$}
	\label{algo:ffcleanup}
\end{algorithm}

\ffullycleanup starts as in the \NPR algorithm. We add the updated node $v$ to $H_c$ (Step \ref{ffcleanup:init} -- Alg. \ref{algo:ffcleanup}) and we start extracting all the triples with same min-key (Step \ref{ffcleanup:extract} -- Alg. \ref{algo:ffcleanup}).
The main differences from \NPR start after we call Algorithm \ref{algo:ffcleanup-process}.
As in \cite{NPR14b}, we start by forming a new triple $\gamma'$ to be deleted (Steps \ref{ffcleanup:triplep} -- Alg. \ref{algo:ffcleanup}).
A new feature in Algorithm \ref{algo:ffcleanup-process} is to accumulate the paths that we need to remove level by level using the array $\CA'$. This is inspired by \Tho where unique SPs are maintained in each level. However, our algorithm maintains multiples paths spread across different levels using the $\CA_{\gamma}$ arrays associated to LSTs, and the technique used to update the $\CA_{\gamma}$ arrays is significantly different and more involved than the one described in \Thoe. Step~\ref{ffcleanup:updvect} - Alg.~\ref{algo:ffcleanup-process} calls \ffullyccenters (Alg.~\ref{algo:ffcleanup-vector}) that will perform this task.
\begin{algorithm}
	\scriptsize
	\begin{algorithmic}[1]
		\FOR {every $b$ such that $(x\times,by) \in S$}
		\STATE let $S_b \subseteq S$ be the set of all triples of the form $(x\times,by)$
		\STATE let $fcount'$ be the sum of all the $counts$ of all triples in $S_b$
		\FOR {every $x'$ in $L(x,by)$ s.t. $(x'x,by) \notin$ \MT}
		\STATE $wt' \leftarrow wt+\weight(x',x)$; $\gamma' \leftarrow ((x'x,by),wt', fcount')$ \label{ffcleanup:triplep}
		\STATE $\CA_{\gamma'} \leftarrow$ \ffullyccentersend$(\gamma', S_b)$ \label{ffcleanup:updvect}
		\STATE add $[\gamma', \CA_{\gamma'}]$ to $H_c$ \label{ffcleanup:addHcp}
		\STATE remove $\gamma'$ in $P(x',y)$ // decrements $count$ by $fcount'$ \label{ffcleanup:remP}
		\STATE set new center for $\gamma''=((x'x,by),wt')$ in $P(x',y)$ as $\arg\!\min_i(\CA_{\gamma''}[i] \neq 0)$
		\label{ffcleanup:setcen}	
		\IF { a triple for $(x'x,by)$ exists in $P(x',y)$} \label{ffcleanup:LRstart}
		
		\STATE insert $(x'x,by)$ in \MT
		\ELSE
		\STATE delete $x'$ from $L(x,by)$ and delete $y$ from $R(x'x,b)$ \label{ffcleanup:remLR}
		\ENDIF \label{ffcleanup:LRend}
		\IF { no triple for $((x'-,by),wt')$ exists in $P(x',y)$} \label{ffcleanup:RN}
		\STATE remove $b$ from $\RN(x',y,wt')$
		\ENDIF
		\IF { no triple for $((x'x,-y),wt')$ exists in $P(x',y)$} 
		\STATE remove $x$ from $\LN(x',y,wt')$	
		\ENDIF	\label{ffcleanup:LN}				
		\IF {a triple for $((x'x,by),wt')$ exists in $P^*(x',y)$} 
		\STATE remove $\gamma'$ in $P^*(x',y)$ // decrements $count$ by $fcount'$ \label{ffcleanup:remPS}
		\IF {$\gamma' \notin P^*(x',y)$}	
		\STATE remove the element with weight $wt'$ from $\DMs(x',y)$ if not linked to other tuples in $P^*(x',y)$ \label{ffcleanup:remDM}
		\ENDIF
		\FOR {each $i$} \label{ffcleanup:forPri}
		\STATE decrement $\CA_{\gamma'}[i]$ paths from $\gamma' \in P^*_i(x',y)$ \label{ffcleanup:updPri0}
		\IF {$\gamma'$ is removed from $P^*_i(x',y)$} \label{ffcleanup:remLSs}
		\IF	{$x'$ is centered in level $i$}
		\IF {$\forall \, j\geq i, P^*_j(x,y)=\emptyset$}
		\STATE remove $x'$ from $L^*_i(x,y)$ and remove $x'$ from $LC^*_i(x,y)$ \label{ffcleanup:updLC}
		\ENDIF
		\ELSIF {$P^*_i(x,y)=\emptyset$}
		\STATE remove $x'$ from $L^*_i(x,y)$
		\ENDIF
		\IF	{$y$ is centered in level $i$}
		\IF {$\forall \, j\geq i, P^*_j(x',b)=\emptyset$}
		\STATE remove $y$ from $R^*_i(x',b)$ and remove $y$ from $RC^*_i(x',b)$ \label{ffcleanup:updRC}
		\ENDIF
		\ELSIF {$P^*_i(x',b)=\emptyset$}
		\STATE remove $y$ from $R^*_i(x',b)$
		\ENDIF
		\ENDIF \label{ffcleanup:remLSe}
		\ENDFOR
		\ENDIF
		\ENDFOR	
		\ENDFOR
	\end{algorithmic}
	\caption{\ffullycleanupend-$\ell$-extend($S,[wt,x,y]$)}
	\label{algo:ffcleanup-process}
\end{algorithm}

\begin{algorithm}
	\scriptsize
	\begin{algorithmic}[1]
		\STATE let $\gamma' = ((x'x,by),wt',fcount')$ (the triple of the form $((x'x,by),wt')$ that contains all the paths through $v$ to be removed)	
		\STATE let $\gamma'' = ((x'x,by),wt',fcount'')$ (the triple of the form $((x'x,by),wt')$ in $P(x',y)$. Note that $\gamma'$ represents a subset of $\gamma''$) 	
		\STATE $j \leftarrow \arg\!\max_j(\CA_{\gamma''}[j]\neq 0)$ \label{ffcleanup:gpivot} // This is the oldest level in which a path in $\gamma''$ appeared for the first time
		\STATE $\CA' \leftarrow \sum_{\gamma \in S_b}{\CA_{\gamma}[r-1, \ldots, 0]}$ // This is the sum (level by level) of the triples of the form $(xa_i,by)$ that go through~$v$ and are extending to $\gamma'$ during this stage\label{ffcleanup:dist1}
		\STATE create a new center vector $\CA_{\gamma'}$ for the triple $\gamma'$ as follows 	\label{ffcleanup:newArr0}	
		\STATE \hspace{0.3in} for all the levels $m>j$ we set $\CA_{\gamma'}[m] = 0$ \label{ffcleanup:newArr1}
		\STATE \hspace{0.3in} for the level $j$ we set $\CA_{\gamma'}[j] = \sum_{k = j}^{r-1}{\CA'[k]}$ \label{ffcleanup:newArr2}
		\STATE \hspace{0.3in} for all the levels $i<j$ we set $\CA_{\gamma'}[i] = \CA'[i]$ \label{ffcleanup:newArr3}	
		\STATE $\CA_{\gamma''}[r-1, \ldots, 0] \leftarrow \CA_{\gamma''}[r-1, \ldots, 0] - \CA_{\gamma'}[r-1, \ldots, 0]$ // We update the $\CA$ vector for $\gamma'' \in P(x',y)$ \label{ffcleanup:Pvecupd}
		\STATE return $\CA_{\gamma'}$ // We return the correct vector for the generated $\gamma'$ triples
	\end{algorithmic}
	\caption{\ffullyccentersend$(\gamma', S_b)$}
	\label{algo:ffcleanup-vector}
\end{algorithm}

\ffullyccenters takes as input the generated triple $\gamma'$ of the form $(x'x,by)$ which contains all the paths going through the updated node $v$ to be removed, and the set of triples $S_b$ of the form $(x\times,by)$ that are extended to $x'$ to generate $\gamma'$. This procedure has two tasks: (1) generating the $\CA_{\gamma'}$ vector for the triple $\gamma'$ that will be reinserted in $H_c$ for further extensions, and (2) updating the $\CA_{\gamma''}$ vector for the tuple $\gamma''$ in $P(x',y)$ (note that $\gamma''$ is the corresponding triple in $P$ of $\gamma'$, before we subtract all the paths represented by $\gamma'$ level by level).\\
(1) - This task, which is more complex than the second task (which is a single step in the algorithm, see point (2) below), is accomplished in steps \ref{ffcleanup:dist1} to \ref{ffcleanup:newArr3}, Alg. \ref{algo:ffcleanup-vector} and uses the following technique.
In step \ref{ffcleanup:dist1} -- Alg. \ref{algo:ffcleanup-vector}, we store into the $\log n$-size array $C'$ the distribution over the active levels for the set of triples in $S_b$ that generates $\gamma'$ using the left extension to $x'$. In order to generate the correct vector $\CA_{\gamma'}$ (to associate with the triple $\gamma'$), we need to reshape the distribution in $\CA'$ according to the corresponding distribution of the triple $\gamma'' \in P$.  
The reshaping procedure works as follows: we first identify the oldest level $j$ in which the triple $\gamma''$ appeared in $P$ for the first time (Step \ref{ffcleanup:gpivot} -- Alg. \ref{algo:ffcleanup-vector}).
Recall that we want to remove $\gamma'$ paths containing $v$ from $\gamma''$, and $\gamma''$ does not exist in any level older than $j$. Vector $\CA'$ is the sum of $\CA_{\gamma}$ for all $\gamma \in S_b$ (Step \ref{ffcleanup:dist1} -- Alg. \ref{algo:ffcleanup-vector}). Those triples are of the form $(xa_i,by)$ and they could exist in levels older, equal or more recent than $j$. But the triples in $S_b$ that were present in a level older than $j$, were extended to $\gamma''$ in $P$ for the first time in level $j$. For this reason, step \ref{ffcleanup:newArr2} - Alg. \ref{algo:ffcleanup-vector} aggregates all the counts in $\CA'$ in levels older or equal $j$ in $\CA_{\gamma'}[j]$.
Moreover, for each level $i < j$, if a triple $\gamma \in S_b$ is present in the level graph $\Gamma_i$ with $count$ paths centered in level $i$, then $\levelgraph_i$ also contains its extension to $x'$ that is a subtriple of $\gamma''$ located in level $i$ with at least $count$ paths. Thus for each level $i < j$ step \ref{ffcleanup:newArr3} - Alg. \ref{algo:ffcleanup-vector}, copies the number of paths level-wise.   
This procedure allows us to precisely remove the LHPs only from the level graphs where they exist. After $\CA'$ is reshaped into $\CA_{\gamma'}$ (steps \ref{ffcleanup:newArr0} to \ref{ffcleanup:newArr3} - Alg. \ref{algo:ffcleanup-vector}), the algorithm returns this correct array for $\gamma'$ to Alg. \ref{algo:ffcleanup-process}.\\
(2) - This task is performed by the simple step \ref{ffcleanup:Pvecupd}, Alg. \ref{algo:ffcleanup-vector}, which is a subtraction level by level of LHPs.

After adding the new triple $\gamma'$ to $H_c$ (Step \ref{ffcleanup:addHcp} - Alg. \ref{algo:ffcleanup-process}), the algorithm continues as the \NPR (Steps \ref{ffcleanup:addHcp} to \ref{ffcleanup:remLR} -- Alg. \ref{algo:ffcleanup-process}) with some differences: we need to update centers, local data structures, $\DMs$, $\RN$ and $\LN$. We update the center of $\gamma'$ using $C_{\gamma'}$ (Step \ref{ffcleanup:setcen} - Alg. \ref{algo:ffcleanup-process}).
If $\gamma'$ is a shortest triple, we decrement the count of $\gamma' \in P^*(x',y)$ (Step \ref{ffcleanup:remPS} - Alg. \ref{algo:ffcleanup-process}). 
If $\gamma'$ is completely removed from $P^*(x',y)$ and $\DMs(x',y, wt')$ is not linked to any other tuple, we remove the entry with weight $wt'$ from $\DMs(x',y)$ (Step \ref{ffcleanup:remDM} - Alg. \ref{algo:ffcleanup-process}).
Moreover, we subtract the correct number of paths from each level using the (previously built) array $\CA_{\gamma'}$ (Step \ref{ffcleanup:updPri0} - Alg. \ref{algo:ffcleanup-process}). Finally for each active level $i$, if $\gamma'$ is removed from $P^*_i(x',y)$, we take care of the sets $L_i^*$ and $R_i^*$ (Steps \ref{ffcleanup:remLSs} to \ref{ffcleanup:remLSe} - Alg. \ref{algo:ffcleanup-process}). In the process, we also update $LC_i^*$ and $RC_i^*$ in case the endpoints of $\gamma'$ are centered in level $i$.
If $\gamma'$ is completely removed from $P(x',y)$, using the double links to the node $b$ in $\RN(x',y,wt')$, we check if there are other triples that use $b$ in $P(x',y)$ (Step \ref{ffcleanup:RN} - Alg. \ref{algo:ffcleanup-process}): if not we remove $b$ from $\RN(x',y,wt')$. 
A similar step handles $\LN(x',y,wt')$.

\subsection{Description of \ffullyfixup} \label{ffixup-desc}
\ffullyfixup is an extension of the fixup in \PR rather than \NPRe. This is because of the presence of the control bit $\beta$ (defined in Section \ref{subsec:fully-impl}), and the need to process historical triples (that are not present in \NPRe).
Algorithm \ffullyfixup will efficiently maintain exactly the LSTs and STs for each level graph in the \systemend. This is in contrast to \PRe, which can maintain \LHTe s that are not LSTs in any level graph (PDG).
\ffullyfixup maintains a heap $H_f$ of candidate LHTs to be processed in min-weight order. The main phase (Alg. \ref{algo:fffixup}) is very similar to the fixup in \PRe. The differences are again related to levels, centers and the new data structures. 

We start describing Algorithm \ref{algo:fffixup}.
We initialize $H_f$ by inserting the edges incident on the updated vertex $v$ with their updated weights (Steps \ref{algo:finit-f-start} to \ref{algo:finit-f-end} -- Alg. \ref{algo:ftrivial}), as well as a candidate min-weight triple from $P$ for each pair of nodes $(x,y)$ (Step \ref{fixup:phase2-begin} -- Alg. \ref{algo:ftrivial}). Then we process $H_f$ by repeatedly extracting collections of triples of the same min-weight for a given pair of nodes, until $H_f$ is empty (Steps \ref{ffixup:phase3-begin} to \ref{ffixup:phase3-end} -- Alg. \ref{algo:fffixup}). We will establish that the first set of triples for each pair $(x,y)$ always represents the shortest path distance from $x$ to $y$ (see Lemma \ref{fdfixcorr}), and the triple extracted are added to the \system if not already there (see Alg. \ref{algo:fget-n-paths} and Lemma \ref{proof:ffflem}). As in \PRe for efficiency, among all the triples present in the \system for a pair of nodes, we select only the ones that need to be extended: this task is performed by Algorithm \ref{algo:fget-n-paths} (this step is explained later in the description).
After the triples in $S$ are left and right extended by Algorithm \ref{algo:fprocess-f}, we set the bit $\beta(\gamma') = 1$ for each triple $\gamma'$ that is identified as shortest in $S$, since $\gamma'$ is correctly updated both in $P^*(x,y)$ and $P(x,y)$ (Step \ref{ffixup:setbit1} -- Alg. \ref{algo:fffixup}). Finally, we update the $\DMs(x,y)$ structure by inserting (or updating if an element with weight $wt$ is already present) the element with weight $wt$ and the current level at the end of the list (Step \ref{ffixup:addDM} -- Alg. \ref{algo:fffixup}). This concludes the description of Algorithm \ref{algo:fffixup}.

We now describe Algorithm \ref{algo:fget-n-paths} which is responsible to select only the triples that have valid extensions that will generate LHTs in the current graph. 
In Algorithm \ref{algo:fget-n-paths}, we distinguish two cases. When the set of extracted triples from $x$ to $y$ contains at least one path not containing $v$ (Step \ref{ffixup:phase3-main-check} -- Alg. \ref{algo:fget-n-paths}), then we process all the triples from $P(x,y)$ of the same weight. Otherwise, if all the paths extracted go through $v$ (Step \ref{ffixup:phase3-nomain-check} -- Alg. \ref{algo:fget-n-paths}), we only use the triples extracted from $H_f$. 

\begin{algorithm}
	\scriptsize
	\begin{algorithmic}[1]
		\STATE $H_f \leftarrow \emptyset$; \MT$\leftarrow \emptyset$ \label{ffixup:init-empty}
		\STATE \ffullypopulate$(v, \weight',k)$ \label{ffixup:phase2}
		\WHILE {$H_f \neq \emptyset$} \label{ffixup:phase3-begin}
		\STATE extract in $S'$ all the triples with min-key $[wt,x,y]$ from $H_f$ \label{ffixup:phase3-extract1}
		\IF {$S'$ is the first extracted set from $H_f$ for $x,y$}  \label{ffixup:phase3-first-ext}
		\STATE $S \leftarrow$ \ffullygetnew($S',P(x,y)$)	
		\STATE \ffullyfixupend-$\ell$-extend($S$,$[wt,x,y]$) (see Algorithm \ref{algo:fprocess-f}) \label{ffixup:lext}
		\STATE \ffullyfixupend-$r$-extend($S$,$[wt,x,y]$) \label{ffixup:rext}
		\STATE for every $\gamma \in S$ set $\beta(\gamma)=1$ \label{ffixup:setbit1}
		\STATE add an element with weight $wt$ and level $k$ to $\DMs(x,y)$ or update the level in the existing one \label{ffixup:addDM}
		\ENDIF
		\ENDWHILE \label{ffixup:phase3-end}
	\end{algorithmic}
	\caption{\ffullyfixupend$(v, \weight', k)$}
	\label{algo:fffixup}
\end{algorithm}

\begin{algorithm}
	\scriptsize
	\begin{algorithmic}[1]
		\FOR {each $(u, v)$}
		\STATE $\weight(u,v) = \weight'(u,v)$ \label{algo:finit-f-start}
		\IF {$\weight(u,v) < \infty$ }
		\STATE $\gamma = ((uv,uv),\weight(u,v),1)$; $C_\gamma[k] \leftarrow 1$
		\STATE update-num($\gamma$) $\leftarrow$ curr-update-num; num-v-paths($\gamma$) $\leftarrow 1$
		\STATE add $[\gamma, C_{\gamma}]$ to $H_f$ and $P(u,v)$
		\STATE add $u$ to $L(-,vv)$ and $v$ to $R(uu,-)$
		\ENDIF \label{algo:finit-f-end}
		\ENDFOR
		\FOR {each $(v,u)$}
		\STATE symmetric processing as Steps~\ref{algo:finit-f-start}--\ref{algo:finit-f-end} above
		\ENDFOR
		\FOR {each $x, y \in V$} \label{fixup:phase2-begin}
		\STATE add a min-key triple $[\gamma, C_{\gamma}] \in P(x,y)$ to $H_f$
		\ENDFOR \label{fixup:phase2-end}
	\end{algorithmic}
	\caption{\ffullypopulate$(v, \weight',k)$}
	\label{algo:ftrivial}
\end{algorithm}

\begin{algorithm}
	\scriptsize
	\begin{algorithmic}[1]
		\STATE $S \leftarrow \emptyset$; let $i$ be the min-weight level associated with $\DMs(x,y)$
		\IF {$P^*(x,y)$ increased min-weight after cleanup} \label{ffixup:phase3-main-check}
		\FOR {each $\gamma' \in S$ with-key $[wt,0]$} \label{ffixup:phase3-addfromP-begin}
		\STATE let $\gamma' = ((xa', b'y), wt, count')$ and $j=\arg\!\min_j(\CA_{\gamma'}[j] \neq 0)$ \label{ffixup:newcenter}
		\IF {$\gamma'$ is not in $P^*(x,y)$}                	
		\STATE add $\gamma'$ in $P^{*}(x,y)$ and $S$; add $x$ to $L^{*}(a',y)$ and $y$ to $R^{*}(x,b')$
		\STATE add $b'$ to $\RN(x,y,wt)$; place a double link between $\gamma'$ and $\DMs(x,y,wt)$ 
		\ELSIF {$\gamma'$ is in $P(x,y)$ and $P^*(x,y)$ with different counts} \label{fffixup:new1} 
		\STATE replace the count of $\gamma'$ in $P^{*}(x,y)$ with $count'$ and add $\gamma'$ to $S$ \label{fffixup:new2} 
		\ENDIF
		\STATE add $\gamma'$ to $P^*_j(x,y)$ and $dict_j$ \label{ffixup:phase3-addj}
		\STATE add $x$ to $L^*_j(a',y)$ and $y$ to $R^*_j(x,b')$
		\STATE add $x$ to $LC^*_j(a',y)$ ($y$ to $RC^*_j(x,b')$) if $x$ ($y$) is a level $i$ center \label{ffixup:phase3-addPe}
		\STATE add $\gamma'$ in $S$
		\ENDFOR
		\FOR {each $b' \in \RN(x,y,wt)$} \label{ffixup:RNstart}
		\IF {$\exists \, h<i : L_h^*(x,b')\neq \emptyset$ }
		\STATE add any $\gamma'$ of the form $(x \times,b'y)$ and weight $wt$ in $P^*(x,y)$ with $\beta(\gamma')=1$ to $S$ \label{ffixup:RNend}
		\ENDIF
		\ENDFOR
		\ELSE
		\FOR {each $\gamma' \in S'$ containing a path through $v$} \label{ffixup:phase3-nomain-check}
		\STATE let $\gamma' = ((xa', b'y), wt, count')$ and $k$ the current level
		\STATE add $\gamma'$ with paths$(\gamma',v)$ to $P^*(x,y)$, and $[\gamma', \CA_{\gamma'}]$ to $S$ \label{ffixup:phase3-add2PStar1}
		\STATE add $\gamma'$ to $P^*_k(x,y)$ and $dict_k$, $x$ to $L^*_k(a',y)$ and $y$ to $R^*_k(x,b')$
		\label{ffixup:phase3-addk}
		\STATE add $x$ to $LC^*_k(a',y)$ ($y$ to $RC^*_k(x,b')$) if $x$ ($y$) is a level $k$ center
		\label{ffixup:phase3-add2LRCStar1}
		\ENDFOR \label{ffixup:phase3-addfromX-end}	
		\ENDIF
		\STATE return $S$
	\end{algorithmic}
	\caption{\ffullygetnew($S', P_{xy}$)}
	\label{algo:fget-n-paths}
\end{algorithm}

Both cases have a similar approach but here we focus on the former which is more involved than the latter. 
As soon as we identify a new triple $\gamma'$ we compute its center $j$ by using its associated array $\CA_{\gamma'}$ (Step \ref{ffixup:newcenter} -- Alg. \ref{algo:fget-n-paths}). This is straightforward if compared to \ffullycleanup where we first need to update the center arrays. We add this triple to $P^*(x,y)$ and to $S$, which contains the set of triples that need to be extended.
We also add $\gamma'$ to $P_j^*(x,y)$ (Steps \ref{ffixup:phase3-addj} and \ref{ffixup:phase3-addk} -- Alg. \ref{algo:fget-n-paths}). We update $dict_j$ to keep track of the locations of the triple in the global structures. 
A similar sequence of steps takes place when all the extracted paths go through $v$ (Steps \ref{ffixup:phase3-nomain-check} to \ref{ffixup:phase3-addfromX-end} -- Alg. \ref{algo:fget-n-paths}). The only difference is that the local data structures to be updated are only the $\levelgraph_k$ data stuctures (Steps \ref{ffixup:phase3-addk} and \ref{ffixup:phase3-add2LRCStar1} -- Alg. \ref{algo:fget-n-paths}). 

A crucial difference from \PR and this algorithm is the way we collect the set $S$ of triples to be extended. Here we require the new HE data structures $\RN$ and $\LN$ (see Section \ref{sec:newfeat}) because of PEP instances (see Section \ref{sec:ffeatures}).
Let $i$ be the min-weight level associated with $DL(x,y)$.
For each node $b \in \RN(x,y,wt)$ we check if $L_h^*(x,b)$ contains at least one extension, for every $h < i$ (Steps \ref{ffixup:RNstart} to \ref{ffixup:RNend} -- Alg. \ref{algo:fget-n-paths}). In fact we need to discover all tuples with $\beta=1$ that are inside a PEP instance. In this instance, the triples restored as STs may or may not be extended. We cannot afford to look at all of them, thus our solution should check only the triples with an available extension. Moreover, all the extendable triples with with $\beta=1$ have extension only in levels younger than the level where they last appear as STs. 
Thus, we check for extensions only in the levels $h < i$. 

Using the HE sets, is the key to avoid an otherwise long search of all the valid extensions for the set of examined triples with $\beta=1$. In particular, without the HE sets, the algorithm could waste time by searching for extensions that are not even in the \systemend.
Correctness of this method is proven in section \ref{sec:proof}.
After the algorithm collects the set $S$ of triples that can be extended, \ffullyfixup calls \ffullyfixupend-$\ell$-extend (Alg. \ref{algo:fprocess-f}).

\begin{algorithm}
	\scriptsize
	\begin{algorithmic}[1]
		\FOR {every $b$ such that $(x\times,by) \in S$}
		\STATE let $S_b \subseteq S$ be the set of all triples of the form $(x\times,by)$
		\STATE let $fcount'$ be the sum of all the $counts$ of all triples in $S_b$; let $h$ be the $center(S_b)$
		\IF {$\exists \, \gamma \in S_b : \beta(\gamma) = 0$ }
		\STATE let $j$ be the level associated to the minweight $wt'>wt$ in $\DMs(x,y)$ \label{fprocess-f:DM}
		\FOR {every active level $h \leq i<j$}
		\FOR {every $x'$ in $L_i^{*}(x,b)$} \label{fprocess-f:exts}
		\IF {$(x'x,by) \notin$ \MT} \label{fprocess-f:start}
		\STATE $wt' \leftarrow wt+\weight(x',x)$; $\gamma' \leftarrow ((x'x,by),wt', fcount')$ \label{fprocess-f:Lexts}
		\STATE $\CA_{\gamma'} \leftarrow$ \ffullyfcentersend($S_b$); add $\gamma'$ to $H_f$ \label{fprocess-centers}
		\IF {a triple $\gamma''$ for $((x'x,by),wt')$ exists in $P(x',y)$}
		\STATE update the count of $\gamma''$ in $P(x',y)$ and $\CA_{\gamma''} = \CA_{\gamma''} + \CA_{\gamma'}$
		\STATE add $(x'x,by)$ to \MT
		\ELSE
		\STATE add $[\gamma',\CA_{\gamma'}]$ to $P(x',y)$; add $x'$ to $L(x,by)$ and $y$ to $R(x'x,b)$ \label{fprocess-f:addL}
		\ENDIF
		\STATE set $\beta(\gamma')=0$; set update-num($\gamma'$) \label{fprocess-f:end}
		\ENDIF
		\ENDFOR
		\ENDFOR \label{fprocess-f:exte}
		\FOR {every level $i < h$} \label{fprocess-f:extihs}
		\FOR {every $x'$ in $LC_i^{*}(x,b)$}
		\STATE execute steps \ref{fprocess-f:start} to \ref{fprocess-f:end}
		\ENDFOR
		\ENDFOR \label{fprocess-f:extihe}
		\ELSE
		\STATE let $j$ be the level associated to the minweight $wt$ in $\DMs(x,y)$ \label{fprocess-f:DM1} \label{fprocess-f:extis}
		\FOR {every level $i < j$} 
		\FOR {every $x'$ in $LC_i^{*}(x,b)$}
		\STATE execute steps \ref{fprocess-f:start} to \ref{fprocess-f:end}
		\ENDFOR
		\ENDFOR \label{fprocess-f:extie}
		\ENDIF
		\ENDFOR
	\end{algorithmic}
	\caption{\ffullyfixupend-$\ell$-extend($S$,$[wt,x,y]$)}
	\label{algo:fprocess-f}
\end{algorithm}

\begin{algorithm}
	\scriptsize
	\begin{algorithmic}[1]
		\STATE let $\CA'=\sum_{\gamma \in S_b}{\CA_{\gamma}}$ be the sum (level by level) of the new paths that are found shortest
		\STATE let $j$ be $\arg\!\max_j(\CA'[j]\neq 0)$, and $k=center(x')$	\label{fcenter:level}
		\IF {$k < j$}
		\STATE for all the levels $i<k$ we set $\CA_{\gamma'}[i] = \CA'[i]$ \label{fcenter:ress}
		\STATE for the level $k$ we set $\CA_{\gamma'}[k] = \sum_{q = k}^{r-1}{\CA'[q]}$
		\STATE for all the levels $m>k$ we set $\CA_{\gamma'}[m] = 0$ \label{fcenter:rese}
		\ELSE
		\STATE $\CA_{\gamma'} = \CA'$ \label{fcenter:nores}
		\ENDIF		
		\STATE return $\CA_{\gamma'}$
	\end{algorithmic}
	\caption{\ffullyfcentersend$(S_b)$}
	\label{algo:fffixup-centers}
\end{algorithm}

Here we describe the details of algorithm \ref{algo:fprocess-f}.
Its goal is to generate LHTs for the current graph $G$ by extending HTs.
Let $h$ be the \emph{center of $S_b$} defined as the most recent center among all the triples in $S_b$, and let $j$ be the level associated to the first weight $wt'$ larger than $wt$ in $\DMs(x,y)$.
The extension phase for triples is different from \PRe: in fact, the set of triples $S_b$ could contain only triples with $\beta(\gamma)=1$. In \PRe, the corresponding set $S_b$ contains only triples with $\beta(\gamma)=0$. We address two cases:\\ 
(\textbf{a}) -- If $S_b$ contains at least one triple $\gamma$ with $\beta(\gamma)=0$, we extend $S_b$ using the sets $L_i^{*}$ and $R_i^{*}$ with $h \leq i < j$ (Steps  \ref{fprocess-f:exts} to \ref{fprocess-f:exte} -- Alg.\ref{algo:fprocess-f}). In fact, the set $S_b$ contains at least one new path that was not extended in the previous iterations when $wt$ was the shortest distance from $x$ to $y$ (because of the $\beta(\gamma)=0$ triple). The LST generated in this way remains centered in level $h$. Moreover we extend $S_b$ also using the sets $LC_i^{*}$ and $RC_i^{*}$ with $i < h$ (Steps \ref{fprocess-f:extihs} to \ref{fprocess-f:extihe} -- Alg.\ref{algo:fprocess-f}). This ensures that every LST generated in a level $i$ lower than $h$ is centered in $i$ thanks to the extension node itself.  
This technique guarantees that each LHT generated by Algorithm \ref{algo:fprocess-f} is an LST centered in a unique level.\\
(\textbf{b}) -- In the case when there is no triple $\gamma$ in $S_b$ with $\beta(\gamma)=0$, then there is at least one extension to perform for $S_b$ and it must be in some level younger than the level where $wt$ stopped to be the shortest distance from $x$ to $y$ (this follows from the use of the HE sets in Alg. \ref{algo:fget-n-paths}). To perform these extensions we set $j$ as the level associated with the min-weight element in $\DMs(x,y)$, and we extend $S_b$ using the sets $LC_i^{*}$ and $RC_i^{*}$ with $i < j$ (Steps \ref{fprocess-f:extis} to \ref{fprocess-f:extie} -- Alg.\ref{algo:fprocess-f}). Again, every LHT generated is an LST centered in a unique level.
Finally, following \PRe, every generated LHT is added to $P$ and $H_f$ and we update global $L$ and $R$ structures. 
\begin{observation} \label{centerlht}
Every LHT generated by algorithm \ffullyfixup is an LST centered in a unique level graph.
\end{observation}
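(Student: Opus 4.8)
The plan is to argue, case by case following the structure of Algorithm \ref{algo:fprocess-f}, that whenever \ffullyfixupend-$\ell$-extend (and symmetrically the $r$-extend version) creates a new triple $\gamma' = ((x'x,by),wt')$, the center vector $\CA_{\gamma'}$ returned by \ffullyfcentersend (Alg. \ref{algo:fffixup-centers}) is supported on a single level, and that this is exactly the level in which the underlying paths are locally shortest in the corresponding level graph $\levelgraph$. First I would recall from the description of \ffullyfixupend that the triples in $S$ passed to the extension routines are STs (by Lemma \ref{fdfixcorr} the first extracted set for $(x,y)$ is at the shortest-path distance, and \ffullygetnew only keeps triples that are in $P^*$), so that any extension of a triple in $S_b$ by a node $x'$ drawn from a set $L_i^*(x,b)$ or $LC_i^*(x,b)$ is itself an $\ell$-tuple for shortest paths in $\levelgraph_i$; this is what makes the resulting tuple an LST rather than merely an LHT.

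Next I would examine \ffullyfcentersend directly. It sets $j = \arg\max_j(\CA'[j]\neq 0)$, the oldest level carrying any of the paths being extended, and $k = center(x')$. In the branch $k < j$ it writes the aggregate $\sum_{q=k}^{r-1}\CA'[q]$ into coordinate $k$, copies $\CA'[i]$ for $i<k$, and zeroes everything above $k$; in the branch $k \ge j$ it just returns $\CA'$. I would first treat the degenerate case where $\CA'$ is already supported on a single level (which is the situation when $S_b$ consists of triples that were themselves single-level, e.g. newly created edge triples, or triples coming out of \ffullyccentersend): then $j$ equals that unique level, and either branch yields a single-level $\CA_{\gamma'}$, with the level being the center of the paths. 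The substantive point is that in Alg. \ref{algo:fprocess-f} the pairs $(S_b$-level, extension-node-level$)$ that actually get combined are constrained: in case (a) we iterate $i$ over $h \le i < j$ using $L_i^*,R_i^*$ (so $i \ge h = center(S_b)$, and the extension at level $i$ contributes a path whose center is $\max(i,\,\text{center of the }S_b\text{ path})$, but since $x' \in L_i^*$ is guaranteed to keep the tuple shortest in $\levelgraph_i$ and $i \ge h$, the path's center is $h$), and separately iterate $i < h$ over $LC_i^*,RC_i^*$, where the membership of $x'$ in $LC_i^*$ forces $x'$ itself to be a level-$i$ center, pinning the new path's center to exactly $i$. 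In case (b) the same $LC_i^*,RC_i^*$ mechanism with $i < j$ applies. So every individual extended path has a well-defined center, and the claim is that \ffullyfcentersend correctly records it: when the extension node's center $k$ is younger than (or equal to) the oldest level $j$ of the source paths, the new path inherits the source center, which is $\CA' = \CA_{\gamma'}$; when $k < j$, i.e. the extension node is older, all the source paths whose center was $\ge k$ get their center pushed down to $k$ (the aggregation step), while those strictly younger than $k$ keep their center. I would check that this is consistent with the level-graph semantics — a path is centered at the level where it most recently entered the system or had an endpoint updated — and that the ``$\forall j \ge i$'' folding of older levels into $k$ matches how level graphs $\levelgraph_i$ are nested.

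Finally, to conclude that the support is a \emph{single} level (not just well-defined per path), I would argue that within one call the partition of iterations — the $h \le i < j$ block using $L_i^*$ which always yields center $h$, versus the $i < h$ block using $LC_i^*$ which yields center exactly $i$ — means the triples $\gamma'$ added to $H_f$ at distinct levels are genuinely distinct triples (they are combined into the same $P(x',y)$ entry only via the ``$\CA_{\gamma''} = \CA_{\gamma''} + \CA_{\gamma'}$'' accumulation, which is a separate bookkeeping concern and does not affect the per-call $\gamma'$), and for a fixed target level the \ffullyfcentersend output is exactly that one coordinate. The main obstacle I anticipate is the case-(a) reasoning that an extension drawn from $L_i^*$ with $i$ strictly between $h$ and $j$ produces a path still centered at $h$ rather than at $i$: this needs the fact that such a path was present as an LST in $\levelgraph_h$ already (it is an extension of a $\beta=1$ ST by a node that was shortest-generating back at level $h$, hence it did not ``enter the system'' at level $i$), together with the invariant — presumably from the correctness lemmas for \ffullycleanupend and the HE-set logic in \ffullygetnew — that the $\beta=0$ witness in $S_b$ guarantees at least one genuinely new path so the extension is not redundant. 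Pinning down that invariant precisely, and the analogous statement for the PEP-handling branch (b), is where the bulk of the work lies; the rest is a direct read-off from Alg. \ref{algo:fffixup-centers}.
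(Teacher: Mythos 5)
The paper's own proof is two sentences: every LHT produced by \ffullyfixup is built by combining two triples that are both shortest in the \emph{same} level graph $\levelgraph_i$ (this is what the restriction to $L_i^*/R_i^*$ and $LC_i^*/RC_i^*$ in cases (a) and (b) enforces), and at least one of the two constituents is centered in level $i$, so the result is an LST centered in $\levelgraph_i$. Your opening paragraph captures the first half of this (constituents are STs in $\levelgraph_i$, hence the combination is an LST there, not merely an LHT), and your discussion of the $L_i^*$ versus $LC_i^*$ blocks gestures at the second half. That part of your proposal is on target.

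The bulk of your argument, however, pursues a claim that is both stronger than what the observation asserts and false: that the vector $\CA_{\gamma'}$ returned by \ffullyfcentersend is supported on a single level. In the branch $k \geq j$ of Algorithm~\ref{algo:fffixup-centers} the routine returns $\CA' = \sum_{\gamma \in S_b} \CA_\gamma$ unchanged, and this sum can have many nonzero coordinates; even in the $k<j$ branch the coordinates below $k$ are copied over individually. Indeed the entire point of the $\CA_\gamma$ arrays (Section~\ref{sec:tuple-system}) is that, in contrast to \Thoe, a single triple can represent paths centered in different levels. ``Centered in a unique level graph'' means only that the triple has one well-defined center, namely $\arg\min_i(\CA_{\gamma'}[i]\neq 0)$, and that it is an LST in that particular $\levelgraph_i$ --- not that its count vector is concentrated there. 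By chasing the single-support claim you also end up conflating the center of the triple with the centers of its individual paths, and you explicitly defer the invariants you would need (``pinning down that invariant precisely \dots is where the bulk of the work lies''), so the argument as proposed does not close. The fix is to drop the center-vector arithmetic entirely and argue directly at the level of tuples: each generation event pairs a set $S_b$ of STs valid in $\levelgraph_i$ with an extension drawn from a level-$i$ structure, and either the extension vertex (the $LC_i^*$ blocks) or the constituent tuple (the $L_i^*$ block, where $i\geq h=center(S_b)$) supplies the level-$i$ center.
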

\begin{proof}
As described in (a) and (b) above, every LHT is generated using two triples which are shortest in the same level graph $\levelgraph_i$. Moreover, since at least one of them must be centered in level $i$, the resulting LHT is an LST centered in level $i$.
\hfill$\qed$	
\end{proof}	

The last novelty in the algorithm is updating center arrays (Alg. \ref{algo:fffixup-centers} called at step \ref{fprocess-centers} -- Alg. \ref{algo:fprocess-f}) in a similar way of \ffullycleanupend: Algorithm \ref{algo:fffixup-centers} identifies the oldest level $j$ related to the triples contained in $S_b$ (Step \ref{fcenter:level} -- Alg. \ref{algo:fffixup-centers}). 
If $j > k$ then we reshape the distribution for $\gamma'$ similarly to \ffullycleanup (Steps \ref{fcenter:ress} to \ref{fcenter:rese} -- Alg. \ref{algo:fffixup-centers}). Otherwise $\gamma'$ is completely contained in level $k$ and no reshaping is required (Step \ref{fcenter:nores} -- Alg. \ref{algo:fffixup-centers}).
\section{New Features in Algorithm \FFDe} \label{sec:ffeatures}
In this section, we discuss two challenges that arise when we attempt to generalize the level graph method used in \Tho (for APSP with unique SPs) to a fully dynamic APASP algorithm. Both are addressed by the algorithms in Section \ref{sec:algo} as noted below.

\noindent
\paragraph{\bf The bit $\beta$ feature.}
The control-bit $\beta$ was introduced (and only briefly described) in \PR to avoid the processing of untouched historical triples.
Here, we elaborate on this technique in more detail than \PRe, and we also describe how it helps in the more complex setting of the level \systemend.

Consider Figure \ref{fig:b}.
The ST $\gamma=((xa,by),wt,count)$ is created in level $k$ (Fig. \ref{fig:bk}). At $time(k)$, we have $\gamma \in P^*$ and also $\gamma \in P$ with $\beta(\gamma)=1$. In a more recent level $j < k$, a shorter triple $\gamma'=((xv,vy),wt', count')$, with weight $wt' < wt$, that goes trough an updated vertex $v$ is generated (Fig. \ref{fig:bj}). Thus at $time(j)$, we have $\gamma' \in P^*$ and also $\gamma' \in P$ with $\beta(\gamma')=1$; but $\gamma$ still appears in both $P^*$ and $P$ as a historical triple. Finally, a new LST $\gamma''=((xa',b'y),wt,count'')$, with the same weight as $\gamma$, is generated in level $i<j$ (Fig. \ref{fig:bi}). Note that $\gamma''$ is only in $P$ with $\beta(\gamma'')=0$ and not in $P^*$, as is the case of every LST that is not an ST. When an increase-only update removes $v$ and the triple $\gamma'$, the algorithm needs to restore all the triples with shortest weight $wt$. But while $\gamma$ is historical and does not require any additional extension, $\gamma''$ is only present in $P$ and needs to be processed. Our \FFD algorithm achieve this by checking the bit $\beta$ associated to each of these triples. The algorithm will extract and process all the triples with $\beta=0$ from $P(x,y)$ (see Step \ref{ffixup:phase3-addfromP-begin}, Alg.	\ref{algo:fget-n-paths}, where triples are extracted using $\beta=0$ as part of the key). These guarantees that a triple only present in $P$, or present in $P$ and $P^*$ with different counts is never missed by the algorithm.
\begin{figure}
	\centering
	\subfigure[level $k$]{
		\makebox[.3\textwidth]{
			\begin{tikzpicture}[every node/.style={circle, draw, inner sep=0pt, minimum width=5pt}]
			\node (x)[label=above:$x$] at (0,0)  {};
			\node (a)[label=left:$a$] at (0,-0.8) {};
			\node (b)[label=below left:$b$] at (0,-2.2) {};
			\node (y)[label=below:$y$] at (0,-3) {}; 
			\draw[->] (x) -- (a);
			\path[->,decoration={snake}] { (a) edge[decorate] (b)};
			\path[->,decoration={snake}] { (a) edge[bend right=60,decorate] (b.west)};
			\path[->,decoration={snake}] { (a) edge[bend left=60,decorate] (b.east)};
			\draw[->] (b) -- (y);
			\end{tikzpicture}} \label{fig:bk}} 
	\subfigure[level $j<k$]{
		\makebox[.3\textwidth]{
			\begin{tikzpicture}[every node/.style={circle, draw, inner sep=0pt, minimum width=5pt}]
			\node (x)[label=above:$x$] at (0,0)  {};
			\node (v)[label=right:$v$] at (1,-1.5) {};
			\node (y)[label=below:$y$] at (0,-3) {}; 
			\draw[->] (x) -- (v);
			\draw[->] (v) -- (y);
			\end{tikzpicture}} \label{fig:bj}} 
	\subfigure[level $i<j$]{
		\makebox[.3\textwidth]{
			\begin{tikzpicture}[every node/.style={circle, draw, inner sep=0pt, minimum width=5pt}]
			\node (x)[label=above:$x$] at (0,0)  {};
			\node (a1)[label=above left:$a'$] at (-1,-0.8) {};
			\node (b1)[label=below left:$b'$] at (-1,-2.2) {};
			\node (y)[label=below:$y$] at (0,-3) {}; 
			\draw[->] (x) -- (a1);
			\path[->,decoration={snake}] { (a1) edge[bend right=60,decorate] (b1.west)};
			\path[->,decoration={snake}] { (a1) edge[bend left=60,decorate] (b1.east)};
			\draw[->] (b1) -- (y);
			\end{tikzpicture}} \label{fig:bi}}
	\caption{The bit $\beta$ feature}
	\label{fig:b}
\end{figure}
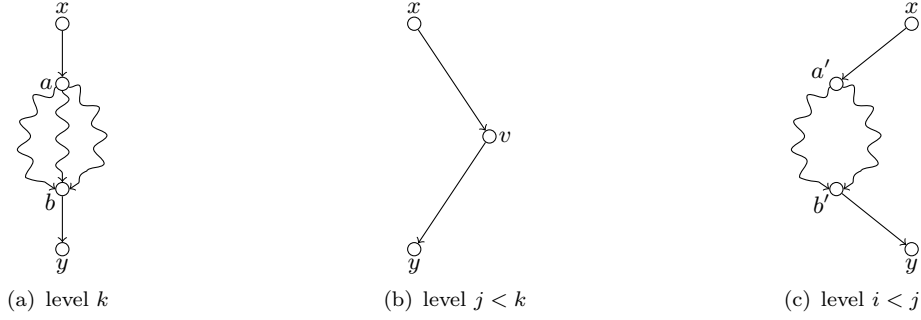 

\noindent
\paragraph{\bf The partial extension problem (PEP).}
Consider the update sequence described below and illustrated Fig. \ref{fig:ll}. 
Here the STs $\gamma=((xa,by),wt,count)$ and $\hat{\gamma}=((xa,cy),wt,count')$ 
are created in level $k$ (Fig. \ref{fig:llk}). Later, a left-extension to $x'$ generates the STs $\gamma'=((x'x,by),wt',count)$ and $\hat{\gamma}'=((x'x,cy),wt',count')$ in level $j<k$ (Fig. \ref{fig:llj}). Note that $\gamma$, $\hat{\gamma}$, $\gamma'$ and $\hat{\gamma}'$ are all present in $P^*$ and $P$ at $time(j)$. In a more recent level $i<j$, a decrease-only update on $v$ generates a shorter triple $\gamma_s=((xv,vy),wt_s, count_s)$ from $x$ to $y$, with $wt_s < wt$ going through $v$. 
In the same level, the triple $\gamma_s$ is also extended to $x'$ generating a triple $\gamma'' = ((x'x,vy),wt'',count_s)$ shorter than $\gamma'$ and $\hat{\gamma}'$ (Fig. \ref{fig:lli}). 
Thus at $time(i)$, $\gamma$, $\hat{\gamma}$, $\gamma'$ and $\hat{\gamma}'$ remain in $P^*$ as historical triples. 
Then, in level $h<i$, an update on $x''$ inserts the edges $(x'',x)$ and $(x'',c)$. This update generates an ST $\gamma'''=((x''c,cy))$ (shorter than $(x''x,vy)$) and also inserts $x'' \in LC_h^*(x,b)$) since $(x'',x)$ is on a shortest path from $x''$ to $b$; but it should not generate the triple of the form $(x''x,by)$ because $b$ is not on a shortest path from $x$ to $y$ at $time(h)$ (Fig. \ref{fig:llh}). 

\begin{figure}
	\centering
	\subfigure[level $k$]{
		\makebox[.20\textwidth]{
			\begin{tikzpicture}[every node/.style={circle, draw, inner sep=0pt, minimum width=5pt}]
			\node (x)[label=above:$x$] at (0,0)  {};
			\node (a)[label=left:$a$] at (0,-0.8) {};
			\node (b)[label=below left:$b$] at (0,-2.2) {};
			\node (c)[label=left:$c$] at (-1,-2.2) {};
			\node (y)[label=below:$y$] at (0,-3) {}; 
			\draw[->] (x) -- (a);
			\path[->,decoration={snake}] { (a) edge[decorate] (b)};
			\path[->,decoration={snake}] { (a) edge[decorate] (c)};
			\path[->,decoration={snake}] { (a) edge[bend left=60,decorate] (b.east)};
			\draw[->] (b) -- (y);
			\draw[->] (c) -- (y);
			\end{tikzpicture}} \label{fig:llk}} 
	\subfigure[level $j < k$]{
		\makebox[.20\textwidth]{
			\begin{tikzpicture}[every node/.style={circle, draw, inner sep=0pt, minimum width=5pt}]
			\node (x1)[label=above:$x'$] at (-0.5,0.8)  {};
			\node (x)[label=above right:$x$] at (0,0)  {};
			\node (a)[label=left:$a$] at (0,-0.8) {};
			\node (b)[label=below left:$b$] at (0,-2.2) {};
			\node (c)[label=left:$c$] at (-1,-2.2) {};
			\node (y)[label=below:$y$] at (0,-3) {}; 
			\draw[->] (x1) -- (x);
			\draw[->] (x) -- (a);
			\path[->,decoration={snake}] { (a) edge[decorate] (b)};
			\path[->,decoration={snake}] { (a) edge[decorate] (c)};
			\path[->,decoration={snake}] { (a) edge[bend left=60,decorate] (b.east)};
			\draw[->] (b) -- (y);
			\draw[->] (c) -- (y);
			\end{tikzpicture}} \label{fig:llj}} 
	\subfigure[level $i < j$]{
		\makebox[.20\textwidth]{
			\begin{tikzpicture}[every node/.style={circle, draw, inner sep=0pt, minimum width=5pt}]
			\node (x1)[label=above:$x'$] at (-0.5,0.8)  {};
			\node (x)[label=above right:$x$] at (0,0)  {};
			\node (v)[label=right:$v$] at (1,-1.5) {};
			\node (y)[label=below:$y$] at (0,-3) {}; 
			\draw[->] (x1) -- (x);
			\draw[->] (x) -- (v);
			\draw[->] (v) -- (y);
			\end{tikzpicture}} \label{fig:lli}}
	\subfigure[level $h < i$]{
		\makebox[.20\textwidth]{
			\begin{tikzpicture}[every node/.style={circle, draw, inner sep=0pt, minimum width=5pt}]
			\node (x2)[label=above:$x''$] at (0.5,0.8)  {};
			\node (x)[label=above left:$x$] at (0,0)  {};
			\node (a)[label=left:$a$] at (0,-0.8) {};
			\node (v)[label=right:$v$] at (1,-1.5) {};
			\node (b)[label=below left:$b$] at (0,-2.2) {};
			\node (c)[label=left:$c$] at (-1,-2.2) {};
			\node (y)[label=below:$y$] at (0,-3) {};
			\draw[->] (x2) -- (x);
			\path[->] { (x2) edge[bend right=45] (c)}; 
			\draw[->] (x) -- (a);
			\path[->,decoration={snake}] { (a) edge[decorate] (b)};
			\path[->,decoration={snake}] { (a) edge[bend left=60,decorate] (b.east)};
			\draw[->] (x) -- (v);
			\draw[->] (c) -- (y);
			\end{tikzpicture}} \label{fig:llh}} 
	\caption{PEP instance (only centered STs are kept in each level) -- all edge weights are unitary}
	\label{fig:ll}
\end{figure}
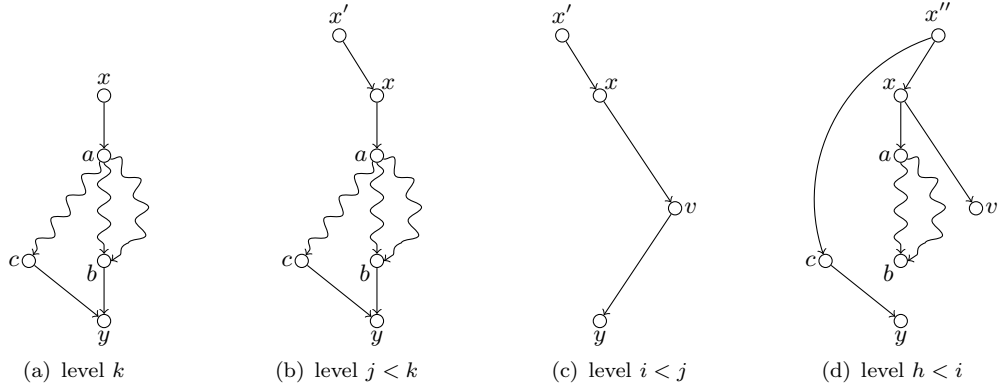 

When an increase-only update removes $v$ and the shortest triple ${\gamma}_s$ from $x$ to $y$, the algorithm needs to restore all  historical triples with shortest weights from $x$ to $y$. When $\gamma$ and $\hat{\gamma}$ are restored, we need to perform suitable left extensions as follows. An extension to $x''$ is needed only for $\gamma$: in fact $\hat{\gamma}$ should not be extended to $x''$ because the $\ell$-tuple of the newly generated tuple is not an ST in the graph.  On the other hand, no extension to $x'$ is needed since both $\gamma'$ and 
$\hat{\gamma}'$ will be restored (from HT to ST).  Our algorithm needs to distinguish all of these cases correctly and efficiently.

In order to maintain both correctness and efficiency in this scenario for APASP, we use two new data structures (described in Section \ref{sec:newfeat}): (1) the
historical distance matrix $\DMs$ that allow us to efficiently determine the most recent level graph in which an HT was an ST (see for example Steps \ref{fprocess-f:DM} and \ref{fprocess-f:DM1}, Alg. \ref{algo:fprocess-f}),
and (2) the HE sets $\LN$ and $\RN$ that allow us to efficiently identify exactly those new extensions that need to be performed (see for example Steps \ref{ffixup:RNstart} -- \ref{ffixup:RNend}, Alg. \ref{algo:fget-n-paths}).
The methodology of these data structures is fully discussed in the description of \ffullyfixup (Section \ref{ffixup-desc}).
Note that, the PEP doesn't arise in \Tho because of the unique SP assumption: in fact when only a single SP of a given length is present in the graph for each pair of nodes, the algorithm can check for all the $O(n^2)$ paths maintained in each level and decide which one should be extended. Given the presence of multiple SPs in our setting, we cannot afford to look at each tuple in the \system.
\section{Correctness and Complexity} \label{sec:proof}
In this section we will first prove the complexity bounds of our \ffullyupdate algorithm, then we will establish correctness.

\subsection{Complexity} \label{sec:complex}
The complexity analysis of algorithm \ffullydynamic is similar to that for the \PR algorithm. We highlight the following new elements:

\begin{enumerate}
	\item As noted in Section \ref{ffixup-desc} (see Observation \ref{centerlht}), every triple created by \ffullyfixup is an LST in the level graph (PDG) in which is centered, 
	and by the decremental only properties of level graphs, it will continue to be an LST in that level graph until it is removed. In contrast, \PR can create LHTs by combining HTs not centered in any PDG. This results in an additional $\Theta(\log n)$ factor in the amortized bound there. 
	
	\item  We can bound the number of LHTs that contain a given vertex $u$ as $O(z' \cdot \vstar^2)$, where $z'$ is the number of active level graphs that contain vertex $u$ and tuples passing through $u$ (by Corollary \ref{lemma:count1}). Given our level \systemend, $z'$ is clearly $O(\log n)$. In \PRe, this bound is $(z + z'^2)$ where $z$ is the number of active PDGs, and $z'$ is the number of PDGs that contain $v$.
		
	\item We can show that the number of accesses to $\RN$ and $\LN$, outside of the newly created tuples, is worst-case $O(n \cdot \nu^*)$ per call to \ffullyupdateend. The overhead given by the level data structures is $O(\log n)$ for each access (see Lemma \ref{lem:amor1}). These structures are not used in \PRe.
\end{enumerate}

\begin{lemma} \label{lemma:count}
	Let $G$ be a graph after a sequence of calls to \ffullyupdateend. Let $z$ be the number of active level graphs (PDGs), and let $z' \leq z$ be the number of level graphs that contain a given vertex $v$. Suppose that every HT in the \system
	is an ST in some level graphs, and every LHT is an LST in some level graph. If $n$ and $m$ bound the number of vertices and edges,
	respectively, in any of these graphs, and if
	$\nu^*$ bounds the maximum number of distinct edges that lie on shortest paths through any given vertex in any of the these graphs, then:
	
	\begin{enumerate}
		\item The number of LHTs  in $G$'s \system is at most $O(z \cdot m \cdot \nu^*)$.
		
		\item The number of LHTs that contain a vertex $v$ in $G$ is $O( z' \cdot {\nu^*}^2)$.
	\end{enumerate}
\end{lemma}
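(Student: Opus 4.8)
The plan is to count locally historical tuples (LHTs) by charging each one to the shortest-path tuple (ST) in the level graph where it is centered, using the hypothesis that every LHT is an LST in some level graph and every HT is an ST in some level graph. First I would recall the structure of an LHT: a tuple $\tau=(x'x,yy')$ is an LHT only if its proper subpaths form HTs, so in particular $\tau$ is generated by combining an $\ell$-tuple HT $(x'x,y)$ and an $r$-tuple HT $(x,yy')$, and each such HT is in turn an ST $\ell$-tuple (resp. $r$-tuple) in some level graph. By the hypothesis and Observation~\ref{centerlht}, every LHT in $G$'s \system is an LST \emph{centered} in a unique level graph $\levelgraph_k$, and there it is an LST obtained from two STs that are shortest in $\levelgraph_k$. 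So it suffices to bound, for each of the $z$ active level graphs $\levelgraph_k$, the number of LSTs centered in $\levelgraph_k$, and then sum over the (at most) $z$ level graphs.

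For Part~1, I would fix a level graph $\levelgraph_k$ and bound the number of LSTs that can be centered there. An LST centered in $\levelgraph_k$ is a path $p = x' \to x \rightsquigarrow y \to y'$ whose subpath $x\rightsquigarrow y$ is a shortest path in $\levelgraph_k$ and such that both extensions $(x',x)$ and $(y,y')$ lie on shortest paths in $\levelgraph_k$; the tuple it belongs to is determined by the first edge $(x',x)$, the last edge $(y,y')$, and the weight. The standard counting argument (as in \cite{NPR14b,PR14}) is: the middle shortest path $x\rightsquigarrow y$ together with the edge $(x',x)$ is a \emph{locally shortest} path starting with $(x',x)$; fixing the start edge $(x',x)$, the number of vertices $y$ reachable by such an internally-shortest path is $O(\nu^*)$ because all the edges $x\rightsquigarrow y$ (over all $y$) lie on the single-source shortest-path dag rooted after $x$, which has at most $\nu^*$ distinct edges. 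Hence the number of $\ell$-tuples of the form $(x'x,y)$ that are LSTs in $\levelgraph_k$ is $O(m\cdot\nu^*)$ (choose edge $(x',x)$ in $m$ ways, then $y$ in $O(\nu^*)$ ways), and symmetrically for $r$-tuples. Finally, each LST centered in $\levelgraph_k$ is the unique extension of such an $\ell$-tuple LST by one more last edge $(y,y')$ lying on a shortest path through $y$ — but $y$ is fixed once the $\ell$-tuple is fixed, and the number of such last edges $(y,y')$ over all LSTs sharing that middle is again bounded using $\nu^*$; more simply, charge the LST to its $\ell$-tuple $(x'x,y)$ and note that the number of LSTs with a given $\ell$-tuple is $O(\nu^*)$ (the last edge $(y,y')$ ranges over shortest-path-dag edges out of $y$, hence at most $\nu^*$ of them). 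This gives $O(m\cdot\nu^*)\cdot O(\nu^*)$? — no: I would instead charge each LST to the \emph{pair} (first edge, last edge), and observe the middle $x\rightsquigarrow y$ is then forced up to weight, so the count of LSTs in $\levelgraph_k$ is $O(m\cdot\nu^*)$, since fixing the first edge $(x',x)$ the pair $(y,y')$ is a shortest-path-dag edge reachable from $x$, and there are $O(\nu^*)$ such edges. Summing over $z$ level graphs yields $O(z\cdot m\cdot\nu^*)$.

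For Part~2, I would repeat the per-level-graph count but restricted to LSTs containing a fixed vertex $v$, and sum only over the $z'$ level graphs that contain both $v$ and tuples through $v$. Fix such a level graph $\levelgraph_k$. An LST $p = x'\to x\rightsquigarrow y\to y'$ through $v$ has $v$ either as an internal vertex of the middle shortest path, or as one of $x,x',y,y'$. The key substructure is again that $x\rightsquigarrow v$ and $v\rightsquigarrow y$ are each shortest in $\levelgraph_k$; the number of edges on shortest paths through $v$ in $\levelgraph_k$ is at most $\nu^*$, so the "incoming" part (choice of $x'$, $x$, and the shortest path $x\rightsquigarrow v$) is determined by choosing, among the $\le\nu^*$ edges on shortest paths through $v$, the edge entering $v$ on that path — but we also need the first edge $(x',x)$, which is one of the $\le\nu^*$ edges on a shortest path through $v$ (since $x'\to x\rightsquigarrow v$ is locally shortest through $v$), giving $O(\nu^*)$ choices; symmetrically the "outgoing" part $(y,y')$ gives $O(\nu^*)$ choices; the weight and hence the middle path is then forced. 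Thus $O({\nu^*}^2)$ LSTs through $v$ per level graph, and summing over the $z'$ relevant level graphs gives $O(z'\cdot{\nu^*}^2)$.

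The main obstacle I anticipate is making the "middle shortest path is forced once we fix the endpoints/edges and weight" claim precise in the APASP (multi-path) setting: unlike in the unique-SP case of \cite{Thorup04}, there can be exponentially many actual shortest paths between two vertices, but they are grouped into a single \emph{tuple}, and the counting is over tuples, not paths. So the real content is (i) that each LHT corresponds to a unique centered LST tuple (already given by Observation~\ref{centerlht} and the hypothesis), and (ii) the combinatorial fact — inherited from \cite{NPR14b} — that a locally shortest $\ell$-tuple $(x'x,y)$ is determined by its first edge and endpoint $y$, and that for a fixed first edge the reachable endpoints $y$ (equivalently, the last edges into them) number $O(\nu^*)$ because all relevant edges lie in a single-source shortest-path dag with $\le\nu^*$ edges. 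I would need to verify that the $\nu^*$ bound is applied to the correct source (the tail of the first edge, or $v$ itself for Part~2) and that extending an $\ell$-tuple LST by a post-extension edge keeps us within the same dag, so that no extra factor of $\log n$ or $\nu^*$ sneaks in. Everything else is routine summation over the $z$ (resp. $z'$) level graphs.
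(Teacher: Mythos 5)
Your counting argument for Part~1 is essentially the paper's own proof with the order of summation reversed: the paper fixes the first edge ($m$ choices) and counts $z\cdot\nu^*$ choices for the last edge across all level graphs, while you fix a level graph, count $O(m\cdot\nu^*)$ LSTs in it, and sum over the $z$ levels; both hinge on the same two facts (the last edge lies in the single-source shortest-path dag rooted at the tail of the first edge, and the weight is then forced within a fixed level graph), so Part~1 is fine.

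For Part~2 there is a genuine, though fixable, gap at exactly the point you flagged. Your claim that both the first and the last edge of an LST through $v$ lie among the $\le\nu^*$ edges on shortest paths through $v$ is true when $v$ is internal to the tuple (or is one of the two inner endpoints), because then each proper subpath of the LST is a shortest path that contains $v$. It fails when $v$ is the first (or, symmetrically, last) vertex of the tuple, i.e., for tuples of the form $(va,by)$: the proper subpath $(a,by)$ is a shortest path that starts at $a$ and need not pass through $v$ at all, so the last edge $(b,y)$ is not in general an edge on a shortest path through $v$, and your $O(\nu^*)$ bound for the ``outgoing part'' has no justification in that case. The paper handles these boundary tuples with a separate count: there are $n-1$ choices for the edge $(v,a)$ and $z'\cdot\nu^*$ choices for the $r$-tuple $(a,by)$ (the factor $\nu^*$ now bounding edges on shortest paths through $a$, not through $v$), giving $O(z'\cdot n\cdot\nu^*)$, which is absorbed into $O(z'\cdot{\nu^*}^2)$ only via the standing assumption $\nu^*=\Omega(n)$. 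You need this case split and the explicit appeal to $\nu^*=\Omega(n)$ to close Part~2.
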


\begin{proof}
	For part 1, we bound the number of LHTs $(xa,by)$ (across all weights) that can exist in
	$G$. The edge $(x,a)$ can be chosen in $m$ ways, and once we fix $(x,a)$, the 
	$r$-tuple $(a,by)$ must be an ST in one of the $\levelgraph_j$. Since $(b,y)$ must lie on  
	a shortest path through
	$a$ centered in a graph $\levelgraph_i$, that contains the $r$-tuple $(a,by)$ of shortest weight in $\levelgraph_i$, the number of different choices
	for $(b,y)$ that will then uniquely determine the tuple $(xa, by)$, together with its weight, is
	$z \cdot \nu^*$. Hence the number of LHTs in $G$'s \system is $O(z \cdot m \cdot \nu^*)$.
	
	For part 2, the number of LHTs that contain $v$ as an internal vertex is simply the number of
	LSTs across the $z'$ graphs that contains $v$, and this is  $O(z' \cdot \vstar^2)$. We now
	bound the number of LHTs $(va, by)$. There are $n-1$ choices for the edge $(v,a)$ and
	$z' \cdot \nu^*$ choices for the $r$-tuple $(a, by)$, hence the total number of such tuples 
	is $O(z' \cdot n \cdot \nu^*)$. The same bound holds for LHTs of the form $(xa, bv)$. Since
	$\nu^* = \Omega (n)$, the result in part 2 follows.
	\hfill$\qed$
 \end{proof}
 
  \begin{corollary} \label{lemma:count1}
  	At a given time step, let $B$ be the maximum number of tuples in the \system containing a path through
  	a given vertex in a given level graph.
	Then, $B=O(\vstar^2)$.
  \end{corollary}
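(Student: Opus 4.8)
The plan is to derive this directly from part 2 of Lemma \ref{lemma:count} by specializing to a single level graph. Fix a time step and fix a vertex $v$ and a level graph $\levelgraph_k$ that contains $v$ together with tuples (paths) passing through $v$. I want to bound the number of tuples in the \system whose path set includes a path through $v$ and which are ``associated to'' $\levelgraph_k$ in the relevant sense — i.e., tuples that are STs/LSTs centered in $\levelgraph_k$ (by Observation \ref{centerlht}, every LHT the algorithm maintains is an LST centered in a unique level, so each such tuple is counted against exactly one level graph). Applying Lemma \ref{lemma:count}, part 2, with $z' = 1$ (a single level graph) gives that the number of LHTs through $v$ centered in that one level graph is $O(1 \cdot {\nu^*}^2) = O({\nu^*}^2)$, which is exactly the claimed bound on $B$.

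Concretely, I would re-run the counting argument from the proof of part 2 restricted to $\levelgraph_k$: the number of LSTs through $v$ as an internal vertex inside a single PDG is $O({\nu^*}^2)$ (choose one of at most $\nu^*$ edges incident from below on $v$'s SP structure and one of at most $\nu^*$ edges incident from above, as in the lemma); the number of LHTs of the form $(va, by)$ centered in $\levelgraph_k$ is at most $(n-1) \cdot \nu^*$, since the $r$-tuple $(a, by)$ must be the shortest-weight $r$-tuple through $a$ in $\levelgraph_k$ and there are only $\nu^*$ such choices; symmetrically for $(xa, bv)$. Since $\nu^* = \Omega(n)$, all three quantities are $O({\nu^*}^2)$, so $B = O({\nu^*}^2)$.

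The only subtlety — and the one point that needs care rather than routine calculation — is matching the informal phrase ``tuples in the \system containing a path through a given vertex in a given level graph'' to the objects that Lemma \ref{lemma:count} actually counts. The lemma's hypothesis is that every HT is an ST in some level graph and every LHT is an LST in some level graph; Observation \ref{centerlht} guarantees exactly this structural property for the tuples \ffullyfixup produces, and moreover pins each LST to a single level graph (its level center), so the per-level count is well-defined and the $z'=1$ specialization is legitimate. Given that identification, the corollary is immediate from the lemma. I expect no further obstacle; the heavy lifting is already in Lemma \ref{lemma:count} and Observation \ref{centerlht}.
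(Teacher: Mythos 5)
Your proposal is correct and matches the paper's (implicit) argument: the corollary is stated without a separate proof precisely because it is the $z'=1$ specialization of Lemma~\ref{lemma:count}, part~2, restricted to a single level graph, which is exactly what you do. Your additional care in invoking Observation~\ref{centerlht} to justify that each tuple is pinned to a unique level graph is a reasonable elaboration of the same route, not a different one.
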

  
 \begin{lemma}
 	\label{lem:amor}
 	(a) - The cost for an \ffullycleanup call on a node $v$ when $z'$ active levels contain triples through $v$  is $O(z' \cdot \vstar^2 \cdot \log n)$.\\
 	(b) - The cost for a real \ffullycleanup call is $O(\vstar^2 \cdot \log^2 n)$\\
 	(c) - The cost for a dummy \ffullycleanup call is $O(\vstar^2 \cdot \log n)$.
 \end{lemma}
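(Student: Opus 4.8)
The plan is to analyze the work done by a single \ffullycleanup call by charging it against the tuples it touches, then amortize the dummy-update calls over the update sequence using the standard level-folding argument from \Tho and \PRe. For part~(a), I would first observe that \ffullycleanup, initialized with the singleton triple $[(v,v),0,1]$, explores exactly the LHTs that contain $v$ (as an internal vertex or as an endpoint), since the heap $H_c$ propagates deletions only along left/right extensions of tuples through $v$. Each such tuple is processed once when extracted with its min-key $[wt,x,y]$ (the \MT marking scheme, inherited from \NPRe/\PRe, prevents a tuple from being re-enqueued and re-processed). By Lemma~\ref{lemma:count}(2), the number of LHTs containing $v$, counting only the $z'$ active level graphs that actually contain triples through $v$, is $O(z' \cdot {\nu^*}^2)$. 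For each processed tuple, the dominant per-tuple cost comes from: the priority-queue operations on $P$, $P^*$, and $H_c$ (each $O(\log n)$); the call to \ffullyccenters, which only manipulates the $O(\log n)$-size center arrays $\CA_{\gamma'}$, $\CA_{\gamma''}$, $\CA'$ and so costs $O(\log n)$; the updates to $\DMs$, $\RN$, $\LN$, which via the maintained double links cost $O(\log n)$ each (the lists are of size $O(\log n)$); and the per-level loop over the $O(\log n)$ active levels updating $P^*_i$, $L^*_i$, $R^*_i$, $LC^*_i$, $RC^*_i$ (Steps~\ref{ffcleanup:forPri}--\ref{ffcleanup:remLSe}). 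This last loop is the one point where I must be careful: a naive reading gives $O(\log n)$ work \emph{for each} processed tuple on top of the search-tree $O(\log n)$, which would yield $O(z' \cdot {\nu^*}^2 \cdot \log^2 n)$ and break the claimed bound. The key step is therefore to argue that the per-level structural updates can be charged differently — namely, each \emph{removal} of an entry from a local $P^*_i$, $L^*_i$, $LC^*_i$ (etc.) is charged to the corresponding \emph{insertion} done by an earlier \ffullyfixup, and each insertion is paid for when it is created; since a tuple sits in at most one level's $P^*_i$ as a centered ST but its extension-set memberships $L^*_i$ are bounded across the $z'$ relevant levels by the same $O(z' \cdot {\nu^*}^2)$ count (Corollary~\ref{lemma:count1} bounds it per level by $O({\nu^*}^2)$, and there are $z'$ relevant levels), the total local-structure work over the whole call is $O(z' \cdot {\nu^*}^2)$ search-tree operations, i.e.\ $O(z' \cdot {\nu^*}^2 \cdot \log n)$. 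Combining, the total is $O(z' \cdot {\nu^*}^2 \cdot \log n)$, proving~(a).

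For part~(b), a real \ffullycleanup is called on the genuinely updated vertex $v$ at step $t$, for which $z'$ — the number of active level graphs containing triples through $v$ — is at most the total number of active levels, which is $O(\log n)$ (the largest level ever created before resetting is $r=\log 2n$). Substituting $z' = O(\log n)$ into part~(a) gives $O({\nu^*}^2 \cdot \log^2 n)$.

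For part~(c), a dummy \ffullycleanup is invoked on a vertex $u \in \NODES$ during the folding of levels $j<k$ into level $k$ at step $t$. The crucial observation here — this is the heart of the amortization and the main obstacle — is that when $u$ was last (re-)centered, it was placed at some level $\ell < k$, and by the level-system invariant (inherited from \Tho/\PRe) the only active level graphs containing triples through $u$ at this moment are those with index $\le \ell$, of which there are $O(\log n)$; but more sharply, in the dummy-update setting the tuples through $u$ are precisely those created since $u$'s last recentering, and I would invoke the same structural bound as in~(a) with $z'=O(\log n)$ to get $O({\nu^*}^2 \cdot \log^2 n)$ \emph{per call}, then amortize: a vertex participates as a dummy update at level $k$ only $\Theta(1)$ times per $\Theta(2^k)$ steps (standard binary-counter folding), so the amortized contribution of dummy cleanups is $O({\nu^*}^2 \cdot \log n)$ per update step — but the lemma asks for the \emph{per-call} worst-case cost, and here the claim $O({\nu^*}^2 \cdot \log n)$ means that for a dummy call the relevant $z'$ is actually $O(1)$ in the charged sense, because the tuples through $u$ that must be deleted live in a bounded number of levels (only the levels strictly below $k$ being folded that contain $u$, and the level-folding makes this collapse) — I would make this precise by noting that in Algorithm~\ref{algo:ffully-main}, dummy updates re-center all of $\NODES$ at level $k$ \emph{before} their cleanups interact, so each dummy \ffullycleanup sees $u$ centered at a single level and the LHTs through $u$ to be removed are those from one prior level-$\ell$ epoch, giving $z'=O(1)$ and hence $O({\nu^*}^2 \cdot \log n)$ by part~(a). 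The main obstacle, as flagged, is pinning down exactly this $z'=O(1)$ claim for dummy calls — it rests on the decremental-only property of level graphs together with the order in which \ffullydynamic processes $\NODES$ — and I would prove it as a separate structural claim before invoking part~(a).
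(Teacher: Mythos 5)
Your parts (a) and (b) follow the paper's proof essentially verbatim: bound the number of tuples through $v$ by $O(z'\cdot{\vstar}^2)$ via Corollary~\ref{lemma:count1} applied to each of the $z'$ relevant levels, pay $O(\log n)$ per data-structure operation, and for (b) substitute $z'=O(\log n)$. Your worry about the per-level loop in Steps~\ref{ffcleanup:forPri}--\ref{ffcleanup:remLSe} is legitimate, and the resolution you land on --- the total number of (triple, level) incidences for triples through $v$ is $O(z'\cdot{\vstar}^2)$, so the local-structure work is $O(z'\cdot{\vstar}^2\cdot\log n)$ overall --- is the right one and is what the paper implicitly uses. But be careful with your alternative phrasing of charging each local removal to an earlier insertion by \ffullyfixupend: that is an amortization across calls and cannot by itself yield the per-call worst-case bound the lemma asserts, so you should rely on the incidence count, not the charging.

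The genuine gap is in (c). The entire content of part (c) is the structural claim that $z'=O(1)$ (the paper shows $z'=2$) for a dummy cleanup, and you explicitly defer its proof; moreover your sketch of why it holds is not quite right. The vertices of $\NODES$ are not all re-centered ``before their cleanups interact'': Algorithm~\ref{algo:ffully-main} interleaves them, calling \ffullyupdate (cleanup, then fixup) on each $u\in\NODES$ one at a time in decreasing order of update time. The paper's argument is an induction on this order. The vertex $u$ updated at time $t-1$ has all its tuples confined to $level(t-1)$, because its own cleanup at time $t-1$ removed them from all older levels and any tuple through $u$ in a more recent level could only have been created by an update after $t-1$, of which there are none yet; so its dummy cleanup touches one level and its dummy fixup reinserts only into the current level. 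Then, for the vertex updated at time $t-i$, every vertex updated in $[t-i+1,\,t-1]$ has already been processed and recentered at the current level, so its tuples live only in $level(t-i)$ and the current level, giving $z'=2$ and hence $O({\vstar}^2\cdot\log n)$ by part (a). Without this induction the claim $z'=O(1)$ is unsupported, and the detour through sequence-level amortization earlier in your (c) is a dead end, since the lemma is a per-call bound (the amortization over dummy updates is done later, in the proof of Theorem~\ref{th:main}).
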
 
 \begin{proof}
 	(a) - Algorithm \ref{algo:ffcleanup} extracts all the \LHTe s that go through the update vertex from the heap $H_c$. Since the number of these \LHTe s is bounded by $B$ at each level (by Corollary \ref{lemma:count1}), the total cost is $O(z' \cdot B \log n)$ where $z'$ is the number of active levels that contain triples through $v$.
 	Algorithm \ref{algo:ffcleanup-vector} requires only $O(\log n)$ time for each step, except in step \ref{ffcleanup:dist1} where the cost is $O(\log n)$ for each triple extracted from $H_c$ that goes through the updated vertex. Since the number of such triples is bounded by $O(z' \cdot \vstar^2)$ (by Lemma \ref{lemma:count}), the worst-case cost for a call to Algorithm \ref{algo:ffcleanup-vector} within an \ffullycleanup phase is $O(z' \cdot \vstar^2 \cdot \log n)$.
 	In Algorithm \ref{algo:ffcleanup-process}, a triple can be added to heap $H_c$, or searched and removed from a constant number of priority queues among $z'$ different active levels. Moreover, for the structures $\DMs$, $\RN$ and $\LN$ each triple spends a constant time to be unlinked and eventually to update the structures. Since, priority queue operations have a $O(\log n)$ cost and the number of triples examined is bounded by $O(z' \cdot \vstar^2)$, the complexity for Algorithm \ref{algo:ffcleanup-process} is at most $O(z' \cdot \vstar^2 \cdot \log n)$. Thus an \ffullycleanup call that operates on $z'$ active levels requires at most $O(z' \cdot \vstar^2 \cdot \log n)$.\\
 	(b) - Since $z \leq \log 2n$, the cost for a real \ffullycleanup call is $O(\vstar^2 \cdot \log^2 n)$ (by part (a)).\\
 	(c) - For a dummy cleanup on a vertex $w$, \ffullycleanup only needs to clean the local data structures in level $center(w)$, where $w$ is centered, and in the current level graph. In fact, let $t$ be the current update step; in the dummy cleanup phase, we start with the node $u$  that was updated at time $t-1$ (the most recent update before the current one). The node $u$ received an update in the previous phase, thus it disappeared from all the levels older than $level(t-1)$ and, with it, all the LSTs containing $u$ in these levels. Hence, all the triples containing $u$ in the \system must be LSTs in $level(t-1)$. We have at most $B$ of them and \ffullycleanup spends $O(B \cdot \log n)$ (considering the access to the data structures) to remove them.   
 	Then, the dummy update reinserts $u$ only in the current graph. The next phase moves on the node $u'$ updated at time $t-2$. Again, all the tuples containing $u'$ must be LSTs in $level(t-2)$ and eventually the current graph if they were inserted because of the previous dummy update on $u$.
 	
 	Suppose in fact that there is a tuple $\gamma$ that contains $u'$ in another level (except the current graph). The tuple $\gamma$ cannot be in a level older than $level(t-2)$ because when $u'$ was updated at time $(t-2)$, the cleanup algorithm removed all the tuples containing $u'$ from any level older than $t-2$. Moreover, a tuple containing $u'$ present in a level younger than $level(t-2)$ could appear if and only if it was generated by any update more recent of $t-2$ (in this case only the dummy update on $u$ performed in the current graph). Thus a contradiction.
 	
 	This argument can be recursively applied to  every other node in the sequence: in fact for the node $u''$ updated at time $(t-i)$ all the nodes updated in the interval $[t-i+1,t-1]$ will be already processed by \ffullycleanupend, leaving all the tuples containing $u''$ only in $level(t-i)$ and $t$. It follows that, for a dummy update, $z' = 2$.
 	Thus the cost for a dummy \ffullycleanup call is $O(\vstar^2 \cdot \log n)$ (by part (a)).
 	\hfill$\qed$
 \end{proof}
 
  \begin{lemma}
  	\label{lem:amor1d}
  	The cost for a dummy \ffullyfixup call on a node $v$ is $O(\vstar^2 \cdot \log n)$. 
  \end{lemma}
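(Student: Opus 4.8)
The plan is to decompose the running time of a dummy \ffullyfixup call into (i) the fixed cost of \ffullypopulate plus the extraction of the single candidate min-weight triple it places in $H_f$ for each vertex pair, (ii) the work that rebuilds the tuples through $v$, and (iii) the new $\RN/\LN$ scans forced by PEP instances, and to bound these using, respectively, the assumption $\vstar = \Omega(n)$, Corollary~\ref{lemma:count1} together with the $z'=2$ fact from the proof of Lemma~\ref{lem:amor}(c), and the $\RN/\LN$-access estimate of Section~\ref{sec:complex}.

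For (i), \ffullypopulate (Algorithm~\ref{algo:ftrivial}) inserts the $O(n)$ edges incident on $v$ and, for each of the $n^2$ pairs $(x,y)$, one min-key triple of $P(x,y)$ into $H_f$, with $O(\log n)$ overhead apiece; since $\vstar = \Omega(n)$ this is $O(n^2\log n) = O(\vstar^2\cdot\log n)$, and each such triple is later extracted in $O(\log n)$. I would then use that a dummy update leaves the graph --- hence every shortest-path distance --- unchanged: for a pair $(x,y)$ with no shortest path through $v$, the test at Step~\ref{ffixup:phase3-main-check} of Algorithm~\ref{algo:fget-n-paths} fails (the min-weight of $P^*(x,y)$ is untouched by the preceding cleanup) and the else-branch finds no extracted path through $v$, so \ffullygetnew returns $S = \emptyset$ and both extend calls do nothing; such a pair therefore costs only $O(\log n)$, for $O(n^2\log n) = O(\vstar^2\cdot\log n)$ in total.

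For (ii), the remaining pairs are exactly those carrying a shortest path through $v$, and the only triples created, extended, or re-centered are triples through $v$. By the $z'=2$ argument behind Lemma~\ref{lem:amor}(c), the preceding dummy \ffullycleanup deleted all such triples and they had resided in only $O(1)$ level graphs, so there were at most $O(\vstar^2)$ of them by Corollary~\ref{lemma:count1}; since the graph is unchanged, \ffullyfixup re-creates no more than this many triples (and only the LST/ST part of them, each centered in a unique level graph by Observation~\ref{centerlht}). Each created or touched triple costs $O(\log n)$ for the relevant operations on $P$, $P^*$, $P^*_i$, the global $L,R$ lists, the balanced trees $L^*_i,R^*_i,LC^*_i,RC^*_i$ and $\dict_i$, and for the one call to \ffullyfcentersend (Algorithm~\ref{algo:fffixup-centers}), which merely rewrites the $O(\log n)$-size array $\CA_{\gamma'}$; and the $O(\log n)$-level sweeps inside \ffullyfixupend-$\ell$-extend and its symmetric counterpart (Algorithm~\ref{algo:fprocess-f}) add only $O(\log n)$ per distinct index $(x,b)$ with a nonempty $S_b$ (emptiness of the balanced trees being checkable in $O(1)$), of which there are $O(\vstar^2)$. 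So part (ii) is $O(\vstar^2\cdot\log n)$.

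For (iii), the scans of $\RN$ and $\LN$ in Steps~\ref{ffixup:RNstart}--\ref{ffixup:RNend} of Algorithm~\ref{algo:fget-n-paths}, which recover the historical ($\beta=1$) triples sitting inside a PEP instance that still require an extension, are bounded by the estimate of Section~\ref{sec:complex} (Lemma~\ref{lem:amor1}): $O(n\cdot\vstar)$ accesses outside the newly created triples per \ffullyupdate call, each with $O(\log n)$ overhead from the level data structures, hence $O(n\cdot\vstar\cdot\log n) = O(\vstar^2\cdot\log n)$. Summing (i)--(iii) gives the claimed $O(\vstar^2\cdot\log n)$. I expect the main obstacle to be (ii) and (iii): in (ii), showing that a dummy fixup re-creates only $O(\vstar^2)$ triples and not an extra $\log n$ factor more --- which rests on the graph being unchanged so that fixup reproduces exactly the LST/ST part of the triples deleted by the preceding dummy cleanup, together with the $z'=2$ bound of Lemma~\ref{lem:amor}(c) --- and in (iii), establishing the amortized bound on $\RN/\LN$ accesses, the genuinely new ingredient relative to \PRe.
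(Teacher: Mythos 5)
Your proposal is correct and follows essentially the same route as the paper's proof: a fixed $O(n^2\log n)=O(\vstar^2\log n)$ charge for populating $H_f$ and touching one $P^*(x,y)$ entry per pair, plus the observation that since a dummy update leaves all distances unchanged, \ffullygetnew only routes triples through $v$ into $S$, of which there are at most $B=O(\vstar^2)$ by Corollary~\ref{lemma:count1}. Your part (iii) on the $\RN/\LN$ scans is an extra safeguard the paper omits from this lemma (its proof implicitly assumes the else-branch of Algorithm~\ref{algo:fget-n-paths} is taken for a dummy update), but your bound for it is consistent with the paper's own $O(n\cdot\vstar)$ access estimate and does not change the conclusion.
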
 
  \begin{proof}
  	Consider a dummy \ffullyfixup applied to a vertex $v$ in $\NODES$. We only need
  	to bound the cost for accessing the entries in the $P^*(x,y)$ and the cost of re-adding LSTs containing $v$, previously removed by the dummy \ffullycleanup but still in the current graph after the dummy update.
  	In fact the vertex $v$ is removed by an earlier dummy \ffullycleanupend, and while this removes all the HPs containing the vertex $v$, it does not change any LST centered in any $\levelgraph_j$ that does not contain $v$. Hence these other
  	LSTs will be present in the \system with unchanged weight and count, when dummy \ffullyfixup is applied to $v$.
  	Since for any pair $x,y$, the SP distance will not change after the dummy update, the dummy \ffullyfixup will only insert in the set $S$ triples containing the node $v$ for additional extension (this is accomplished by the check at Step \ref{ffixup:phase3-main-check} - Alg. \ref{algo:fget-n-paths}, followed by Steps \ref{ffixup:phase3-nomain-check}--\ref{ffixup:phase3-addfromX-end}, Alg. \ref{algo:fget-n-paths}).
  	Hence, only the LSTs containing $v$ in the current $level(t)$ graph will be processed and added to the \systemend, and there are at most $B$ of them (by Corollary \ref{lemma:count1}). Thus a dummy \ffullyfixup for any $v$ needs to access $P^*$ for each pair of nodes, and reinsert at most $B$ tuples (containing $v$) in the current graph. Hence the overall complexity for a dummy \ffullyfixup is  $O((n^2 + B) \cdot \log n) = O(\vstar^2 \cdot \log n)$.
  	\hfill$\qed$
  \end{proof}

  We now address the complexity of a real \ffullyfixup call. We first define the concept of a \emph{triple pair} that will be used in lemma \ref{lem:amor1}  to establish the bound for a real \ffullyfixup call.
  Finally, we complete our analysis by presenting a proof of Theorem~\ref{th:main}. 
 
  \begin{definition}
  	If $C_{\gamma}[i] \geq 1$ then $(\gamma, i)$ is a \emph{triple pair} in the \systemend.
  	If $(\gamma,i)$ is not a triple pair in the \system at the start of step $t$ but is a triple pair after the update at time step $t$, then $(\gamma, i)$ is  a \emph{newly created triple pair}  at time step $t$.
  	\end{definition}
  
   \begin{lemma} \label{lemma:count2}
   	At a given time step,
   	let $D$ be
   	the number of triple pairs in the level \systemend.
   	Then,
   	\begin{enumerate}
   		\item The value of $D$ is at most $O(m \cdot \nu^* \cdot \log n)$.
   		\item  The space used is $O(m \cdot \nu^* \cdot \log n)$.
   	\end{enumerate}
   \end{lemma}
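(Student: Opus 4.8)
The plan is to bound the number of triple pairs $(\gamma, i)$ by summing, over all $r = O(\log n)$ active level graphs $\levelgraph_i$, the number of distinct triples $\gamma$ that have a nonzero entry in level $i$, i.e.\ that contain at least one path centered in $\levelgraph_i$. The first key observation is that by Observation \ref{centerlht}, every LHT generated by \ffullyfixup is an LST centered in a \emph{unique} level graph; combined with the hypothesis of Lemma \ref{lemma:count} (every LHT is an LST in some level graph), this means that if $(\gamma,i)$ is a triple pair then the tuple underlying $\gamma$ is an LST in $\levelgraph_i$ (its paths centered in level $i$ are LSPs there). So it suffices to count, for each fixed $i$, the number of LSTs that live in $\levelgraph_i$.

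For a single level graph $\levelgraph_i$, I would reuse the counting argument from the proof of Lemma \ref{lemma:count}, part 1, applied with $z = 1$ (a single graph): an LST $(xa,by)$ in $\levelgraph_i$ is determined by choosing the first edge $(x,a)$ in at most $m$ ways, and then choosing $(b,y)$ so that the $r$-tuple $(a,by)$ is a shortest $r$-tuple through $a$ in $\levelgraph_i$ — there are at most $\nu^*$ such choices since $(b,y)$ must lie on a shortest path through $a$ and $\nu^*$ bounds the distinct edges on shortest paths through any vertex. Hence each $\levelgraph_i$ contributes $O(m \cdot \nu^*)$ triple pairs, and summing over the $r = \log 2n = O(\log n)$ active levels gives $D = O(m \cdot \nu^* \cdot \log n)$, proving part 1.

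For part 2, I would account for all the storage: the global structures $P$, $P^*$, $L$, $R$ store triples/extensions whose total count is dominated by the LHT count, which by Lemma \ref{lemma:count} part 1 (with $z = O(\log n)$) is $O(m \cdot \nu^* \cdot \log n)$; each triple also carries its $O(\log n)$-size array $\CA_\gamma$, but since a triple pair $(\gamma,i)$ with $\CA_\gamma[i]\neq 0$ is exactly what $D$ counts, the total size of all $\CA_\gamma$ arrays is $\Theta(D) = O(m \cdot \nu^* \cdot \log n)$. The local structures $P^*_i, L^*_i, R^*_i, LC^*_i, RC^*_i, \dict_i$ over all active levels store only shortest tuples and their extensions in each $\levelgraph_i$, which by the part-1 argument total $O(m \cdot \nu^* \cdot \log n)$. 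Finally, $\DMs$, $\RN$, $\LN$ use $O(n^2 \log n)$ as already noted in Section \ref{sec:newfeat}, which is dominated. Combining these gives the space bound $O(m \cdot \nu^* \cdot \log n)$.

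The main obstacle I anticipate is justifying cleanly that a triple pair $(\gamma, i)$ really does force the underlying tuple to be an LST \emph{in $\levelgraph_i$} (as opposed to merely "in some level graph"): one must argue that the paths counted by $\CA_\gamma[i]$ were centered in level $i$ at the step they entered the system and that the decremental-only property of level graphs preserves their local-shortest status in $\levelgraph_i$ thereafter — this is where Observation \ref{centerlht} and the "centered in a unique level" invariant of \ffullyfixup are essential, and care is needed because \ffullycleanup reshapes the $\CA$ arrays (Alg.\ \ref{algo:ffcleanup-vector}), so one should check that reshaping never creates a nonzero entry in a level where no path of that tuple is locally shortest. Once that invariant is in hand, the counting itself is a direct reuse of Lemma \ref{lemma:count}.
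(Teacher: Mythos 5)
Your proposal is correct and follows essentially the same route as the paper: the paper's proof likewise identifies each triple pair $(\gamma,i)$ with $\CA_\gamma[i]\geq 1$ as a distinct LST in $\levelgraph_i$, bounds the LSTs per level graph by $O(\nu^*\cdot m^*) = O(\nu^*\cdot m)$, and multiplies by the $O(\log n)$ levels; for part 2 it simply notes that each triple has $O(1)$ size so the space is dominated by $D$. Your version is just more explicit about the per-level counting (reusing Lemma \ref{lemma:count} with $z=1$) and the storage accounting, which the paper leaves terse.
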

   
   \begin{proof}
   	
   	1. Every $C_{\gamma}[i] \geq 1$ represents a distinct LST in $\Gamma_i$, hence the result follows since the number of levels is $O(\log n)$ and the
   	number of LSTs in a graph is $O(\vstar \cdot m^*)$.
   	
   	\noindent
   	2. Since every triple is of size $O(1)$, the memory used by our \FFD algorithm is dominated by $D$, and result follows from 1.
   	\hfill$\qed$
   \end{proof}	 
   
 \begin{lemma}
 	\label{lem:amor1}
 	The cost for a real \ffullyfixup call is $O  (\vstar^2 \cdot \log^2 n + X  \cdot \log n)$ , where $X$ is the number of newly created triple pairs after the update step.
 \end{lemma}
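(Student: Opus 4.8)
The plan is to mirror the analysis of \ffullycleanupend{} in Lemma~\ref{lem:amor}, decomposing the cost of one real call to \ffullyfixupend{} into: (i) the cost of \ffullypopulate; (ii) the cost of assembling, for each pair $(x,y)$, the set $S$ of triples that must be extended (the body of \ffullygetnew, including the scan over $\RN$ and $\LN$); (iii) the cost of the left/right extension loops in Algorithm~\ref{algo:fprocess-f} (and its symmetric version); and (iv) the per-triple bookkeeping (heap operations on $H_f$, updates to $P$, $P^*$, the local level structures, $dict_i$, the $\CA$-arrays via \ffullyfcentersend, and $\DMs$, $\RN$, $\LN$), each of which costs $O(\log n)$.

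For (i), \ffullypopulate inserts $O(n)$ trivial edge-triples and $O(n^2)$ per-pair min-key triples at $O(\log n)$ each, so this phase costs $O(n^2\log n)=O(\vstar^2\log n)$ since $\vstar=\Omega(n)$. For (iii) I would classify every extension performed --- i.e.\ every $\ell$-tuple $\gamma'=((x'x,by),wt')$ produced --- into one of three groups and bound each group's count. \emph{(a)} Extensions that produce or update a tuple through the updated vertex $v$: by Corollary~\ref{lemma:count1} there are at most $B=O(\vstar^2)$ tuples through $v$ in any single active level graph, and a real fixup touches the $O(\log n)$ active level graphs, so there are $O(\vstar^2\log n)$ such extensions, for a total of $O(\vstar^2\log^2 n)$; it is exactly here that Observation~\ref{centerlht} is used --- because each LHT created by \ffullyfixupend{} is an LST centered in a \emph{unique} level, this count is $O(z'\cdot\vstar^2)$ rather than the $O(z'^2\cdot\vstar^2)$ incurred by \PRe, which is the source of the $\log n$ savings. \emph{(b)} Extensions produced inside a PEP instance, which are gated by the historical-extension test over $\RN$/$\LN$: as noted in Section~\ref{sec:complex}, across one real \ffullyfixupend{} the number of entries of $\RN$ and $\LN$ inspected (other than those attached to freshly created tuples) is $O(n\cdot\vstar)$ --- for a fixed source $x$ only $O(\vstar)$ distinct vertices $b$ occur as tails of last edges of shortest-path tuples out of $x$, hence $O(\vstar)$ summed over all targets --- and each inspection costs $O(\log n)$ because it checks $L_h^*(x,b)$ for every level $h$ below the current one; this contributes $O(n\cdot\vstar\cdot\log n)=O(\vstar^2\log n)$. \emph{(c)} Extensions that create a genuinely new triple pair, of which there are $X$, each costing $O(\log n)$, for a total of $O(X\log n)$. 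The assembly cost (ii) is absorbed the same way: the per-pair work in \ffullygetnew{} that registers restored historical STs in their center levels and updates $dict_j,L^*_j,R^*_j$ either touches a tuple through $v$ (group (a)) or is one of the $O(n\cdot\vstar)$ probe accesses, and the scan over $\RN(x,y,wt)$ in Steps~\ref{ffixup:RNstart}--\ref{ffixup:RNend} is $O(\log n)$ per $b$. Summing (i)--(iv) and using $z'\le z\le\log 2n$ gives the claimed bound $O(\vstar^2\log^2 n + X\log n)$.

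The hard part will be the rigorous justification that every extension falls into one of the three groups above with no double-counting and no ``wasted'' work: concretely, (1) that the \MTend{} dictionary guarantees each tuple $(x'x,by)$ is extended at most once within a fixup; (2) that whenever $S_b$ contains no $\beta=0$ triple the extension loop recurses only into levels strictly younger than the level recorded in $\DMs(x,y)$ for the weight $wt$, so the loop length is controlled and no LST of an already-historical weight is spuriously rebuilt; and (3) that the $\RN$/$\LN$ test selects precisely those historical tuples that have a live extension --- this is the PEP correctness argument of Section~\ref{sec:ffeatures}, which must be invoked here. Once these facts are in place, combining the four bounds is routine arithmetic.
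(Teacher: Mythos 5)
Your proposal is correct and follows essentially the same approach as the paper's proof: it bounds the \ffullypopulate{} phase at $O(n^2\log n)$, charges each genuinely new triple pair to $X$ at $O(\log n)$ apiece, and bounds the historical-extension probes through $\RN$/$\LN$ by the $O(n\cdot\vstar)$ total size of the (historical and shortest) SSSP dags, exactly as the paper does. The only cosmetic difference is that you organize the case split by through-$v$ / PEP-gated / new-triple-pair where the paper splits by $\beta(\gamma)=0$ versus $\beta(\gamma)=1$, and you explicitly flag the exhaustiveness/no-double-counting issues that the paper's version also leaves largely implicit.
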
 
 \begin{proof}
  A triple is accessed only a constant number of time during \ffullyfixup for a cost of $O(\log n)$, so it suffices to establish that the
  number of existing triples accessed during the call is $O(\vstar^2 \cdot \log n)$.
  
   There are only $O(n^2)$ accesses to triples in the call to Algorithm 8 in line 2 of 	\ffullyfixup since 
 	$O(n^2)$ entries in the global $P^*(x,y)$ structures are accessed to initialize $H_f$. This takes $O(n^2 \cdot \log n)$ time after considering
	the $O(\log n)$ cost per data structure operation. We now address the accesses made in the main loop. We will distinguish two cases and they will be charged to $X$ as follows.
	
	\vhalf
	\noindent
	\underline{1: $\beta(\gamma)=0$} -- In the main loop of Algorithm 7, starting in Step 3, any triple $\gamma$ that is accessed with $\beta (\gamma) = 0$ is an
	 LST at some level $i$ where it is not identified as an ST in $\Gamma_i$. 
	 During this call, in Steps \ref{ffixup:phase3-addfromP-begin}--\ref{ffixup:phase3-addPe}, Alg. \ref{algo:fget-n-paths} (or Steps \ref{ffixup:phase3-nomain-check}--\ref{ffixup:phase3-addfromX-end}, Alg. \ref{algo:fget-n-paths} if the distance for the endpoints of $\gamma$ did not change)
	 $\gamma$ is added as an ST in level $i$, and will never be removed as an ST for level $i$ until it is removed from the tuple system
	 (due to the fact that $\Gamma_i$ is a purely decremental graph). Since $\gamma$ with $\beta (\gamma) =0$ is a newly added triple to level $i$, then the pair $(\gamma,i)$ is a newly created triple pair at step $t$. Hence, we can charge $(\gamma,i)$ to $X$ in this call of \ffullyfixupend.
	
	\vhalf
	\noindent
	\underline{2: $\beta(\gamma)=1$} -- We now consider  triples accessed that have $\beta = 1$. This is the most nontrivial part of our analysis since even though any
	 such triple $\gamma$ must exist with the same count in every level in both $P$ and $P^*$, we may still need to form some extensions
	 since the triple may have been an HT when extension vertices were updated, and hence these extension may not have been
	 performed. Here is where the $LN$ and $RN$ sets are accessed, and we now analyze the cost of these accesses. (The
	 correctness of the associated steps is analyzed in the next section.)
	 
	 Let $j$ be the most recent level in which $\gamma$ was an ST in $G$ and assume we are dealing with left extensions (right extensions are symmetrical). Now that $\gamma$ is restored, the only case (Steps \ref{ffixup:RNstart} to \ref{ffixup:RNend}, Alg. \ref{algo:fget-n-paths}) in which we need to process it is when there exists a left extension for the $\ell$-tuple of $\gamma$ to a node $x'$ centered in a level $i$ more recent than $j$. In fact, the LST generated by this extension will appear for the first time centered in level $i$, hence the pair $(\gamma,i)$ is a newly created triple pair at step $t$ and we can charge its creation to $X$.
	 We now show how our HE sets efficiently handle this case.
	By steps \ref{ffixup:RNstart} to \ref{ffixup:RNend}, Alg. \ref{algo:fget-n-paths}, our algorithm only processes a restored triple $\gamma$ with $\beta(\gamma)=1$ when it has at least one centered extension in some active level younger than the level in which $\gamma$ was shortest for the last time. We can bound the total computation for these steps as follows: for a given $x$, 
	$RN(x,y,wt)$ contains a node $b$ for every incoming edge to $y$ in one of the SSSP dags (historical and shortest) rooted at $x$. 
	Since we can extend in at most $O(\log n)$ active levels during any update and the size of a single dag is at most $\nu^*$, these steps take time $O(\nu^*\cdot n\log n)$ throughout the entire update computation. 
	\hfill$\qed$
 	 \end{proof}	 
 We can now establish the proof of our main theorem. 	 

\noindent
{\bf Proof of Theorem~\ref{th:main}.}
Consider a sequence $\Sigma$ of $r= \Omega (n)$ calls to algorithm \FFDe. Recall that the data structure is reconstructed after every $2n$ steps,
so we can assume $r=\Theta (n)$. These $r$ calls to \FFD make $r$ real calls to \ffullyupdateend, and also make additional dummy updates.
As in \PRe,
 	across the $r$ real updates in $\Sigma$,
 	the algorithm performs $O(r \log n)$ dummy updates.
 	This is because $r/2^k$ real updates are performed at level $k$ during the entire computation, and each such update is
 	accompanied by $2^k-1$ dummy updates. So, across all real updates there are $O(r)$
 	dummy updates per level, adding up to $O(r \log n)$ in total, across the $O(\log n)$ levels.
	
	When \ffullycleanup is called on a vertex $v$ for a dummy update, $z'=2$ since $v$ can be present 
	 only in the most recent current level and the level at which it is centered. (This is because every vertex that was centered
	 at a more recent level than $v$ has already been subjected to a dummy update, and hence all of these vertices are now
	 centered in the current level.)
 	Thus, by Lemma \ref{lem:amor}, each \ffullycleanup for a dummy update has cost $O(B \cdot \log n)$. By Lemma \ref{lem:amor1d}, a call to  \ffullyfixup for a dummy update
 	has cost $O(\vstar^2 \cdot \log n)$. Thus the total cost is $O((\vstar^2 \cdot \log n)\cdot r\log n)$ across all dummy updates.
	Also, the number of tuples accessed by all of the dummy update calls to \ffullycleanupend, and hence the number of tuples
	removed by all dummy updates, is $O(r \cdot \vstar^2 \cdot \log n)$.
	
	 For the real calls to \ffullyfixupend, let $X_i$ be the   number of newly added triple pairs in the $i$th real call to \ffullyfixupend. Then by
	Lemma~\ref{lem:amor1}, the cost of this $i$th call is $O(\vstar^2 \cdot \log^2 n + X_i \cdot  \log n)$. Let $X= \sum_{i=1}^r X_i$. Hence the total
	cost for the $r$ real calls to  \ffullyfixup is $O(r \cdot \vstar^2 \cdot \log^2 n + X \cdot \log n)$. We now bound $X$ as follows: $X$ is no more than
the maximum number of triples that can remain in the system after $\Sigma$ is executed, plus the number of tuples $Y$ removed
	from the \systemend. 	Tuples are removed only in calls to \ffullycleanupend. The total number removed by $r \log n$ dummy calls is 
	$O(r \cdot  \log n \cdot \vstar^2)$ (by Lemma \ref{lem:amor}). The total number removed by the $r$ real calls is $O(r \cdot \vstar^2 \cdot \log n)$ (by Lemma \ref{lem:amor}).
	Hence $Y= O(r \cdot \vstar^2 \cdot \log n)$. Clearly the maximum number of triples in the \system is no more than $D$, which
	counts the number of triple pairs, and we have 
	$D= O(m \cdot \vstar \cdot \log n) = O(n^2 \cdot \vstar \cdot \log n)$ (by Lemma \ref{lemma:count2}). Since  $r=\Theta (n)$, we have
	$D=O(r \cdot n \cdot \vstar \cdot \log n)$, and this is dominated by $Y$ since $\vstar = \Omega (n)$.
	Hence the cost of the $r$ calls to \FFD is $O(r \cdot \vstar^2 \cdot \log^2 n)$ (after factoring in the $O(\log n)$ cost per tuple access), and 
		hence the amortized cost of  each call to \ffullyupdate is $O(\vstar^2 \cdot \log^2 n)$.

\subsection{Correctness}
For the correctness, we assume that all the global and local data structures are correct before the update, and we will show the correctness of them after the update.

\paragraph{\bf Correctness of Cleanup -} The correctness of \ffullycleanup is established in Lemma \ref{lemma:fclean1}. We will prove that all paths containing the updated vertex $v$ are removed from the \systemend. Moreover, the center of each triple is restored, if necessary, to the level containing the most recently updated node on any path in this triple. 
Note that (as in \cite{DI04,NPR14b,PR14}) at the end of the cleanup phase, the global structures $P$ and $P^*$ may not have all the LHTs in $G \setminus \{v\}$.

\begin{lemma} \label{lemma:fclean1}
	At the end of the cleanup phase triggered by an update on a vertex $v$, every LHP that goes through $v$ is removed from the global structures. Moreover, in each level graph $\levelgraph_i$, each SP that goes through $v$ is removed from $P^*_i$. For each level $i$, the local structures $L_i^*$, $R_i^*$, $RC_i^*$ and $LC_i^*$ contain the correct extensions; the global structures $L$ and $R$ contain the correct extensions, for each $r$-tuple and $\ell$-tuple respectively, and the structures $\RN$ and $\LN$ contain only nodes associated with tuples in $P$. The $\DMs$ structure only contains historical distances represented by at least one path in the updated graph.
	Finally, every triple in $P$ and $P^*$ has the correct updated center for the graph $G \setminus \{v\}$.
\end{lemma}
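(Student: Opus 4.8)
**The plan is to prove Lemma \ref{lemma:fclean1} by induction on the extraction order from the heap $H_c$ in Algorithm \ref{algo:ffcleanup}, closely following the correctness argument for the \NPR cleanup but carefully tracking the level-indexed structures.** The key invariant will be: when a set $S$ of triples with min-key $[wt,x,y]$ is extracted in Step \ref{ffcleanup:extract}, $S$ contains exactly the triples from $x$ to $y$ of weight $wt$ all of whose paths pass through $v$, together with their correct center vectors $\CA$. The base case is the singleton $[(v,v),0,1]$ inserted in Step \ref{ffcleanup:init}, which trivially represents the only zero-length ``path'' through $v$. For the inductive step, I would argue that every LHP $\pi$ through $v$ of the form $x' \to x \rightsquigarrow y$ is obtained from a shorter LHP through $v$ (namely $x \rightsquigarrow y$) by a left-extension with $(x',x)$, so that $\pi$ is caught when its sub-path is extracted and extended in Algorithm \ref{algo:ffcleanup-process}; the symmetric claim handles right-extensions. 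The \MT marking scheme (inherited from \cite{NPR14b,PR14}) ensures each tuple is processed exactly once. This establishes the first sentence of the lemma: every LHP through $v$ is removed from $P$ and $P^*$.

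\textbf{Next I would verify the level-specific and auxiliary-structure claims, one structure at a time, by inspecting the steps of Algorithm \ref{algo:ffcleanup-process} that maintain them.} For $P^*_i$: Step \ref{ffcleanup:updPri0} decrements exactly $\CA_{\gamma'}[i]$ paths from level $i$, so once correctness of the center vector $\CA_{\gamma'}$ is established (see next paragraph), each SP through $v$ is removed from its level graph. For $L^*_i, R^*_i, LC^*_i, RC^*_i$: these are touched in Steps \ref{ffcleanup:remLSs}--\ref{ffcleanup:remLSe} only when $\gamma'$ is fully removed from $P^*_i$, and the guard conditions (``$\forall j \ge i$, $P^*_j = \emptyset$'' when the endpoint is a level-$i$ center, ``$P^*_i = \emptyset$'' otherwise) are exactly the conditions under which no ST-tuple of the relevant shape remains, so by the inductive hypothesis the extension node should indeed be deleted. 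For the global $L$ and $R$: Step \ref{ffcleanup:remLR} deletes $x'$ from $L(x,by)$ (and $y$ from $R(x'x,b)$) precisely when no triple for $(x'x,by)$ remains in $P(x',y)$, using the \MT test in Step \ref{ffcleanup:LRstart}. For $\RN$ and $\LN$: Steps \ref{ffcleanup:RN}--\ref{ffcleanup:LN} remove $b$ (resp.\ $x$) exactly when the double-links show no remaining triple of that weight and shape in $P(x',y)$, which by the definition of these sets is the correct condition. For $\DMs$: Step \ref{ffcleanup:remDM} removes the weight-$wt'$ entry only when it is no longer linked to any tuple in $P^*(x',y)$, so $\DMs$ retains exactly the historical distances still realized by a path in $G \setminus \{v\}$.

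\textbf{The main obstacle is proving the correctness of the center-vector reshaping performed by \ffullyccenters (Algorithm \ref{algo:ffcleanup-vector}), i.e.\ the last sentence of the lemma.} I need to show that after Step \ref{ffcleanup:setcen} sets the center of $\gamma''$ to $\arg\!\min_i(\CA_{\gamma''}[i] \ne 0)$, that center is indeed the most recent level containing an updated node on some surviving path in $\gamma''$; and, within the cleanup recursion, that the returned $\CA_{\gamma'}$ correctly records the level distribution of the removed paths. The crux is the identity in Step \ref{ffcleanup:newArr2}: the paths in $S_b$ that lived in levels $\ge j$ (older than or equal to the oldest level $j$ in which $\gamma''$ first appeared in $P$) were all folded into level $j$ when they were first extended to form $\gamma''$, so their contribution to $\gamma'$ must be aggregated into $\CA_{\gamma'}[j]$, while paths in $S_b$ from a more recent level $i < j$ have an extension already present in $\Gamma_i$ (since $\Gamma_i$ is a decremental graph containing the extension node $x'$ and the sub-tuple) and hence contribute level-wise via Step \ref{ffcleanup:newArr3}. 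I would prove this by a structural claim: \emph{for every level $i$, the number of paths of $\gamma''$ centered in $\Gamma_i$ equals the number of corresponding sub-paths centered in levels $\le i$ among the constituent tuples, truncated at the creation level $j$ of $\gamma''$.} This claim, combined with the subtraction in Step \ref{ffcleanup:Pvecupd}, yields that $\CA_{\gamma''}$ remains an accurate level distribution for the surviving paths, so its new level center (Step \ref{ffcleanup:setcen}) is correct. Establishing this requires a careful inductive argument over the update history interleaved with the cleanup recursion, and is where most of the work lies; the remaining structure-by-structure checks are then routine given the correctness of $\CA_{\gamma'}$ and the inductive hypothesis on the extracted set $S$.
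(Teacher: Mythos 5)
Your proposal follows essentially the same route as the paper's proof: a loop-invariant induction on the extraction order from $H_c$, with the invariant tracking (i) that the triples in the heap represent exactly the LHPs through $v$ still to be removed, (ii) structure-by-structure correctness of $P$, $P^*$, $L$, $R$, \MTend, $\RN$, $\LN$, $\DMs$ and the level-local sets for all keys already processed, and (iii) the \MT condition for keys not yet processed; your treatment of the center-vector reshaping (aggregating levels older than or equal to the creation level $j$ of $\gamma''$ into $\CA_{\gamma'}[j]$ and copying level-wise below $j$) is exactly the argument the paper gives in its maintenance step. The only nitpick is that your stated invariant for $S$ (``triples all of whose paths pass through $v$'') should instead say that each extracted triple carries a count equal to the number of its paths through $v$, since a tuple in $P$ may represent both kinds of paths; this matches what you actually use later, so it is a wording issue rather than a gap.
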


\begin{proof}
	To prove the lemma statement, we use a loop invariant on the while loop in Step~\ref{ffcleanup:while} of Algorithm~\ref{algo:ffcleanup}. 
	We show that the while loop maintains the following invariants.
	
	{\noindent \bf Loop Invariant:} At the start of each iteration of the while loop in Step~\ref{ffcleanup:while} of Algorithm~\ref{algo:ffcleanup},
	assume that the first triple to be extracted from $H_c$ and processed has min-key = $[wt, x, y]$. Then the following properties hold about the \system and $H_c$.
	\begin{enumerate}
		\item \label{proof:item1} For any $a, b \in V$, if $G$ contains $c_{ab}$ LHPs of weight ${wt}$ of the form $(xa, by)$
		passing through $v$,
		then $H_c$ contains a triple $\gamma = ((xa, by), { wt}, c_{ab})$ with key $[wt,x,y]$ already processed: the $c_{ab}$ LHPs through $v$ are not present in the \systemend.
		
		\item \label{proof:claim2} Let  $[\hat{wt},\hat{x},\hat{y}]$ be the last key extracted from $H_c$ and processed before $[wt,x,y]$. For any key $[wt_1, x_1, y_1] \leq [\hat{wt},\hat{x},\hat{y}]$, let 
		$G$ contain ${ c}  > 0 $ number of LHPs of weight  ${ wt_1}$ of the
		form $(x_1 \times, b_1y_1)$. Further, let ${ c_v}$ (resp. ${ c_{\bar v}}$) denote the number of such LHPs
		that pass through $v$ (resp. do not pass through $v$).
		Here ${ c_v + c_{\bar v} = c}$. For every extension $x' \in L(x_1, b_1y_1)$, let $wt' = wt_1 + \weight (x',x_1)$ be the weight of the extended triple $(x'x_1, b_1y_1)$. Then, (the following assertions are similar for $y' \in R(x_1a_1, y_1)$)\\
		\textbf{Global Data Structures: }
		\begin{enumerate}
			\item \label{proof:item2} if $c>c_{v}$ there is a triple in $P(x', y_1)$ 
			of the form $(x'x_1, b_1y_1)$ and weight $wt'$ representing $c-c_v$ LHPs. Moreover, its center is updated according to the last update on any path represented by the triple. If $c=c_v$ there is no such triple in $P(x', y_1)$.
			\item \label{proof:item21}
			If a triple of the form $(x'x_1, b_1y_1)$ and weight $wt'$ is present as an HT in $P^*(x', y_1)$, then it represents the exact same number of LHPs $c-c_v$ of the corresponding triple in $P(x',y_1)$. This is exactly the number of HPs of the form $(x'x_1, b_1y_1)$ and weight $wt'$ in $G \setminus \{v\}$.
			\item \label{proof:item3} $x' \in L(x_1, b_1y_1)$,  $y_1 \in R(x'x_1, b_1)$, and    $(x'x_1, b_1y_1) \in $ \MT
			iff ${ {c_{\bar v}} > 0}$. 
			\item \label{proof:item4} A triple corresponding to $(x' x_1, b_1y_1)$
			with weight $wt'$ and counts $c_v$ is in $H_c$.
			A similar assertion holds for $y' \in R(x_1a_1, y_1)$.
			\item \label{proof:item6} The structure $\RN(x',y_1,wt')$ contains a node $b$ iff at least one path of the form $(x' \times,by_1)$ and weight $wt'$ is still represented by a triple in $P(x', y_1)$. A similar assertion holds for a node $a$ in $\LN(x',y_1,wt')$.
			\item \label{proof:item7} If there is no HT of the form $(x'x, b_1y_1)$ and weight $wt'$ in $P^*(x', y_1)$ then the entry $\DMs(x', y_1)$ with weight $wt'$ does not exists.
		\end{enumerate}
		\textbf{Local Data Structures:} for each level $j$, let $c_j$ be the number of LSPs of the form $(x'x_1, b_1y_1)$ and weight $wt'$ centered in $\levelgraph_j$ and let $c_j(v)$ be the ones that go through $v$. Thus $c=\sum_{j}{c_j}$ and $c_v=\sum_{j}{c_j(v)}$. Then,
		\begin{enumerate}[resume]
			\item \label{proof:litem1} the value of $\CA_\gamma[j]$, where $\gamma$ is the triple of the form $(x'x_1, b_1y_1)$ and weight $wt'$ in $P(x',y_1)$, is $c_j-c_j(v)$. 
			\item \label{proof:litem2}
			If a triple $\gamma$ of the form $(x'x_1, b_1y_1)$ and weight $wt'$ is present as an HT in $P^*$, then $P_j^*(x',y_1)$ represents only $c_j-c_j(v)$ paths. If $c_j-c_j(v)=0$ then the link to $\gamma$ is removed from $dict_j$. Moreover, $x' \in L_j^*(x_1, y_1)$ (respectively $LC_j^*(x_1, y_1)$ if $x'$ is centered in $\levelgraph_j$) iff $x'$ is part of a shortest path of the form $(x'x_1, \times y_1)$ centered in $\levelgraph_j$. A similar statement holds for $y_1 \in R_j^*(x', b_1)$ (respectively $RC_j^*(x', b_1)$ if $y_1$ is centered in $\levelgraph_j$).
		\end{enumerate}
		\item \label{proof:item5} For any key $[wt_2, x_2, y_2 ] \geq [wt, x, y]$, let
		$G$ contain $c > 0$ LHPs of weight  ${ wt_2}$ of the
		form $(x_2a_2, b_2y_2)$. 
		Further, let ${ c_v}$ (resp. ${ c_{\bar v}}$) denote the number of such LHPs
		that pass through $v$ (resp. do not pass through $v$).
		Here ${ c_v + c_{\bar v} = c}$. 
		Then the tuple $(x_2 a_2, b_2 y_2) \in$ \MTend, iff $c_{\bar v} > 0$ and a triple for
		$(x_2a_2, b_2 y_2)$ is present in $H_c$
	\end{enumerate}

		\noindent {\bf Initialization:} We start by showing that the invariants hold before the first loop iteration.
		The min-key triple in $H_c$ has key $[0, v, v]$. Invariant assertion~$\ref{proof:item1}$
		holds since  we inserted into
		$H_c$ the trivial triple of weight $0$ corresponding to the vertex $v$
		and that is the only triple of such key. Moreover, since we do not represent trivial
		paths containing the single vertex, no counts need to be decremented.
		Since we assume positive edge weights, there are no LHPs
		in $G$ of weight less than zero. Thus all the points of invariant assertion~$\ref{proof:claim2}$ hold trivially.
		Invariant assertion~$\ref{proof:item5}$ holds since $H_c$ does not contain any triple of weight $> 0$ and we initialized \MT to empty.
		
		\noindent {\bf Maintenance:} Assume that the invariants are true before an iteration $k$ of the loop.
		We prove that the invariant assertions remain true before the next iteration $k+1$.
		Let the min-key triple at the beginning of the $k$-th iteration be $[wt_k, x_k, y_k]$.
		By invariant assertion~$\ref{proof:item1}$, we know that for any $a_i, b_j$, if there exists a triple $\gamma$ of the form $(x_k a_i, b_j y_k)$
		of weight $wt_k$ representing $count$ paths containing $v$, then it is present in $H_c$.
		Now consider the set of triples with key $[wt_k, x_k, y_k]$
		which we extract in the set $S$ (Step~\ref{ffcleanup:extract}, Algorithm~\ref{algo:ffcleanup}).
		We consider left-extensions of triples in $S$; symmetric arguments apply for right-extensions.
		Consider for a particular~$b$ the set $S_{b} \subseteq S$ of triples of the form $(x_k-,by_k)$,
		and let $fcount'$ denote the sum of the counts of the paths represented by triples in $S_b$.
		Let $x' \in L(x_k, by_k)$ be a left extension; our goal is to generate the triple $\gamma'$ of the form $(x'x_k, b y_k)$ with count $fcount'$ and weight $wt' = wt_k+\weight(x',x_k)$, and an associated vector $\CA(\gamma')$ that specifies the distribution of paths represented by $\gamma'$ level by level. These paths will be then removed by the algorithm.
		However, we generate such triple only if it has not been generated by a right-extension of another set of paths by checking the \MT structure:	we observe that the paths
		of the form $(x'x_k, by_k)$ can be generated by right extending to $y_k$ the set of
		triples of the form $(x'x_k, \times b)$. Without loss of generality assume that the triples of
		the form $(x'x_k, \times b)$ have a key which is greater than the key $[wt_k, x_k, y_k]$. Thus,
		at the beginning of the $k$-th iteration, by invariant assertion~$\ref{proof:item5}$, we know that $(x'x_k, by_k) \notin$ \MTend.
		Step~\ref{ffcleanup:triplep}, Alg.~\ref{algo:ffcleanup} creates a triple $\gamma'$ of the form $(x'x_k, by_k)$ of weight $wt'$ and $fcount'$.
		
		The set of triples in $S_b$ can have different centers and we are going to remove (level by level) paths represented by $\gamma'$. To perform this task we consider the vector $\CA_{\gamma''}$: it contains the full distribution of the triple $\gamma'' \in P(x'y_k)$ of the form $((x'x_k,by_k),wt')$ and indicates the oldest level $j$ in which $\gamma''$ was generated for the first time. This level is exactly $\arg\!\max_j(\CA_{\gamma''}[j]\neq 0)$ and it is identified in Step \ref{ffcleanup:gpivot} - Alg. \ref{algo:ffcleanup-process}. All the paths represented in $S_b$ that are centered in some level $m$ older or equal to $j$ were extended for the first time in level $j$ to generate $\gamma''$. Moreover, each path centered in a level $i$ younger than $j$ was extended in level $i$ itself.
		Thus, we can compute a new center vector $\CA_{\gamma'}$ (according to the distribution in $\CA_{\gamma''}$) of the paths containing $v$ that we want to delete at each active level, as in steps \ref{ffcleanup:newArr0} to \ref{ffcleanup:newArr3} - Alg. \ref{algo:ffcleanup-vector}. In step \ref{ffcleanup:Pvecupd} - Alg. \ref{algo:ffcleanup-vector} the vector $\CA_{\gamma''}$ is updated: the paths are removed level by level according to the new distribution. This establishes invariant assertion $\ref{proof:litem1}$.
		
		The triple $\gamma'$ is immediately added to $H_c$ with $\CA_{\gamma'}$ for further extensions (Step. \ref{ffcleanup:addHcp} - Alg. \ref{algo:ffcleanup-process}).
		This establishes invariant assertions $\ref{proof:item4}$.
		Thus we reduce the counts of $\gamma'$ in $P(x',y_k)$ by $fcount$ (Step. \ref{ffcleanup:remP} - Alg.\ref{algo:ffcleanup-process}) and we set the new center for the remaining tuple $\gamma''$ in $P(x',y_k)$ establishing invariant assertion $\ref{proof:item2}$.
		Steps \ref{ffcleanup:LRstart} to \ref{ffcleanup:LRend} - Alg. \ref{algo:ffcleanup-process} check if there is any path of the form $(x-,by)$ that can use $x'$ as an extension. In this case we add $\gamma'$ to the \MTend. If not, we safely remove the left and right extension ($x'$ and $y$) from the \systemend. This establishes invariant assertion $\ref{proof:item3}$. If $\gamma'$ is an HT in $P^*(x',y_k)$, we decrement its count (Step. \ref{ffcleanup:remPS} - Alg.\ref{algo:ffcleanup-process}) establishing invariant assertion $\ref{proof:item21}$.
		In steps \ref{ffcleanup:RN} to \ref{ffcleanup:LN} - Alg. \ref{algo:ffcleanup-process}, we use the double links between $b \in \RN(x',y_k,wt')$ and tuples to efficiently check if there are other triples linked to $b$; 
		if not we remove $b$ from $\RN(x',y_k,wt')$ establishing invariant assertion $\ref{proof:item6}$.
		Using a similar double link method with the structure $\DMs(x',y_k)$, we establish invariant assertion $\ref{proof:item7}$ after step \ref{ffcleanup:remDM} - Alg. \ref{algo:ffcleanup-process}.
		
		To operate in the local data structures we require $\gamma'$ to be an HT in $P^*(x',y_k)$. Using the previously created vector $\CA_{\gamma'}$, we reduce the count associated with $\gamma' \in 1P^*_i(x',y_k)$ for each level $i$ (Step \ref{ffcleanup:updPri0} - Alg. \ref{algo:ffcleanup-process}). 
		After the above step, if there are no paths left in $P^*_i(x',y_1)$ then there are no STs of the form $(x'x_1,by_1)$ centered in level $i$.
		In this case we remove the extension $x'$ and $y_k$ from the local structures of level $i$. 
		This is done in steps \ref{ffcleanup:remLSs} to \ref{ffcleanup:remLSe} -Alg. \ref{algo:ffcleanup-process}: in case $x'$ is not centered in level $i$, then any path in $\gamma'$ centered in level $i$ is generated by a node centered in level $i$ located between $x_k$ and $y_k$. Thus if any SP from $x'$ to $y_k$ (that uses $(x',x_k)$ as a first edge) remains in in $\levelgraph_i$, it must be also counted in $P^*_i(x_k,y_k)$. Thus, we remove $x'$ from $L^*_i(x_k,y_k)$ only if $P^*_i(x_k,y_k)$ is empty. In the case $x'$ is centered in level $i$ and $P^*_i(x_k,y_k)$ is empty, $x'$ could still be the extension of other paths from $x_k$ to $y_k$ centered in levels older than $i$. The algorithm checks them all and if they do not exist in older levels we can safely remove $x'$ from $LC^*_i(x_k,y_k)$ (Step \ref{ffcleanup:updRC}, Alg. \ref{algo:ffcleanup-process}). A similar argument holds for the right extension $y_1$. 
		This establishes invariant assertion $\ref{proof:litem2}$ and completes claim \ref{proof:claim2}.
		
		When any triple is generated by a left extension (or symmetrically
		right extension), it is inserted into $H_c$ as well as into \MTend. This establishes invariant assertion~$\ref{proof:item5}$ at the beginning of the $(k+1)$-th iteration.
		
		Finally, to see that invariant assertion~$\ref{proof:item1}$ holds at the beginning of the $(k+1)$-th iteration, let the
		min-key at the $(k+1)$-th iteration be $[wt_{k+1}, x_{k+1}, y_{k+1}]$. Observe that triples
		with weight $wt_{k+1}$ starting with $x_{k+1}$ and ending in $y_{k+1}$ can be created
		either by left extending or right extending the triples of smaller weight. And since for each of
		iteration $\le k$, invariant assertion~$\ref{proof:claim2}$ holds for any extension, we conclude that invariant assertion~$\ref{proof:item1}$ holds at the beginning of the $(k+1)$-th iteration.
		This concludes our maintenance step.
		
		\noindent {\bf Termination:} The condition to exit the loop is $H_c = \emptyset$. Because invariant assertion $\ref{proof:item1}$ maintains in $H_c$ all the triples already processed, then $H_c = \emptyset$ implies that there are no other triples to extend in the graph $G$ that contain the updated node $v$. Moreover, because of invariant assertion $\ref{proof:item1}$, every triple containing the node $v$ inserted into $H_c$ has been correctly decremented from the \systemend. Remaining triples have the correct update center because of invariant $\ref{proof:item2}$. Finally, for invariant assertions $\ref{proof:litem1}$ and $\ref{proof:litem2}$, the structures $L_i^{*}, LC_i^{*}, R_i^{*}, RC_i^{*}$ are correctly maintained for every active level $i$ and the paths are surgically removed only from the levels in which they are centered. This completes the proof.
		\hfill$\qed$
\end{proof}

\paragraph{\bf Correctness of Fixup -} For the fixup phase, we need to show that the triples generated by our algorithm are sufficient to maintain all the ST and LST in the current graph $G$.
As in \PRe, we first show in the following lemma that \ffullyfixup computes all the correct distances for each pair of nodes in the updated graph. 
Finally, we show that data structures and counts are correctly maintained at the end of the algorithm (Lemma \ref{proof:ffflem}).
\begin{lemma} \label{fdfixcorr} 
	For every pair of nodes $(x,y)$, let $\gamma = ((xa, by), wt, count)$ be one of the min-weight triples from $x$ to $y$ extracted from $H_f$ during \ffullyfixupend. Then $wt$ is the shortest path distance from $x$ to $y$ in $G$ after the update.
\end{lemma}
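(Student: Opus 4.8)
**The plan is to prove this by induction on the order in which min-weight triples for distinct vertex pairs are first extracted from $H_f$**, mirroring the corresponding correctness argument in \PRe but adapted to the level system. I would fix a pair $(x,y)$ and let $\gamma = ((xa,by),wt,count)$ be the first triple for $(x,y)$ pulled from $H_f$; I must show $wt = d_G(x,y)$, the post-update distance. The argument has two directions: (i) $wt \ge d_G(x,y)$, i.e.\ the triple represents a genuine path in the updated graph $G$, and (ii) $wt \le d_G(x,y)$, i.e.\ no shorter $x\rightsquigarrow y$ path exists in $G$ that the algorithm has failed to see.

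\textbf{Direction (i).} Every triple that enters $H_f$ is either a trivial edge triple inserted with the correct current weight in \ffullypopulate (Alg.~\ref{algo:ftrivial}), a min-key triple taken from $P(x,y)$, or a triple formed by left/right extension in Alg.~\ref{algo:fprocess-f} from triples already extracted. In the first case $wt$ is the weight of an actual edge of $G$. In the third case, by the induction hypothesis the sub-tuple $(xa,bz)$ (or $(za,by)$) that is being extended was extracted at its correct shortest-path weight, so the extended tuple has weight equal to that distance plus a current edge weight, hence at least $d_G(x,y)$; and since the extension vertices are drawn only from the local $L_i^*$/$LC_i^*$ (resp. $R_i^*$/$RC_i^*$) structures, which by Lemma~\ref{lemma:fclean1} contain only extensions lying on shortest paths in the corresponding (decremental) level graph $\levelgraph_i \subseteq G$, the extended path actually exists in $G$. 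For the second case, the triples surviving in $P(x,y)$ after \ffullycleanup (Lemma~\ref{lemma:fclean1}) represent LHPs that are still present in $G\setminus\{v\}$, hence in $G$; so $wt \ge d_G(x,y)$. This gives one inequality.

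\textbf{Direction (ii).} This is the delicate part. Suppose for contradiction that $d_G(x,y) < wt$; take a shortest $x\rightsquigarrow y$ path $\pi$ in $G$ of weight $d_G(x,y)$, and among all pairs for which the lemma first fails, pick one minimizing $d_G(x,y)$. If $\pi$ is a single edge, it was inserted into $H_f$ by \ffullypopulate with weight $d_G(x,y) < wt$, contradicting that $\gamma$ was the min-weight triple extracted for $(x,y)$. Otherwise write $\pi = x \to a \rightsquigarrow b \to y$ with first edge $(x,a)$ and last edge $(b,y)$; both proper subpaths $x\rightsquigarrow b$ and $a\rightsquigarrow y$ are shortest paths in $G$ of strictly smaller weight, so by minimality the lemma holds for those pairs and the corresponding ST-tuples were extracted from $H_f$ at their correct (smaller) weights \emph{before} any triple for $(x,y)$. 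I then need to argue that the extension machinery in Alg.~\ref{algo:fprocess-f} (together with the triple-selection in \ffullygetnew, Alg.~\ref{algo:fget-n-paths}) necessarily forms, from those extracted sub-tuples, a tuple of the form $(xa,by)$ of weight $d_G(x,y)$ and inserts it into $H_f$ --- which again contradicts minimality of the extracted $\gamma$. Here one uses Observation~\ref{centerlht}: the relevant sub-tuple is an LST centered in some level graph $\levelgraph_i$, so the required extension vertex $x$ (or the required $\ell$-tuple) lives in the appropriate $L_i^*$ or $LC_i^*$ set examined by the extension loop, and the HE-set / $\DMs$ bookkeeping (Steps~\ref{fprocess-f:DM},~\ref{fprocess-f:DM1}, \ref{ffixup:RNstart}--\ref{ffixup:RNend}) guarantees that exactly the levels in which such an extension can occur are scanned, so no needed extension is skipped.

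\textbf{Main obstacle.} The crux --- and where I expect the real work to lie --- is establishing that the restricted extension loops really do not miss any needed combination: specifically that when a surviving HT $\gamma$ with $\beta(\gamma)=1$ is restored, every shortest-path extension of it that is valid in $G$ is produced. This is precisely the Partial Extension Problem, and the correctness hinges on showing that (a) an extension vertex $x'$ for the $\ell$-tuple of $\gamma$ that lies on a shortest path in $G$ but was not already materialized must be centered in a level more recent than the last level $j$ in which $\gamma$ was an ST (so it is caught by the $LC_i^*$, $i<j$, scan driven by $\DMs(x,y)$), and (b) the HE sets $\RN(x,y,wt)$, $\LN(x,y,wt)$ contain exactly the endpoints $b$ (resp.\ $a$) for which such a restored tuple $(x\times,by)$ exists in $P(x,y)$, so that Step~\ref{ffixup:RNstart}--\ref{ffixup:RNend} picks up every $\beta=1$ triple that genuinely needs reprocessing and none that does not. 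I would prove (a) by a purely structural argument on level graphs: if $x'$ were centered at level $\ge j$, then the extended LST would already have been formed when $\gamma$ was last an ST at level $j$ and would still survive (decremental monotonicity of $\levelgraph_j$), contradicting that it is missing. Point (b) follows from the invariants on $\RN$, $\LN$ already established in Lemma~\ref{lemma:fclean1}. With these two facts in hand, the contradiction in Direction (ii) closes and the lemma follows.
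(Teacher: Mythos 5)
Your proposal is correct and follows essentially the same route as the paper's proof: a contradiction on the earliest/minimal failed extraction, decomposition of the allegedly shorter tuple into its constituent $\ell$-tuple and $r$-tuple (your subpaths $x\rightsquigarrow b$ and $a\rightsquigarrow y$ are exactly the paper's $left(\gamma')$ and $right(\gamma')$), the observation that at least one constituent set must be newly inserted into $P^*$ during this fixup, and then the level-based argument that the restricted $L_i^*/LC_i^*$ scans catch the required extension. The ``main obstacle'' you flag is resolved in the paper by precisely the structural fact you sketch in point (a), carried out as a three-case analysis on the relative centering levels $i$ and $j$ of the two constituent tuple sets.
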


\begin{proof}
	Suppose that the lemma is violated. Thus, there will be an extraction from $H_f$ during \ffullyfixup such that the set of extracted triples $S'$, of weight $\hat{wt}$ is not shortest in $G$ after the update. Consider the earliest of these events when $S'$ is extracted from $H_f$. Since $S'$ is not a set of STs from $x$ to $y$, there is at least one shorter tuple from $x$ and $y$ in the updated graph. 
	Let $\gamma' = ((xa',b'y),wt,count)$ be this triple that represents at least one shortest path from $x$ to $y$, with $wt < \hat{wt}$. 
	Since $S'$ is extracted from $H_f$ before any other triple from $x$ to $y$, $\gamma'$ cannot be in $H_f$ at any time during \ffullyfixupend. Hence, it is also not present in $P(x,y)$ as an LST at the beginning of the algorithm, otherwise it (or another triple with the same weight) would be placed in $H_f$ by step \ref{ffixup:phase2} - Alg. \ref{algo:fffixup}. Moreover, if $\gamma'$ is a single edge (trivial triple), then it was already an LST in $G$ present in $P(x,y)$ before the update, and it is added to $H_f$ by step \ref{fixup:phase2-begin} - Alg. \ref{algo:ftrivial}; moreover since all the edges incident to $v$ are added to $H_f$ during steps~\ref{algo:finit-f-start} to \ref{algo:finit-f-end} of Alg.~\ref{algo:ftrivial}, then $\gamma'$ must represent SPs of at least two edges. We define $left(\gamma')$ as the set of LSTs of the form $((xa',c_ib'), wt-\weight(b',y), count_{c_i})$ that represent all the LSPs in the left tuple $((xa',b'), wt-\weight(b',y))$; similarly we define $right(\gamma')$ as the set of LSTs of the form $((a'd_j,b'y), wt-\weight(x,a'), count_{d_j})$ that represent all the LSPs in the right tuple $((a',b'y), wt-\weight(x,a'))$. 
	
	Observe that since $\gamma'$ is an ST, all the LSTs in $left(\gamma')$ and $right(\gamma')$ are also STs. A triple in $left(\gamma')$ and a triple in $right(\gamma')$ cannot be present in $P^*$ together at the beginning of \ffullyfixupend. 
	In fact, if at least one triple from both sets is present in $P^*$ at the beginning of \ffullyfixupend, then the last one inserted during the fixup phase triggered during the previous update, would have generated an LST of the form $((xa',b'y), wt)$ automatically inserted, and thus present, in $P$ at the beginning of the current fixup phase (a contradiction). Thus either there is no triple represented by $left(\gamma')$ in $P^*$, or there is no triple represented by $right(\gamma')$ in $P^*$.
	
	Assume w.l.o.g. that the set of triples in $right(\gamma')$ is placed into $P^*$ after $left(\gamma')$ by \ffullyfixupend. 
	Since edge weights are positive, $wt-\weight(x,a') < wt < \hat{wt}$, and because all the extractions before $\gamma$ were correct, then the triples in $right(\gamma')$ were correctly extracted from $H_f$ and placed in $P^*$ before the wrong extraction of $S'$.
	Let $i$ be the level in which $left(\gamma')$ is centered, and let $j$ be the level in which $right(\gamma')$ is centered.
	By the assumptions, all the triples in $left(\gamma')$ are in $P^*$ and we need to distinguish 3 cases:
	\begin{enumerate}
		\item if $j=i$, then \ffullyfixup generates the tuple $((xa',b'y),wt)$ in the same level and place it in $P$ and $H_f$.
		\item if $i>j$, the algorithms \ffullyfixup extends the set $right(\gamma')$ to all nodes in $L_i^*(a',b')$ for every $i \geq j$ 
		(see Steps \ref{fprocess-f:exts} to \ref{fprocess-f:end} - Alg. \ref{algo:fprocess-f}). Thus, since  $left(\gamma')$ is centered in some level $i > j$, the node $x$ is a valid extension in $L_i^*(a',b')$, making the generated $\gamma'$ an LST in $\levelgraph_j$ that will be placed in $P(x,y)$ and also into $H_f$ (during Step \ref{fprocess-centers} - Alg. \ref{algo:fprocess-f}).
		\item if $j>i$, then $x$ was inserted in a level younger than $i$. In fact, all the paths from $a'$ to $b'$ must be the same in $right(\gamma')$ and $left(\gamma')$ otherwise the center of $right(\gamma')$ should be $i$. Hence, the only case when $j>i$ is when the last update on $left(\gamma')$ is on the node $x$ in a level $i$ younger than $j$. Thus $x \in LC_i^*(a',b')$. But \ffullyfixup extends $right(\gamma')$ to all nodes in $LC_i^*(a',b')$ for every $i < j$, placing the generated LST $\gamma'$ in $P(x,y)$ and also into $H_f$  
		(see Steps \ref{fprocess-f:extihs} to \ref{fprocess-f:extihe} - Alg. \ref{algo:fprocess-f}).
	\end{enumerate} 
	
	Thus the algorithm would generate the tuple $((xa',b'y),wt)$ (as a left extension) and place it in $P$ and $H_f$ (because all the triples in $left(\gamma')$ are already in $P^*$). Therefore, in all cases, a tuple $((xa',b'y),wt)$ should have been extracted from $H_f$ before any triple in $S'$. A contradiction.
	\hfill$\qed$
\end{proof}

\begin{lemma} \label{proof:ffflem}
	After the execution of \ffullyfixupend, for any $(x, y) \in V$, the sets $P^{*}(x,y)$ ($P(x,y)$) contains all the SPs (LSPs) from $x$ to $y$ in the updated graph.
	Also, the global structures $L, R$ and the local structures $P_i^*, L_i^{*},R_i^{*},LC_i^{*},RC_i^{*}$ and $dict_i$ for each level $i$ are correctly maintained. The structures $\RN$ and $\LN$ are updated according to the newly identified tuples. The $\DMs$ structure contains the updated distance for each pair of nodes in the current graph. Finally, the center of each new triple is updated.
\end{lemma}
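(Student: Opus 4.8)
\noindent
The plan is to prove Lemma~\ref{proof:ffflem} by a loop invariant on the main \textbf{while} loop of \ffullyfixup (Algorithm~\ref{algo:fffixup}, Steps~\ref{ffixup:phase3-begin}--\ref{ffixup:phase3-end}), processing extracted triples in nondecreasing weight order, and leaning on two facts already established: Lemma~\ref{fdfixcorr}, which guarantees that the first set of triples extracted for any pair $(x,y)$ carries the correct post-update SP distance, and Observation~\ref{centerlht}, which guarantees that every LHT generated by the extension routines is an LST centered in a unique level graph. First I would fix the invariant. At the start of an iteration whose extracted min-key is $[wt,x,y]$, I assert: \emph{(i)} for every pair $(p,q)$ and every weight $wt_1<wt$ (and for $wt_1=wt$ with $(p,q)$ lexicographically before $(x,y)$), $P(p,q)$ (resp.\ $P^*(p,q)$) holds exactly the LSTs (resp.\ STs) of that weight in the updated graph, each with correct count and correct center, and the local structures $P^*_i, L^*_i, R^*_i, LC^*_i, RC^*_i, dict_i$ for every active level $i$, the global $L,R$, the HE sets $\RN,\LN$, and the history $\DMs$ all agree with this; and \emph{(ii)} every LST of the updated graph of weight $\ge wt$ is either already present in $H_f$, or will be generated as a left- or right-extension of some not-yet-processed triple of strictly smaller weight. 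Initialization follows from \ffullypopulate (Algorithm~\ref{algo:ftrivial}): the updated edges incident on $v$ with their new weights, and a min-key triple of each $P(p,q)$, are placed into $H_f$; together with Lemma~\ref{fdfixcorr} this seeds \emph{(ii)}, and \emph{(i)} is vacuous since no key $<[wt_{\min},\cdot,\cdot]$ exists.

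For maintenance, I would fix the extracted set $S'$ of key $[wt,x,y]$ and case, as \ffullygetnew (Algorithm~\ref{algo:fget-n-paths}) does, on whether $P^*(x,y)$ increased its min-weight after cleanup. In the unchanged-distance branch (Steps~\ref{ffixup:phase3-nomain-check}--\ref{ffixup:phase3-addfromX-end}) I verify that each new-through-$v$ ST is installed into $P^*(x,y)$, into $P^*_k$, $dict_k$, $L^*_k$, $R^*_k$, and into $LC^*_k$/$RC^*_k$ exactly when an endpoint is a level-$k$ center. In the increased-min-weight branch (Steps~\ref{ffixup:phase3-addfromP-begin}--\ref{ffixup:phase3-addPe}) I show that each restored triple $\gamma'$, whether it is a brand-new ST, or already present in $P$ and $P^*$ with differing counts, is reinstalled into $P^*(x,y)$, into the level-$j$ local structures for $j=\arg\!\min_j(\CA_{\gamma'}[j]\neq 0)$, and into $\RN$ and $\DMs$; and that the $\RN$-loop (Steps~\ref{ffixup:RNstart}--\ref{ffixup:RNend}) adds to $S$ exactly those $\beta=1$ historical triples that still admit a \emph{new} centered extension. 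Then I analyze \ffullyfixupend-$\ell$-extend (Algorithm~\ref{algo:fprocess-f}) in its two cases: when some $\gamma\in S_b$ has $\beta=0$, extending via $L^*_i$ for $h\le i<j$ and via $LC^*_i$ for $i<h$ (with $j$ read from $\DMs$ and $h=center(S_b)$) produces every LST of the form $(x'x,by)$, and \ffullyfcenters assigns each the unique level in which it is centered; when all of $S_b$ has $\beta=1$, the HE-set precondition forces the only missing extensions into levels younger than the last level where $\gamma$ was an ST, so extending only via $LC^*_i$ for $i<j$ (with $j$ the min-weight level of $\DMs(x,y)$) suffices. Symmetric arguments handle right extensions; the \MT dictionary ensures each generated triple is formed exactly once; inserting each new LST into $P$, $H_f$ and updating global $L,R$ re-establishes \emph{(ii)}, and Steps~\ref{ffixup:setbit1}--\ref{ffixup:addDM} set $\beta=1$ on the confirmed STs and refresh $\DMs(x,y)$, re-establishing \emph{(i)} through key $[wt,x,y]$.

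The hard part will be the $\beta=1$ case together with the partial extension problem of Figure~\ref{fig:ll}: I must argue, using $\DMs$ and $\RN/\LN$, that when a historical ST $\gamma=((xa,by),wt,\cdot)$ is restored we create the extension $(x'x,by)$ \emph{precisely} when $x'$ is centered in a level more recent than the last level in which $\gamma$ was shortest (so the new tuple is a genuine LST centered in that younger level), and \emph{never} when the $\ell$-tuple of the extension fails to be an ST in the current graph (the $\hat\gamma$-versus-$\gamma$ situation); restricting the extension loop to $LC^*_i$ with $i<j$ and gating on $b\in\RN(x,y,wt)$ with $L^*_h(x,b)\neq\emptyset$ for some $h<i$ is exactly what achieves this, but proving both completeness (no needed LST is skipped) and soundness (no spurious tuple is created, and the $P^*_i$ counts and centers match $G$ level by level) requires care, and this is where I expect the bulk of the work to lie. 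Termination is then immediate: once $H_f=\emptyset$, invariant \emph{(ii)} is vacuous, so \emph{(i)} holds for all keys, which is precisely the statement of the lemma. \hfill$\qed$
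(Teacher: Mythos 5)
Your proposal takes essentially the same route as the paper: a loop invariant over the min-key extractions from $H_f$ asserting correctness of the global and local structures for processed keys and pending presence in $H_f$ for unprocessed ones, with maintenance argued by the same case analysis (the two branches of Algorithm~\ref{algo:fget-n-paths}, the $\beta=0$ versus all-$\beta=1$ cases of Algorithm~\ref{algo:fprocess-f}, and the HE sets with $\DMs$ for the partial extension problem), and the same termination and final appeal to Lemma~\ref{fdfixcorr}. One phrasing caveat: state invariant \emph{(i)} as ``$P$/$P^*$ contain all LSTs/STs of that weight with correct counts and centers'' rather than ``exactly,'' since $P$ and $P^*$ deliberately retain historical tuples that are no longer LSTs/STs of the current graph; otherwise the invariant as written could not be maintained.
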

\begin{proof}
	We prove the lemma statement by showing the following loop invariant. Let $G'$ be the graph after the update.
	
	{\noindent \bf Loop Invariant:} 
	At the start of each iteration of the while loop in Step~\ref{ffixup:phase3-begin} of \ffullyfixupend,
	assume that the first triple in $H_f$ to be extracted and processed has min-key = $[wt, x, y]$. Then the following properties hold about the \system and $H_f$.
	\begin{enumerate}
		\item \label{proof:fitem1} For any $a, b \in V$, if $G'$ contains $c_{ab}$ SPs of form $(xa, by)$ and weight ${wt}$,
		then $H_f$ contains a triple of form $(xa, by)$ and weight $wt$ to be extracted and processed. Further, a triple $\gamma = ((xa, by), { wt}, c_{ab})$ is present in $P(x,y)$. 
		\item Let $[\hat{wt},\hat{x},\hat{y}]$ be the last key extracted from $H_f$ and processed before $[wt,x,y]$. For any key $[wt_1, x_1, y_1] \leq [\hat{wt},\hat{x},\hat{y}]$, let
		$G'$ contain ${ c}  > 0 $ number of LHPs of weight  ${ wt_1}$ of the
		form $(x_1a_1, b_1y_1)$. Further, let ${ c_{new}}$ (resp. ${ c_{old }}$) denote the number of these LHPs
		that are {\em new} (resp. not {\em new}).
		Here ${ c_{new} + c_{old} = c}$. If $c_{new} > 0$ then,\\
		\textbf{Global Data Structures:}
		\begin{enumerate}
			\item \label{proof:fitem2}
			there is an LHT $\gamma$ in $P(x_1, y_1)$ of the form $(x_1a_1, b_1y_1)$ and weight $wt_1$ that represents $c$ LHPs, with an updated center defined by the last update on any of the paths represented by the LHT. 
			\item \label{proof:fitem21}
			If a triple of the form $(x_1a_1, b_1y_1)$ and weight $wt_1$ is present as an HT in $P^*$, then it represents the exact same count of $c$ HPs of its corresponding triple in $P$. This is exactly the number of HPs of the form $(x_1a_1, b_1y_1)$ and weight $wt_1$ in $G'$. Its control bit $\beta$ is set to 1.
			\item \label{proof:fitem3} $x_1 \in L(a_1, b_1y_1)$, $y_1 \in R(x_1a_1, b_1)$.
			Further, $(x_1a_1, b_1y_1) \in $ \MT
			iff ${ {c_{old}} > 0}$.
			\item \label{proof:fitem4} If $\beta(\gamma)=0$ or $\beta(\gamma)=1$ and there is an extension $x' \in L_j^*(x_1, y_1)$ that generates a centered LST in a level $j$, an LHT corresponding to $(x' x_1, b_1y_1)$
			with weight $wt'~=~wt_1~+~\weight (x',x_1) \geq wt$ and counts equal to the sum of new paths represented by its constituents, is in $H_f$ and $P$. 
			A similar assertion holds for an extension $y' \in R_j^*(x_1, y_1)$.
			\item \label{proof:fitem6} The structure $\RN(x_1,y_1,wt_1)$ contains a node $b$ iff at least one path of the form $(x_1 \times,by_1)$ and weight $wt_1$ is represented by a triple in $P(x_1, y_1)$. A similar assertion holds for a node $a$ in $\LN(x_1,y_1,wt_1)$.
			\item \label{proof:fitem7} The entry $\DMs(x_1, y_1)$ with weight $wt_1$ is updated to the current level.
		\end{enumerate}
		\textbf{Local Data Structures:} for each level $j$, let $c_j$ be the number of SPs of the form $(x_1a_1, b_1y_1)$ and weight $wt_1$ centered in $\levelgraph_j$ and let $c_j(n)$ be the new ones discovered bythe algorithm. Thus $c=\sum_{j}{c_j}$ and $c_{new}=\sum_{j}{c_j(n)}$. Then,
		\begin{enumerate}[resume]
			\item \label{proof:flitem2} the value of $\CA_\gamma[j]$, where $\gamma$ is the triple of the form $(x_1a_1, b_1y_1)$ and weight $wt_1$ in $P(x_1,y_1)$, is $c_j$. 
			\item \label{proof:flitem21}
			If a triple $\gamma$ of the form $(x_1a_1, b_1y_1)$ and weight $wt_1$ is present as an HT in $P^*$, then $P_j^*(x1,y1)$ represents $c_j$ paths. A link to $\gamma$ in $P$ is present in $dict_j$. Moreover, $x_1 \in L_j^*(a_1, y_1)$ (respectively $LC_j^*(a_1, y_1)$ if $x_1$ is centered in $\levelgraph_j$). A similar statement holds for $y_1 \in R_j^*(x_1, b_1)$ (respectively $RC_j^*(x_1, b_1)$ if $y_1$ is centered in $\levelgraph_j$).
		\end{enumerate}
		\item \label{proof:fitem5} For any key $[wt_2, x_2, y_2 ] \geq [wt, x, y]$, let
		$G'$ contain $c > 0$ number of LHPs of weight  ${ wt_2}$ of the
		form $(x_2a_2, b_2y_2)$. Further, let ${ c_{new}}$ (resp. ${ c_{old }}$) denote the number of such LHPs
		that are {\em new} (resp. not {\em new}).
		Here ${ c_{new} + c_{old} = c}$. Then the tuple $(x_2 a_2, b_2 y_2) \in$ \MTend, iff 
		$c_{old} > 0$ and $c_{new}$ paths have been added to $H_f$ by some earlier iteration of the while loop.
	\end{enumerate}
	
	Initialization and Maintenance for the invariant assertions above are similar to the proof of Lemma~\ref{lemma:fclean1}.
	
	{\noindent \bf Termination:} The condition to exit the loop is $H_f = \emptyset$. Because invariant assertion $\ref{proof:fitem1}$ maintains in $H_f$ the first triple to be extracted and processed, then $H_f = \emptyset$ implies that there are no triples, formed by a valid left or right extension, that contain \emph{new} SPs or LSPs, that need to be added or restored in the graph $G$. Moreover, because of invariant assertions $\ref{proof:fitem2}$ and $\ref{proof:fitem21}$, every triple containing the node $v$, extracted and processed before $H_f = \emptyset$, has been added or restored with its correct count in the \systemend. Finally, for invariant assertions $\ref{proof:fitem3}$ and $\ref{proof:flitem21}$, the sets $L,R$ and $L^{*}, LC^{*}, R^{*}, RC^{*}$ for each level, are correctly maintained. This completes the proof of the loop invariant.
	
	\vone
	By Lemma \ref{fdfixcorr}, all the SP distances in $G'$ are placed in $H_f$ and processed by the algorithm. Hence, after Algorithm~\ref{algo:fffixup} is executed, every SP in $G'$ is in its corresponding $P^*$ by the invariant of Lemma~\ref{proof:ffflem}.   
	Since every LST of the form $(xa,by)$ in $G'$ is formed by a left extension of a set of STs of the form $(a\times,by)$ (Step \ref{ffixup:lext} -  Algorithm~\ref{algo:fffixup}), or a right extension of a set of the form $(xa,\times b)$ (analogous steps for right extensions), and all the STs are correctly maintained and extendend (by the invariant of Lemma \ref{proof:ffflem}), then all the LSTs are correctly maintained at the end of \ffullyfixupend. This completes the proof of the Lemma.
	\hfill$\qed$
\end{proof}

\bibliographystyle{abbrv}
\bibliography{references,refs2}
\end{document}